\theoremstyle{plain}
\newtheorem{thm}{Theorem}[section]
\newtheorem{cor}[thm]{Corollary}
\newtheorem{lem}[thm]{Lemma}
\theoremstyle{definition}
\newtheorem{defn}{Definition}
\theoremstyle{remark}
\newtheorem{oss}{Observation}
\title{Fake dark matter from retarded distortions}
\author{Federico Re$^*$}
\numberwithin{equation}{section}
\begin{document}

\maketitle

$^*$DiSAT, Insubria University, Via Valleggio 11, Como, Italy; and INFN, via Celoria 16,
20133, Milano, Italy. fre@uninsubria.it

\begin{center}
	\textbf{Abstract}
\end{center}
We push ahead the idea developed in \cite{re}, that some fraction of the dark matter and the
dark energy can be explained as a relativistic effect. The inhomogeneity matter generates
gravitational distortions, which are general relativistically retarded. These combine in a
magnification effect since the past matter density, which generated the distortion we feel
now, is greater than the present one. The non negligible effect on the averaged expansion
of the universe contributes both to the estimations of the dark matter and to the dark
energy, so that the parameters of the Cosmological Standard Model need some corrections.

In this second work we apply the previously developed framework to relativistic models
of the universe. It results that one parameter remain free, so that more solutions are
possible, as function of inhomogeneity. One of these fully explains the dark energy, but
requires more dark matter than the Cosmological Standard Model ($91\%$ of the total matter).
Another solution fully explains the dark matter, but requires more dark energy than
the Cosmological Standard Model ($15\%$ more). A third noteworthy solution explains a
consistent part of the dark matter (it would be $63\%$ of the total matter) and also some of
the dark energy ($4\%$).

\section{Introduction}

The evidences of dark matter are of many kinds. We can roughly divide the \lq\lq dark matter
phenomena\rq\rq in two categories: global dark matter effects, which consist in unexpected
values of some cosmological parameters (the deceleration parameter \cite{perlmutter}, \cite{riess}, the deuterium
abundance, the power spectrum of CMB anisotropies, etc.), and local dark matter
effects, appearing in observations of astronomical objects (the galaxies rotation curves \cite{rotation curve},
the virial of galaxy clusters \cite{clusters}, gravitational lensing \cite{g lensing}, etc.). In both cases are exclusively
gravitational phenomena, anomalies of gravitation with respect to what we expect:
some of them derive directly from the study of a gravitationally interacting system, as a
galaxy, or from the space-time dynamics, as the deceleration parameter; for some others
the derivation is indirect, like for the deuterium abundance, which, anyway, depends on
\lq\lq dark matter contributions\rq\rq, i.e. on gravitation effects that cannot be associated to a
distribution of baryonic matter. Several followed hypotheses on dark matter assume that
it could also interact weakly, at least, but such \lq\lq weakly dark matter effects\rq\rq have not been
observed ut to now \cite{Craig:2013cxa}, \cite{HDM}.

The evidences of dark energy are restricted to more specifi phenomena. There are
only global dark energy effects, which consist in unexpected values of some cosmological
parameters, due to a negative deceleration parameter\cite{perlmutter}, \cite{riess}. Again, it is a purely
gravitational phenomenon, consisting in an unexpected distortion of the space-time metric.
Dark energy is often described as an exotic kind of energy having a negative pressure, or
as a \lq\lq vacuum energy\rq\rq; these assume some non gravitational interaction. However, none of
such new non gravitational interaction have been observed yet, and as for the dark matter,
all our knowledge about the dark energy comes just from gravitational phenomena.

In the last twenty years, several lines of research have been opened that seek to study
some essentially relativistic effects in cosmology. MOND theories are an attempt in this
direction, but they actually didn't give rise to a good matching with data \cite{Aguirre:2001xs}, \cite{mond1}, \cite{mond2}.
Rather, we will consider rather the attempts to obtain unusual effects from the usual General
Relativity.

A general relativistic explanation of the unexpected distortion of the space-time tensor,
even without the presence of real energy-matter, would provide the justification of at least
a part of dark matter and/or dark energy effects. A point is that the formalism usually
used to show these effects is not truly general relativistic. Especially for local effects, it is
adopted the newtonian approximation, while for global ones, one assumes a friedmannian
model for the expansion, with matter and curvature assumed to be homogeneous, which
could be an oversimplification. By now, we will refer to the newtonian approximation, for
galaxies, and the homogeneous friedmaniann model, for the universe expansion, as \lq\lq the
standard calculations\rq\rq.

There are three principal lines of research that try to explain (at least a fraction of)
dark matter and/or dark energy as a difference between the standard calculations and
a more precise general relativistic model. One of them started form a deep analysis of
the coordinate systems adopted in observational general relativity \cite{Alba:2006ea}, \cite{Lusanna:2006wx}, \cite{Lusanna:2009fd}, \cite{Lusanna:2010pb}, \cite{Lusanna:2011rm},
\cite{Lusanna:2012kx}. Recently it has had some confirmations by observations on the rotation curves of our
galaxy \cite{Balasin:2006cg}, \cite{Crosta:2018var}. Another line of research explores the backreaction effects, which means the
discrepancy between the spatial curvature due to an averaged quantity of matter, and the
averaged spatial curvature due to an inhomogeneous quantity of matter \cite{Buchert:2007ik}, \cite{Vigneron:2019dpj}; it doesn't
vanish in general since the Einstein Equations are not linear. This effects needs averaging
on large dominions, and arise the standard deviation of the spatial curvature.

The third line considers two main ingredients: the inhomogeneity of the matter
distribution, and the retard of the space-time distortions that it generates. In the universe,
the matter is inhomogeneous and anisotropic, so that the Birkhoff Theorem cannot be
applied in general; the distortions due to the inhomogeneity do not cancel out. The
Cosmological Principle states that, at large scale, the matter distribution is near to
a homogeneous one, so that we must remember that globally the matter inhomogeneities
are quite small. However, the second ingredient, the retarded distortions, provides a
magnification effect, for which the resulting space-time distortion is not negligible even
if its source is small. This magnification is due to the fact that the distortions we feel
nowadays were generated by matter sources in the past, when the matter density was
greater, because of universe expansion. The actual distortion is a superposition of
all retarded distortions from all past times, which predictably have a singularity at the
Big Bang time. However, there is also a decrease with the distance. It needs a precise
calculation to see if it prevails the magnification or the reduction, and \cite{sergio}, \cite{re} highlighted
that the magnification dominates.

These ideas were presented \emph{in nuce} in the article \cite{sergio}, which however had several lacks.
It doesn't work in a general relativistic context. The authors adopted a linearized gravity
model on a minkowskian background, where the simulated expansion of the universe is
forced by hand, and an effective FRWL metric is imposed with a \lq\lq compatibility condition\rq\rq.
In \cite{re} we firstly developed a truly general relativistic framework apt for this line of investigation.
Since the matter inhomogeneity is small according to the Cosmological Principle,
we considered it as a first order perturbation on a homogeneous FRWL background metric.
We derived the linearized Einstein Equations, which returns the metric distortion, at the
first order. Choosing the suitable gauge, these result to be wave equations, so that the
metric distortion is retarded, traveling at the speed of light. This framework have then
been tested on a specific universe, which was not a realistic one, but a model chosen such
that the linearized Einstein Equations have constant coefficients, with the only aim to
make them easier to be solved. We obtained an explicit formula for the retarded gravitational
potentials. Averaging the resulting metric, we got an evaluation of the global effects
of such perturbative correction, which manifested the magnification effect we hoped, and
was able to explain a non negligible fraction of the dark matter and the dark energy.

In the present article, we will push forward such treatment in a substantial way, by
applying the method to a general realistic background universe, with any possible combination
of matter and energy components. The main consequence is that the linearized
Einstein Equations will not have constant coefficients, in general. From those, we will
obtain an averaged metric, which can be compared with the metric assumed by the Cosmological
Standard Model. We will obtain a set of consistence conditions, which reduce
the acceptability of the possible universe. For acceptable universes, the gap between our
averaged metric and the standard calculation provides some quantity of \lq\lq fictitious\rq\rq matter
and dark energy. We will apply these formulas to the case of the real universe, which
depends in general on three parameters: the real quantity of radiation, of matter, and
of dark energy (where the last one can eventually be zero, if all the dark energy can be
explained as fictitious). Substituting the most recent cosmological data, two parameters
will be fixed. The last one will leave a range of possible solutions, some of which looks of
particular interest.

The paper is structured as follows. In \S2 we outline the perturbative method, defining
the background universe and the perturbed universe, and showing the the fictitious matter
and dark energy from the gap between them. In \S3 we recall the linearized Einstein
Equations from \cite{re}. In \S4 we average the metric perturbations, ignoring temporarily
their precise development, and we get a formula for the fictitious matter and dark energy.
In \S5 we reduce the PDEs for the evolution of the metric perturbations to ODEs, with an
averaging procedure. In \S6 we solve such ODEs with a single-component approximation,
finding also some \lq\lq Selfconsistence Conditions\rq\rq for the components that can fill the universe.
In \S7 we apply all the previous formulas to our universe, getting an explicit model,
which will result to leave a free parameter. The model is numerically solved in \S8, finding a set
of solutions depending on the free parameter.

\section{Framework and notations}

We will adopt the following terminology. With \lq\lq radiation\rq\rq we mean any component of
the universe with a pressure $p=w\rho$ with $w=\frac{1}{3}$. We call \lq\lq matter\rq\rq any component with
$w=0$; and \lq\lq dark energy\rq\rq any one with $w=-1$. A component with $w<0$ will be called
\lq\lq exotic\rq\rq.

Moreover, we call \lq\lq total matter\rq\rq the quantity of matter $\Omega_{M0}$, which is required in the
Cosmological Standard Model in order to explain the observed deceleration parameter;
analogous for the \lq\lq total dark energy\rq\rq $\Omega_{\Lambda0}$. We call \lq\lq dark matter\rq\rq $\Omega_{DM0}$ the part of total
matter that is not observed, unless indirectly via gravitational phenomena. The observed
part is essentially the \lq\lq baryonic matter\rq\rq $\Omega_{BM0}\cong\Omega_{M0}-\Omega_{DM0}$.

\subsection{Perturbative method}

Let us consider a universe filled with any choice of components, each one characterised by
its constant $w$. $\rho=\sum_w\rho_w$ is the true matter-energy density. Only the matter component
$\rho_M(\underline{x};\tau)$ can be inhomogeneous. Let us call its homogeneous part
\footnote{We can interpret it as the mean density out of the galaxies. It is very low, but it is not zero, since
even between the galaxies we don't have the perfect void. However, a universe can also have $\bar{\rho}\equiv0$; this
would mean that all its matter is inhomogeneous.}%
\begin{equation}
	\bar{\rho}(\tau):=\min_{\underline{x}}\rho(\underline{x};\tau).
\end{equation}

Let now consider a background universe, approximating our true universe at zero
order, so assuming it filled with a perfectly homogeneous $\bar{\rho}$. We assume this universe to
be spatially flat, in order to keep all calculations as simple as possible. For the same reason,
we assume irrotational matter. Then, for a more realistic description of the universe, we
perform a first order perturbative. The perturbation of the energy-matter, which in fact
consists only on matter, is%
\begin{equation}
	\tilde{\rho}(\underline{x};\tau):=\rho(\underline{x};\tau)-\bar{\rho}(\tau)=\rho_M(\underline{x};\tau)-\bar{\rho}_M(\tau).
\end{equation}

Notice that this inhomogeneity is always non negative. In a symmetric way, we could also define%
\begin{equation}
	\bar{\rho}(\tau):=\max_{\underline{x}}\rho(\underline{x};\tau);
\end{equation}%
in such a case, $\tilde{\rho}$ would be non positive. It is a matter of convenience to fix the choice
with always positive or negative $\tilde{\rho}$.



\subsection{Background evolution}

We adopt the most minus signature and natural units, so that $c=1$. The background
metric is%
\begin{equation}
	\bar{g}_{\mu\nu}(\tau)=a(\tau)^2(d\tau\otimes d\tau-\delta_{ij}dx_idx_j).
\end{equation}

Any quantity has associated an \lq\lq unperturbed\rq\rq (or \lq\lq averaged\rq\rq) version $\bar{Q}$, computed using
the background metric $\bar{g}_{\mu\nu}$, and a \lq\lq perturbed\rq\rq (or \lq\lq true\rq\rq) version $Q$ computed using
the real metric $g_{\mu\nu}$. Its \lq\lq perturbation\rq\rq is the difference $\tilde{Q}:=Q-\bar{Q}$, which we consider
negligible beyond the first order.

$\tau$ is the conformal time for the unperturbed metric. The dot will always denote derivation
with respect to the conformal time: $\dot{Q}:=\partial_{\tau}Q$. $a(\tau)$ is the unperturbed expansion
parameter, so that $\bar{t}=\int a(\tau)d\tau$ is the usual (unperturbed) time. $H(\tau):=\frac{\dot{a}}{a}$ is the Hubble
parameter for the unperturbed model. $\tau$, $a$ and $H$ are written without the overline, with
abuse of notation, for better readability. Their perturbed versions will be $\bold{a}$ and $\bold{H}$.

$\tau_0$ is the actual instant for the unperturbed universe, s.t. $a(\tau_0)=1$. $t_0$ is the present for
the true model, s.t. $\bold{a}(t_0)=1$. The \lq\lq0\rq\rq label means evaluation in the present time, both for
unperturbed and perturbed quantities. Notice that in general $t_0\neq t(\tau_0)$, again with a
little abuse of notation.

$a(\tau)$ solves the Friedmann Equations%
\begin{equation*}
	\begin{cases}
		3H^2=8\pi Ga^2\bar{\rho} \\

		\dot{\bar{\rho}}=-3H(\bar{\rho}+p)
	\end{cases},
\end{equation*}%
with $p(\tau)=\sum_wp_w(\tau)=\sum_ww\bar{\rho}_w(\tau)$. From the second Friedmann Equation we know%
\begin{equation*}
	\bar{\rho}_w(\tau)=\bar{\rho}_{w0}a(\tau)^{-3(1+w)}.
\end{equation*}

We can describe the background components as%
\begin{equation}
\label{bg comp}
	\bar{\Omega}_w(\tau):=a(\tau)^2\frac{\bar{\rho}_w(\tau)}{\bar{\rho}_0}=\bar{\Omega}_{w0}a(\tau)^{-1-3w}
\end{equation}%
s.t. $\bar{\rho}_0=\frac{3H_0^2}{8\pi G}$, so $\sum_w\bar{\Omega}_{w0}=1$. The first Friedmann Equation says now that $H(\tau)^2=H_0^2\sum\bar{\Omega}_w(\tau)$,
from which we get a ODE for the evolution of $a$%
\begin{equation}
\label{bg evol}
	\left(\frac{\dot{a}}{H_0}\right)^2=\sum_w\bar{\Omega}_{w0}a^{1-3w}.
\end{equation}

We define $a(\tau)$ as a maximal solution of this ODE, with maximal domain $(\tau_I; \tau_F)$, eventually
unbounded. The initial condition for $a$ is provided by the request that $\lim_{\tau\rightarrow\tau_I}a(\tau)=0$.
The radius of the visible universe results to be $R(\tau)=\tau-\tau_I$; notice that it is always infinite if $\tau_I=-\infty$.

\subsection{Comparison with the Cosmological Standard Model}

Averaging it $\tilde{\rho}$ on the whole space, we get $\langle\tilde{\rho}\rangle(\tau)$. The average on the whole $\underline{x}\in\mathbb{R}^3$ is defined
as the average on some increasing sequence of compact domains $\mathcal{D}\subset\mathbb{R}^3$ filling the whole
space in the limit%
\begin{equation*}
	\langle Q\rangle:=\lim_{\mathcal{D}\rightarrow\mathbb{R}^3}\langle Q\rangle_\mathcal{D}:=\lim_{h\rightarrow\infty}\frac{1}{h^3|\mathcal{D}_0|}\int_{\frac{\underline{x}}{h}\in\mathcal{D}_0}Q(\underline{x})d^3\underline{x}, \forall\mathcal{D}_0\subset\subset\mathbb{R}^3.
\end{equation*}

We can define the \lq\lq inhomogeneous matter\rq\rq as a part of the total matter component%
\begin{equation}
	\Omega_{IM}(\tau):=\frac{\langle\tilde{\rho}\rangle(\tau)}{\rho_0}=\frac{8\pi G}{3\bold{H}_0^2}\langle\tilde{\rho}\rangle(\tau).
\end{equation}

The Cosmological Standard Model measures the cosmic components via the observed
deceleration parameter%
\begin{equation}
\label{q0 sys}
	\begin{cases}
		\sum_w\Omega_{w0}=1 \\

		\frac{1}{2}\sum_w(1+3w)\Omega_{w0}=q_0:=-\frac{\partial_t^2\bold{a}_0}{\bold{H}_0^2}
	\end{cases},
\end{equation}%
but it assumes a homogeneous $\rho$. Since it is not our case, $\bold{a}$ will be obtained by just an
adaptation of the true space-time metric to a FRWL one%
\begin{equation}
	\langle g_{\mu\nu}\rangle:=dt\otimes dt-\bold{a}(t)\delta_{ij}dx_idx_j.
\end{equation}

This provides a distortion of the expansion law, so that in general $\bold{a}(t)\neq a(\bar{t})$, $q_0\neq-\frac{\ddot{a}}{H_0^2}$,
and $\Omega_{w0}\neq\bar{\Omega}_{w0}$. To fit the two conditions (\ref{q0 sys}), two more parameters are needed.
Interpreting the distortion as the unexpected presence of matter and dark energy, we will
see an effect of \lq\lq fictitious matter and dark energy\rq\rq. We evaluate them as $\Omega_{FM0}$, $\Omega_{F\Lambda0}$.

\begin{oss}
	Since they come from a global evaluation and we used a first order approximation, these
fictitious components will result to be proportional to the total perturbation
$\Omega_{IM0}$. Thus, these global effects do not depend on the spatial distribution of the matter
inhomogeneities, but only on their total amount.
\end{oss}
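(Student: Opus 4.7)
The plan is to exploit the linearity inherent in the first-order perturbative framework combined with the spatial translation invariance of the FRWL background. First I would appeal to the linearized Einstein Equations (to be recalled in \S3): at first order about a spatially homogeneous background, $\tilde{g}_{\mu\nu}$ satisfies a linear wave-type equation sourced by $\tilde{\rho}$. Writing the solution via the retarded Green's kernel $K$ of that operator,
\begin{equation*}
    \tilde{g}_{\mu\nu}(\underline{x},\tau) = \int K_{\mu\nu}(\underline{x}-\underline{y},\tau,\tau')\,\tilde{\rho}(\underline{y},\tau')\,d^3\underline{y}\,d\tau',
\end{equation*}
where dependence of $K$ on $\underline{x}-\underline{y}$ alone is guaranteed by the homogeneity of $\bar{g}_{\mu\nu}$.

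Next I would apply the spatial average $\langle\cdot\rangle$ to both sides. Commuting the average with the $d\tau'$ integration and using translation invariance of the kernel, the convolution collapses to
\begin{equation*}
    \langle\tilde{g}_{\mu\nu}\rangle(\tau) = \int \Bigl(\textstyle\int K_{\mu\nu}(\underline{z},\tau,\tau')\,d^3\underline{z}\Bigr)\,\langle\tilde{\rho}\rangle(\tau')\,d\tau',
\end{equation*}
so that $\langle\tilde{g}_{\mu\nu}\rangle$ is a linear functional of $\langle\tilde{\rho}\rangle$ only; every piece of information about the spatial profile of $\tilde{\rho}$ has been washed out. This already establishes the second assertion of the Observation.

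Third, I would read off $\bold{a}(t)$ and its derivatives at $t_0$ from $\langle g_{\mu\nu}\rangle=\bar{g}_{\mu\nu}+\langle\tilde{g}_{\mu\nu}\rangle$, insert them in the system (\ref{q0 sys}), and expand to first order in the perturbation amplitude. Since each entry is a linear combination of $\langle\tilde{\rho}\rangle$ evaluated at various epochs (all of which differ from $\Omega_{IM0}$ only by background-determined coefficients), the fictitious contributions $\Omega_{FM0}$ and $\Omega_{F\Lambda0}$ come out proportional to $\Omega_{IM0}$, which is the first assertion.

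The main obstacle will be the exchange of the spatial average with the convolution: one must check that the average $\langle\cdot\rangle$, defined as a limit over growing domains $\mathcal{D}\subset\mathbb{R}^3$, really commutes with the Green's-function integral, i.e.\ that boundary terms vanish in the limit $\mathcal{D}\to\mathbb{R}^3$ and that the linearized operator indeed has spatially constant coefficients in the gauge chosen in \S3. Both are natural consequences of the FRWL background, but they will need to be made explicit in the detailed averaging procedure of \S4--\S5, on which the present Observation logically depends.
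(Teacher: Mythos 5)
Your argument is essentially the paper's own route: the Observation is justified by the subsequent development, namely the Green-function representation of the linearized equations and its spatial average (Lemma \ref{lem2} and Lemma \ref{lem3}), which retains only $\langle\tilde{\rho}\rangle\propto\Omega_{IM0}$ times background-determined time factors, followed by the first-order expansion of the fit to the CSM metric (the Theorem of \S4.3) that yields $\Omega_{FM0},\Omega_{F\Lambda0}\propto\Omega_{IM0}$. The one point worth flagging is that your step asserting that $\langle\tilde{\rho}\rangle(\tau')$ differs from $\Omega_{IM0}$ only by background-determined coefficients is exactly the separation-of-variables assumption $\tilde{\rho}=\tilde{\rho}_0(\underline{x})T(\tau)$ that the paper imposes explicitly, and the coupling of $C$ to $A$ (plus the $q$-source of $B$, handled in the paper via the gauge condition) means the averaged metric is only approximately a functional of $\langle\tilde{\rho}\rangle$ alone --- approximations the paper itself makes, so your proposal matches its logic.
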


The matter and the dark energy components used in the CSM have a \lq\lq true\rq\rq and a fictitious
part%
\begin{equation}
	\Omega_{M0}:=\Omega_{TM0}+\Omega_{FM0}, \; \; \; \Omega_{\Lambda0}:=\Omega_{T\Lambda0}+\Omega_{F\Lambda0}.
\end{equation}

The other components are all \lq\lq true\rq\rq. The \lq\lq true\rq\rq parts must be proportional to the same
components of the background universe%
\begin{equation}
	\Omega_{Tw0}:=\frac{\bar{\Omega}_{w0}}{\sum_{w'}\Omega_{Tw'0}}=\frac{\bar{\Omega}_{w0}}{1-\Omega_{FM0}-\Omega_{F\Lambda0}},
\end{equation}%
with the exception of matter, for which we have to add again the inhomogeneous part%
\begin{equation}
	\Omega_{TM0}=\frac{\bar{\Omega}_{M0}}{1-\Omega_{FM0}-\Omega_{F\Lambda0}}+\Omega_{IM0}:=\Omega_{HM0}+\Omega_{IM0}:=\Omega_{BM0}+\Omega_{TDM0}.
\end{equation}

Remember $\Omega_{IM0}$ can be considered as positive or negative. In the second case, the homogeneous
approximation $\bar{\rho}$ is a rounding up, so that $\Omega_{HM0}>\Omega_{TM0}$.

Some of the true matter must be the baryonic matter we know to exists. If there is still
some part left, it is \lq\lq true dark matter\rq\rq $\Omega_{TDM0}$. It is some kind of matter that actually
exists, which gravitational action is not just a relativistic effect, but that is not a directly
observable matter, like primordial black holes, neutron stars, and so on and so forth.

\subsection{Classifying possible results}

We can apply this framework to a universe filled by any choice of components $\{w\}$.

\begin{defn}
	We will call \lq\lq selfconsistent\rq\rq a choice for which
	\begin{itemize}
		\item all calculations return a finite result;

		\item the linearized Einstein Equations give a unique solution;

		\item the perturbative method is justified by small enough perturbations.
	\end{itemize}
\end{defn}
We can write the last condition as
\begin{equation*}
	\begin{cases}
		|\tilde{g}_{\mu\nu}|\ll|g_{\mu\nu}| \\

		|\Omega_{IM0}|\ll\Omega_{TM0}
	\end{cases},
\end{equation*}
where the second condition means that the Cosmological Principle holds.

\begin{defn}
	We will call \lq\lq acceptable\rq\rq a choice for which
	\begin{equation*}
		\begin{cases}
			\forall w: 0\leq\Omega_{Tw0}\leq1 \\

			\Omega_{TDM0}\leq0
		\end{cases},
	\end{equation*}
	The second part states that all the baryonic matter we see is really existing, so it is included in the model.
\end{defn}
\begin{oss}
	Notice that the fictitious components can be negative, and such a case
means that the dark matter and/or the dark energy is not explained at all, but rather its
quantity is more than what is predicted by the CSM.
\end{oss}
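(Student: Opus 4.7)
The plan is to treat this observation as essentially an interpretive unpacking of the decompositions $\Omega_{M0} = \Omega_{TM0} + \Omega_{FM0}$ and $\Omega_{\Lambda0} = \Omega_{T\Lambda0} + \Omega_{F\Lambda0}$ introduced just above, together with the split $\Omega_{TM0} = \Omega_{BM0} + \Omega_{TDM0}$. No new analytic machinery is required: the argument is purely algebraic once the role of each symbol is kept straight, and the content of the observation is a statement about which quantities are free and which are pinned by observation.

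First I would make precise what is fixed and what is inferred. The totals $\Omega_{M0}$ and $\Omega_{\Lambda0}$ are pinned by the CSM fit to the observed deceleration parameter via system (\ref{q0 sys}), so they cannot be adjusted by our framework; the quantities we compute are rather $\Omega_{TM0}$, $\Omega_{T\Lambda0}$ (the physically real components) and $\Omega_{FM0}$, $\Omega_{F\Lambda0}$ (the purely relativistic contributions from the averaged perturbation theory). That the fictitious components are not a priori positive follows from the earlier remark that $\tilde{\rho}$ may be chosen of either sign, depending on whether one fixes $\bar{\rho}=\min_{\underline{x}}\rho$ or $\bar{\rho}=\max_{\underline{x}}\rho$; since $\Omega_{IM0}$, and through it the fictitious components, depend linearly on $\langle\tilde{\rho}\rangle$ by the averaged formulas developed in \S4, no a priori sign constraint survives.

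Next I would substitute a hypothetical $\Omega_{FM0}<0$ into $\Omega_{TM0}=\Omega_{M0}-\Omega_{FM0}$ to obtain $\Omega_{TM0}>\Omega_{M0}$, that is, more physically existing matter than the CSM estimate. Combining with $\Omega_{TM0}=\Omega_{BM0}+\Omega_{TDM0}$ and with $\Omega_{BM0}\cong\Omega_{M0}-\Omega_{DM0}$, which is fixed by independent baryonic observations, one reads off $\Omega_{TDM0}>\Omega_{DM0}$: the true dark matter needed exceeds the CSM dark matter, so the relativistic effect \emph{aggravates} the dark matter shortfall rather than accounting for a share of it. The parallel substitution $\Omega_{F\Lambda0}<0$ into $\Omega_{T\Lambda0}=\Omega_{\Lambda0}-\Omega_{F\Lambda0}$ gives the dark energy analogue in identical fashion.

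The one subtle point — and the only place a reader might trip — is conceptual rather than technical: it must be made explicit that within this framework ``explaining'' a dark component means \emph{reducing} the amount of it which has to be postulated as physically real, so that a \emph{negative} fictitious contribution, far from producing dark matter or dark energy \emph{ex nihilo}, \emph{enlarges} the gap with respect to observation. Once this convention is stated clearly, the observation follows in one line from the defining identities above and requires no computation.
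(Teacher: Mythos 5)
Your unpacking is correct and matches the paper's (implicit) reasoning: the observation is stated without proof precisely because it follows in one line from the decompositions $\Omega_{M0}=\Omega_{TM0}+\Omega_{FM0}$ and $\Omega_{\Lambda0}=\Omega_{T\Lambda0}+\Omega_{F\Lambda0}$ with the totals pinned by the CSM fit, so $\Omega_{FM0}<0$ forces $\Omega_{TM0}>\Omega_{M0}$ (hence $\Omega_{TDM0}>\Omega_{DM0}$) and likewise for the dark energy. Your reading of ``explained'' as ``reducing the physically real amount that must be postulated'' is exactly the convention the paper uses, so no gap remains.
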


\begin{defn}
	We will call \lq\lq good\rq\rq the choices for which both the dark matter and the
dark energy are explained, at least for some fraction, i.e.
	\begin{equation*}
		\begin{cases}
			\Omega_{TDM0}<\Omega_{DM0} \\

			\Omega_{T\Lambda0}<\Omega_{\Lambda0}
		\end{cases}.
	\end{equation*}

	Even better choices are whose which fully explain the dark matter and/or the dark energy, i.e. $\Omega_{TDM0}=0$, $\Omega_{T\Lambda0}=0$.
\end{defn}

\section{Linearized Einstein Equations}

Let us summarize one of the main results of \cite{re}.

\begin{thm}
	The linearized Einstein Equations admit at first order a solution%
	\begin{equation}
		g_{\mu\nu}(\underline{x};\tau)=a(\tau)^2\left(\begin{matrix}
			2A+1 & -\vec{\nabla}B \\

			-\vec{\nabla}B & (2C-1)\delta_{ij}
		\end{matrix}\right),
	\end{equation}%
	in the harmonic gauge%
	\begin{equation}
	\label{gauge}
		\begin{cases}
			\dot{A}+4HA+\nabla^2B+3\dot{C}=0 \\

			A+\dot{B}+2HB=C
		\end{cases},
	\end{equation}%
	where the \lq\lq metric perturbations\rq\rq $A, B, C$ follow the PDEs%
	\begin{equation}
	\label{lin ein eq}
		\begin{cases}
			\Box A -2H\dot{A}+2(\dot{H}-2H^2)A=4\pi Ga^2\tilde{\rho} \\
			\Box B -2H\dot{B}+2(\dot{H}-2H^2)B=16\pi Ga^2q \\
			\Box C -2H\dot{C}-2\dot{H}A=4\pi Ga^2\tilde{\rho}
		\end{cases}.
	\end{equation}
\end{thm}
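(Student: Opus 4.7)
The plan is to carry out a standard linearised-gravity computation around the flat conformal FRWL background $\bar{g}_{\mu\nu} = a(\tau)^2 \eta_{\mu\nu}$, and to choose the harmonic gauge so that the resulting equations decouple into wave equations with curvature-dependent mass terms. First I would posit the scalar-perturbation ansatz given in the theorem statement — note that it is already gauge-restricted (only three scalar fields $A,B,C$ appear, with no separate $E$ mode), so the two gauge conditions in (\ref{gauge}) should be consistent with this ansatz rather than independent constraints on top of it. I would therefore begin by writing the full scalar perturbation with four fields, then show that after imposing the linearised harmonic gauge $\partial_\mu(\sqrt{-g}\,g^{\mu\nu})=0$ (equivalently $g^{\mu\nu}\Gamma^\lambda_{\mu\nu}=0$), exactly two conditions reduce the four scalar fields to the three appearing in the ansatz, and those two conditions are precisely (\ref{gauge}).

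Next I would compute, to first order in $A,B,C$, the perturbed Christoffel symbols $\tilde{\Gamma}^\lambda_{\mu\nu}$, and from them the linearised Ricci tensor components $\tilde R_{00}$, $\tilde R_{0i}$, $\tilde R_{ij}$. The FRWL background contributes the familiar $H=\dot a/a$ and $\dot H$ terms to the coefficients, so the raw expressions involve both $\Box$, first time derivatives, and undifferentiated perturbations multiplied by $H$ or $\dot H$. Here I would use the background Friedmann equations $3H^2 = 8\pi G a^2 \bar\rho$ and $\dot H - H^2 = -4\pi G a^2(\bar\rho+p)$ to eliminate $H^2$ and keep the source structure transparent. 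Then I would substitute the gauge constraints (\ref{gauge}) to cancel the mixed $\partial_\tau\partial_i$ and $\partial_i\partial_j$ cross terms, which is the whole point of working in harmonic gauge.

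For the source, I would expand the perfect-fluid stress tensor $T_{\mu\nu}=(\rho+p)u_\mu u_\nu - p g_{\mu\nu}$ with $\rho=\bar\rho+\tilde\rho$ and an irrotational velocity perturbation $u^i=\partial^i v/a$, reading off the first-order contributions $\tilde T_{00}=a^2\tilde\rho$, $\tilde T_{0i}$ proportional to the momentum density (which is what is being denoted $q$ in the $B$-equation, so that $\partial_i q$ arises from the $0i$ component), and $\tilde T_{ij}$ determined by $\tilde\rho$ and the background pressure. Combining with the trace-reversed Einstein equation $\tilde R_{\mu\nu}=8\pi G(\tilde T_{\mu\nu}-\tfrac12 \tilde T \bar g_{\mu\nu})$ then yields one wave-type equation from each of the $00$, longitudinal $0i$, and trace-$ij$ pieces, which after the gauge simplifications should match exactly the three PDEs in (\ref{lin ein eq}).

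The main obstacle I anticipate is bookkeeping rather than conceptual: keeping the many $H$-weighted terms in the linearised Ricci tensor organised, and checking that the traceless-$ij$ part of the Einstein equations is either automatically satisfied (owing to the scalar ansatz being sufficient for the isotropic source) or yields no new constraint beyond the three equations above. A secondary subtlety is verifying \emph{consistency} of the harmonic gauge along the evolution: one should confirm that $\partial_\mu\tilde T^{\mu\nu}=0$ combined with the wave equations (\ref{lin ein eq}) preserves (\ref{gauge}), so that the gauge conditions can be imposed as initial data rather than as additional dynamical constraints. Once these checks go through, the statement of the theorem follows.
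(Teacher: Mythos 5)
You should first be aware that the paper itself does not prove this theorem: it is imported verbatim from \cite{re}, so your reconstruction can only be measured against the standard linearization it summarizes. Your overall strategy (perturb $\bar g_{\mu\nu}=a^2\eta_{\mu\nu}$, fix the gauge, compute the linearized Ricci tensor, use the Friedmann equations, and read the sources off the perfect-fluid $T_{\mu\nu}$) is the right one. The genuine gap is in your treatment of the gauge. You write that you will impose ``$\partial_\mu(\sqrt{-g}\,g^{\mu\nu})=0$ (equivalently $g^{\mu\nu}\Gamma^\lambda_{\mu\nu}=0$)'', but around this background the two readings are \emph{not} equivalent at first order, and neither can be imposed exactly, because the FRWL background itself is not harmonic: $\partial_\mu(\sqrt{-\bar g}\,\bar g^{\mu 0})=2a^2H\neq0$. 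A direct check with the paper's ansatz shows that the vanishing of the first-order part of $\partial_\mu(\sqrt{-g}\,g^{\mu 0})$ gives $\dot A+2HA+6HC+3\dot C+\nabla^2B=0$, which is \emph{not} the first condition of (\ref{gauge}); what reproduces (\ref{gauge}) exactly is the vanishing of the first-order perturbation of the contracted Christoffel, $\delta\bigl(g^{\mu\nu}\Gamma^{\lambda}_{\mu\nu}\bigr)=0$ (the $1/\sqrt{-g}$ factor acting on the nonzero background harmonicity supplies precisely the $4HA$ term, and the spatial component gives $A+\dot B+2HB=C$). This distinction is not cosmetic: the specific $H$-weighted terms in (\ref{gauge}) are what generate the friction term $-2H\dot{\phantom{A}}$ and the mass term $2(\dot H-2H^2)$ in (\ref{lin ein eq}), so with your stated condition the computation would land on a different system than the one in the theorem.

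A second, smaller gap is the counting of conditions. The general scalar sector has four fields but only two scalar gauge functions; the ansatz already suppresses the $\partial_i\partial_jE$ mode, and you then impose two further differential conditions, i.e.\ three restrictions in total, which cannot all be coordinate choices. Indeed the second condition of (\ref{gauge}) is exactly $\Psi=\Phi$ in the notation of \S4, which for an irrotational perfect fluid (no anisotropic stress) is enforced by the traceless part of the spatial Einstein equations rather than by a gauge transformation. Since the theorem is existential, the cleanest repair is to verify directly that the truncated ansatz is consistent with all ten linearized equations — the traceless $ij$ equation being satisfied precisely because the source has no anisotropic stress, which also shows that the second condition, once imposed initially, is preserved. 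Your proposed checks of the traceless part and of gauge propagation via $\nabla_\mu T^{\mu\nu}=0$ are the right place for this, but as written the plan treats them as routine verifications rather than as the step that makes the counting close.
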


The box operator denotes the flat d'alembertian%
\begin{equation*}
	\Box:=\delta_{ij}\partial_i\partial_j-\partial_{\tau}^2.
\end{equation*}

The source $q(\underline{x};\tau)$ has nothing to do with the deceleration parameter $q_0$, but it is an
expression for the (irrotational) velocity field%
\begin{equation*}
	\vec{\nabla}q:=\vec{q}:=(\bar{\rho}+p)\vec{v}.
\end{equation*}

A general solution of the linearized Einstein Equations has also a wave term%
\begin{equation*}
	\Box\tilde{g}_{\mu\nu}-2H\dot{\tilde{g}}_{\mu\nu}+2(\dot{H}-2H^2)\tilde{g}_{\mu\nu}=0;
\end{equation*}%
but we will not consider it, since we are seeking for selfconsistent choices, so we want that
the linearized Einsten Equations have a unique solution. From now on, we will call as \lq\lq the
linearized Einstein Equations\rq\rq the system (\ref{lin ein eq}).

Near $\tau_I$, the matter inhomogeneities cannot have yet generated the metric perturbations
$A, B, C$. For this reason, as initial conditions for (\ref{lin ein eq}) we ask that $A, B, C$ are zero at $\tau_I$.

\section{Newtonian gauge}

The previuous Theorem expresses the metric in the harmonic gauge, but what is the
suitable gauge for the comparison to the standard calculations? For the local effects
(about galaxies, clusters...) it is used the newtonian approximation, i.e. the newtonian
gauge. For the global effects, cosmologists assume a FRWL metric, which is diagonal.
Anyway, we have to compare our perturbed metric to a diagonal one, and the metric is
diagonalized in the newtonian gauge.

\subsection{Gauge transformation}

\begin{lem}
	Via the transformation $\tau'=\tau-B(\underline{x};\tau)$, the metric $g_{\mu\nu}$ is expressed in the
newtonian gauge as%
	\begin{equation}
	\label{newt g}
		g_{\mu'\nu'}=a(\tau')^2\left(\begin{matrix}
			2\Psi+1 & \vec{0} \\

			\vec{0} & (2\Phi-1)\delta_{ij}
		\end{matrix}\right),
	\end{equation}%
	where the gravitational potentials are%
	\begin{align}
		\Psi&=A+\dot{B}+HB \cr
		\Phi&=C-HB.
	\end{align}
\end{lem}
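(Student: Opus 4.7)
The plan is to treat $\tau' = \tau - B$ as an infinitesimal diffeomorphism with generator $\xi^\mu = (B, \vec{0})$ and invoke the standard first-order gauge-transformation law $\tilde g'_{\mu\nu} = \tilde g_{\mu\nu} + (\mathcal L_\xi \bar g)_{\mu\nu}$, evaluated on the background $\bar g_{00} = a^2$, $\bar g_{0i} = 0$, $\bar g_{ij} = -a^2\delta_{ij}$ of the preceding subsection. From the harmonic-gauge metric stated in the previous theorem I read off the perturbations $\tilde g_{00} = 2a^2 A$, $\tilde g_{0i} = -a^2\,\partial_i B$, $\tilde g_{ij} = 2a^2 C\,\delta_{ij}$.

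Using $\partial_\tau a^2 = 2a^2 H$ and $\xi^i = 0$, the three components of the Lie derivative are
\begin{align*}
(\mathcal L_\xi \bar g)_{00} &= B\,\partial_\tau \bar g_{00} + 2\bar g_{00}\,\partial_\tau B = 2a^2(HB + \dot B),\\
(\mathcal L_\xi \bar g)_{0i} &= \bar g_{00}\,\partial_i B = a^2\,\partial_i B,\\
(\mathcal L_\xi \bar g)_{ij} &= B\,\partial_\tau \bar g_{ij} = -2a^2 HB\,\delta_{ij}.
\end{align*}
Summing these with the harmonic-gauge perturbations, the off-diagonal entries cancel exactly, while $\tilde g'_{00} = 2a^2(A + \dot B + HB)$ and $\tilde g'_{ij} = 2a^2(C - HB)\,\delta_{ij}$; reassembling with the background conformal factor $a(\tau')^2$ reproduces (\ref{newt g}) with the stated $\Psi$ and $\Phi$.

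A purely elementary alternative is direct substitution: invert the transformation as $\tau = \tau' + B$, work out $d\tau = (1 + \dot B)\,d\tau' + \partial_i B\,dx^i$ and $a(\tau)^2 = a(\tau')^2(1 + 2HB)$ to first order, and insert these into the line element. The diagonalization then appears as the cancellation between the cross-term $+2a^2\,\partial_i B\,d\tau'\,dx^i$ produced by $g_{00}\,d\tau^2$ and the cross-term $-2a^2\,\partial_i B\,d\tau'\,dx^i$ coming from $2g_{0i}\,d\tau\,dx^i$; this is exactly what singles out $\xi^0 = B$ as the correct time shift rather than any other choice.

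The only real obstacle is strict first-order bookkeeping: one must consistently replace the argument of $B$ by $\tau'$ in the perturbative pieces (permissible because $\tau - \tau'$ is itself first order), use $a(\tau) \simeq a(\tau')(1 + HB)$ only once, and discard mixed terms of the form $AB$, $B\dot B$, $BC$ that would otherwise contaminate the answer. Once that discipline is maintained, the identities $\Psi = A + \dot B + HB$ and $\Phi = C - HB$ drop out without further computation.
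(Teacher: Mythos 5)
Your proof is correct. The paper states this Lemma without an explicit proof (it is the standard scalar gauge/diffeomorphism computation, inherited from \cite{re} and textbook cosmological perturbation theory), and your argument supplies exactly what is needed: the Lie-derivative components $(\mathcal{L}_\xi\bar g)_{00}=2a^2(HB+\dot B)$, $(\mathcal{L}_\xi\bar g)_{0i}=a^2\partial_i B$, $(\mathcal{L}_\xi\bar g)_{ij}=-2a^2HB\,\delta_{ij}$ are right for $\xi^\mu=(B,\vec 0)$, the sign convention is validated independently by your direct-substitution check (where the cancellation of the $\pm 2a^2\partial_i B\,d\tau'\,dx^i$ cross-terms singles out $\xi^0=B$), and the resulting $\Psi=A+\dot B+HB$, $\Phi=C-HB$ agree with the Lemma and with the paper's subsequent observation that the second gauge condition $A+\dot B+2HB=C$ forces $\Psi\equiv\Phi$.
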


From now on we will use the newtonian coordinates, without writing the primes.
Notice that the second gauge condition (\ref{gauge}) guarantees that%
\begin{equation}
	\Psi\equiv\Phi.
\end{equation}

The fictitious effects of matter and dark energy are not independent from the gauge, and
this makes important the choice of the newtonian gauge.

\begin{oss}
\label{oss crosta}
The dependence on the gauge can be quite surprising, but it is coherent
with the Lusanna's line of research, e.g. \cite{Lusanna:2012kx}. The non diagonal component of the space-time
metric contributes to the relativistic effects of dark matter and dark energy; this was
recently confirmed for the dark matter halo of the Milky Way in \cite{Crosta:2018var}, where the rotation
of the galaxy generates a certain rotational $\vec{B}$.
\end{oss}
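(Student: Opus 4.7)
The observation makes two substantive claims that deserve argument rather than a routine derivation: that the fictitious components $\Omega_{FM0}, \Omega_{F\Lambda0}$ defined in \S2.3 genuinely depend on the gauge used to extract them, and that the off-diagonal harmonic-gauge entry $-\vec{\nabla}B$ feeds into those fictitious effects. Since the statement is a remark linking the present framework to the Lusanna--Crosta line of research, my plan is not to produce a new formula but to assemble the ingredients already in place into a coherent justification, and then to point to where the cited works corroborate the conclusion.

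First I would use the preceding gauge-transformation lemma to make the propagation of $B$ explicit. The Newtonian-gauge potentials satisfy $\Psi = A + \dot{B} + HB$ and $\Phi = C - HB$, so any non-vanishing $B$ directly modifies the diagonal entries of the metric (\ref{newt g}) whose spatial average defines $\bold{a}(t)$. Since $\Omega_{FM0}$ and $\Omega_{F\Lambda0}$ will be extracted from $\bold{a}(t_0)$ and its time derivatives at $t_0$ via the system (\ref{q0 sys}), the off-diagonal harmonic component $B$ enters the fictitious components linearly at first order. This substantiates the second sentence of the observation.

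Second, I would exhibit the gauge dependence itself by comparing $\langle g_{\mu\nu}\rangle$ in the two gauges. Under the coordinate change $\tau'=\tau-B(\underline{x};\tau)$, the components of $g_{\mu\nu}$ mix via a Jacobian that is itself first order in the perturbations, so averaging and gauge transformation fail to commute at the order we are working. Consequently the "fictitious" gap with the CSM is intrinsically tied to the Newtonian gauge, which is singled out in our framework precisely because it matches the diagonal FRWL template used by cosmologists when reading off the deceleration parameter $q_0$; any other gauge would produce a different apparent dark-matter and dark-energy budget.

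The main conceptual obstacle, which is really the content of the remark, is to reconcile this explicit gauge dependence with the general covariance of Einstein's equations. The resolution, to be spelled out, is that dark matter and dark energy here are not tensorial objects but observer-dependent deficits, obtained by forcing a FRWL template onto an inhomogeneous geometry; the gauge dependence is therefore a feature rather than a pathology. With this in hand one closes by pointing to \cite{Lusanna:2012kx} for an analogous global argument on the role of non-inertial coordinate systems, and to the Milky-Way analysis of \cite{Crosta:2018var} as independent empirical corroboration that a rotational, i.e. non-diagonal, component $\vec{B}$ of the metric produces apparent dark-matter effects.
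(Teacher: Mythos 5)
This statement is an \emph{Observation}, i.e.\ a discursive remark, and the paper supplies no proof of it at all: the gauge dependence of the fictitious components is simply asserted in the sentence preceding the remark, and the connection to Lusanna and to Crosta et al.\ is an appeal to external literature. There is therefore no paper proof to compare yours against. That said, your assembly of the ingredients is consistent with, and in fact confirmed by, the machinery the paper develops later: the gauge-transformation lemma gives $\Psi=A+\dot{B}+HB$ and $\Phi=C-HB$, and the final expressions (\ref{sum ract}) for \emph{sum} and \emph{ract} explicitly contain $\langle B\rangle_0$ combined with $H_0$, $H_0'$ and $H_0''$, so the off-diagonal harmonic-gauge component demonstrably feeds into $\Omega_{FM0}$ and $\Omega_{F\Lambda0}$. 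Your second point — that averaging and the gauge transformation fail to commute at first order, so the fictitious gap is tied to the Newtonian gauge — is plausible and is the natural reading of the paper's own statement, but be aware that neither you nor the paper actually demonstrates it; a genuinely rigorous version would require computing $\Omega_{FM0}$, $\Omega_{F\Lambda0}$ in a second gauge and exhibiting the difference. As a justification of a remark, your proposal is adequate and arguably more explicit than the paper itself; just do not present the non-commutation claim or the empirical corroboration from \cite{Crosta:2018var} as established results of this framework.
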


\subsection{Averaging the metric}

What we will compare with the CSM is just the average of the metric, since the metric
itself is not homogeneous and never allow for an exact equivalence. Such an average
depends only on time%
\begin{equation}
\label{aver g}
	\langle g_{\mu\nu}\rangle=a^2\left(\begin{matrix}
			2\langle\Psi\rangle+1 & \vec{0} \\

			\vec{0} & (2\langle\Phi\rangle-1)\delta_{ij}
		\end{matrix}\right),
\end{equation}%
where we know from the last Lemma%
\begin{align}
	\langle\Psi\rangle&=\langle A\rangle+\langle\dot{B}\rangle+H\langle B\rangle \cr
	\langle\Phi\rangle&=\langle C\rangle-H\langle B\rangle.
\end{align}

Now we recall another result from \cite{re}.

\begin{lem}
\label{lem2}
	Let us consider the Green functions for (\ref{lin ein eq})%
	\begin{align}
		\left(\Box-2H\partial_{\tau}+2(\dot{H}-2H^2)\right)G_{\tau'}(\underline{x};\tau)&=\delta^{(3)}(\underline{x})\delta(\tau-\tau') \cr
		\left(\Box-2H\partial_{\tau}\right)G^C_{\tau'}(\underline{x};\tau)&=\delta^{(3)}(\underline{x})\delta(\tau-\tau'),
	\end{align}%
	and let us assume the separation of variables for the matter inhomogeneity%
	\begin{equation}
		\tilde{\rho}(\underline{x};\tau):=\tilde{\rho}_0(\underline{x})T(\tau).
	\end{equation}

	Then we can express the average of metric distrortions as follows%
	\begin{equation}
		\langle A\rangle(\tau)=4\pi G\langle\tilde{\rho}_0\rangle u_A(\tau)=\frac{3}{2}\Omega_{IM0}H_0^2u_A(\tau),
	\end{equation}%
	s.t.%
	\begin{equation}
		u_A(\tau):=\int_{|\underline{r}|<R(\tau)}\int_{\tau_I}^{\tau}G_{\tau'}(\underline{r};\tau)a(\tau')^2T(\tau')d\tau'd^3\underline{r}.
	\end{equation}

	The separation of variables does not hold exactly for $A$, but let we can approximate%
	\begin{equation}
		A(\underline{x};\tau)\propto u_A(\tau).
	\end{equation}

	Then, in the same way%
	\begin{equation}
		\langle C\rangle(\tau)=\frac{3}{2}\Omega_{IM0}H_0^2(2u_{AC}(\tau)+u_C(\tau)),
	\end{equation}%
	s.t.%
	\begin{equation}
		u_{AC}(\tau):\cong\int_{|\underline{r}|<R(\tau)}\int_{\tau_I}^{\tau}G^C_{\tau'}(\underline{r};\tau)\dot{H}(\tau')u_A(\tau')d\tau'd^3\underline{r}
	\end{equation}%
	and%
	\begin{equation}
		u_C(\tau):=\int_{|\underline{r}|<R(\tau)}\int_{\tau_I}^{\tau}G^C_{\tau'}(\underline{r};\tau)a(\tau')^2T(\tau')d\tau'd^3\underline{r}.
	\end{equation}
\end{lem}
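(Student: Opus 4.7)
The plan is to represent each metric perturbation as a retarded convolution with the appropriate Green function defined in the statement, substitute the separation ansatz $\tilde\rho=\tilde\rho_0(\underline x)T(\tau)$, and then commute the spatial averaging with the space-time integrations. The end-of-\S3 initial condition $A,B,C\to 0$ at $\tau_I$ removes any freedom to add homogeneous wave modes, so the retarded Green kernels give the unique solution of (\ref{lin ein eq}).

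For $\langle A\rangle$, the $A$ equation in (\ref{lin ein eq}) is driven by $4\pi G a^2\tilde\rho$, so by the defining property of $G_{\tau'}$,
\begin{equation*}
A(\underline x;\tau)=4\pi G\int_{\tau_I}^{\tau}\!\int G_{\tau'}(\underline x-\underline x';\tau)\,a(\tau')^2\,\tilde\rho(\underline x';\tau')\,d^3\underline x'\,d\tau'.
\end{equation*}
Substituting the ansatz and changing variables to $\underline r=\underline x-\underline x'$, I would apply the spatial average in $\underline x$. Since $G_{\tau'}(\underline r;\tau)$ does not depend on $\underline x$, Fubini lets the average pass the kernel and act only on $\tilde\rho_0(\underline x-\underline r)$; translation invariance of $\langle\cdot\rangle$ then gives $\langle\tilde\rho_0(\cdot-\underline r)\rangle=\langle\tilde\rho_0\rangle$ independently of $\underline r$, and this constant factors out. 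What remains is exactly $u_A(\tau)$. The domain restriction $|\underline r|<R(\tau)$ is automatic because $G_{\tau'}$ is retarded and supported inside $|\underline r|<\tau-\tau'\leq\tau-\tau_I=R(\tau)$. Choosing the natural normalization $T(\tau_0)=1$ for the ansatz and inserting the definition of $\Omega_{IM0}$ (the perturbed and background Hubble constants coinciding at this order) converts $4\pi G\langle\tilde\rho_0\rangle$ into $\tfrac{3}{2}\Omega_{IM0}H_0^2$, yielding the first identity.

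For $\langle C\rangle$, the $C$ equation has the composite source $4\pi Ga^2\tilde\rho+2\dot H A$, so its Green representation splits into two convolutions against $G^C_{\tau'}$. The first, with source $4\pi Ga^2\tilde\rho$, is handled exactly as for $A$ and contributes the $u_C$ piece. The second, with source $2\dot H(\tau')A(\underline x';\tau')$, is where the stated working approximation $A(\underline x;\tau)\propto u_A(\tau)$ enters: under this factorization the spatial factor of $A$ is proportional to $\tilde\rho_0(\underline x)$ with the constant fixed by the $\langle A\rangle$ identity just proved, so the same Fubini-plus-translation-invariance argument brings out $\langle\tilde\rho_0\rangle$ while the time factor yields $\dot H(\tau')u_A(\tau')$; the resulting double integral is $u_{AC}(\tau)$. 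Summing and re-expressing the overall constant in terms of $\Omega_{IM0}$ gives $\langle C\rangle=\tfrac{3}{2}\Omega_{IM0}H_0^2(2u_{AC}+u_C)$.

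The only non-routine step, and hence the one I expect to demand the most care, is the approximation $A(\underline x;\tau)\propto u_A(\tau)$ used in the $\dot H A$ term of the $C$ equation. The homogeneous operator acting on $A$ has coefficients depending only on $\tau$, so individual spatial Fourier modes of $\tilde\rho_0$ evolve with a common time profile; however, convolution with the retarded kernel $G_{\tau'}(\underline r;\tau)$ mixes these modes through $\underline r$, so the exact $A(\underline x;\tau)$ does not factorize. The justification I would give is that spatial averaging washes out the mode mixing, so the final expression for $\langle C\rangle$ is insensitive to the approximation; quantifying this error and verifying that it is consistent with the small-perturbation hypothesis of selfconsistency is the genuinely delicate step.
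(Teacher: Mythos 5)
Your argument is correct and is the natural reconstruction of this result: the paper itself only recalls the lemma from \cite{re} without reproving it, but your route (retarded Green-function representation fixed by the vanishing initial data at $\tau_I$, substitution of $\tilde{\rho}=\tilde{\rho}_0 T$, Fubini plus translation invariance of $\langle\cdot\rangle$ to pull out $\langle\tilde{\rho}_0\rangle$, causal support giving the restriction $|\underline{r}|<R(\tau)$, and the identification $4\pi G\langle\tilde{\rho}_0\rangle=\tfrac{3}{2}\Omega_{IM0}H_0^2$ at first order) is exactly the machinery the paper relies on later, e.g.\ the causality and support arguments of \S5, and you recover the factor $2$ in $2u_{AC}$ correctly from the source term $2\dot{H}A$ in the $C$ equation. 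One refinement worth noting: for the averaged quantity the factorization $A(\underline{x};\tau)\propto u_A(\tau)$ is not actually needed, because after the change of variables the average acts on $A(\cdot-\underline{r};\tau')$ and translation invariance gives $\langle A(\cdot-\underline{r};\tau')\rangle=\langle A\rangle(\tau')=4\pi G\langle\tilde{\rho}_0\rangle u_A(\tau')$ exactly; so the only genuine approximations are the usual interchanges of the large-volume average with the space-time integrals, and your closing worry about quantifying the mode-mixing error matters for the local field $C(\underline{x};\tau)$ (and for the later decoupling in \S6) rather than for $\langle C\rangle$ itself. This makes your proof, if anything, slightly stronger than the statement as phrased, and it is fully consistent with the paper's framework.
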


Here we use the $u$ functions to describe the time evolution of the perturbations.

The separation of variables for $\tilde{\rho}$ holds when there is a single component dominating. We can express it with the density contrast%
\begin{equation*}
	\tilde{\rho}_0T(\tau)=\tilde{\rho}:=\delta_M\bar{\rho}_M\propto\delta_Ma(\tau)^{-3}.
\end{equation*}

E.g. when the matter dominates, it is%
\begin{equation*}
	\delta_M\propto a \Rightarrow T(\tau)=a(\tau)^{-2},
\end{equation*}%
as we know from \cite{oliver}, \cite{peebles}, \cite{peebles2}.

When to dominate is dark energy, the matter structures are ripped apart with the same expansion rate of the universe%
\begin{equation*}
	\delta_M=cost. \Rightarrow T(\tau)=a(\tau)^{-3}.
\end{equation*}

When to dominate is radiation, the density contrast is well described by%
\begin{equation*}
	\delta_M\propto\ln\left(\frac{4}{y}\right), \; \; \; s.t. \; \; \; y:=\frac{a(\tau)}{a_{RM}},
\end{equation*}%
as \cite{oliver}, \cite{peebles}, \cite{peebles2} say again, and $a_{RM}$ is the value of $a$ for which the matter starts to dominate on the radiation; thus the $T$ function is%
\begin{equation*}
	T(\tau)=a(\tau)^{-3}\ln\left(\frac{4a_{RM}}{a(\tau)}\right)=a(\tau)^{-3}[\ln(4a_{RM})-\ln a(\tau)].
\end{equation*}

Since we are in the newtonian gauge, we need also the average of $B$. We can obtain it
averaging the second gauge condition (\ref{gauge}).

\begin{lem}
\label{lem3}
	\begin{equation}
		\langle B\rangle(\tau)=\frac{3}{2}\Omega_{IM0}H_0^2u_B(\tau),
	\end{equation}%
	s.t.%
	\begin{equation}
	\label{B}
		u_B(\tau):=a(\tau)^{-2}\int_{\tau_I}^{\tau}a(\tau')^2(2u_{AC}(\tau')+u_C(\tau')-u_A(\tau'))d\tau'.
	\end{equation}
\end{lem}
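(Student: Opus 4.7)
The plan is to exploit the second gauge condition from (\ref{gauge}), namely $A+\dot{B}+2HB=C$, which is already pointed out by the author as the route to $\langle B\rangle$. First I would take the spatial average of both sides of this identity. Since $H$ depends only on $\tau$ and the averaging operator acts on the spatial variable $\underline{x}\in\mathbb{R}^3$, the result is
\begin{equation*}
\langle A\rangle(\tau)+\langle\dot{B}\rangle(\tau)+2H(\tau)\langle B\rangle(\tau)=\langle C\rangle(\tau).
\end{equation*}
The small point that needs a remark is the exchange $\langle\dot{B}\rangle=\partial_{\tau}\langle B\rangle$; this follows from the fact that the spatial average is defined as a limit of integrals over compact domains $\mathcal{D}\subset\mathbb{R}^3$ which do not depend on $\tau$, so differentiation under the integral is justified as long as $\dot{B}$ is well behaved on each $\mathcal{D}$.

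Second, I would rewrite the resulting identity as a first order linear ODE in $\tau$ for the single unknown $\langle B\rangle$,
\begin{equation*}
\partial_{\tau}\langle B\rangle+2H\langle B\rangle=\langle C\rangle-\langle A\rangle.
\end{equation*}
Since $\partial_{\tau}(a^2)=2a\dot{a}=2Ha^2$, the integrating factor is $a(\tau)^2$, and the ODE collapses to
\begin{equation*}
\partial_{\tau}\!\bigl(a^2\langle B\rangle\bigr)=a^2(\langle C\rangle-\langle A\rangle).
\end{equation*}
Integrating from $\tau_I$ to $\tau$, and using the initial condition $B(\underline{x};\tau_I)=0$ imposed on the linearized Einstein equations (which transfers immediately to $\langle B\rangle(\tau_I)=0$), I obtain
\begin{equation*}
\langle B\rangle(\tau)=a(\tau)^{-2}\int_{\tau_I}^{\tau}a(\tau')^2\bigl(\langle C\rangle(\tau')-\langle A\rangle(\tau')\bigr)d\tau'.
\end{equation*}

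Third, I would substitute the expressions for $\langle A\rangle$ and $\langle C\rangle$ supplied by Lemma \ref{lem2}, i.e.\ $\langle A\rangle=\frac{3}{2}\Omega_{IM0}H_0^2\,u_A$ and $\langle C\rangle=\frac{3}{2}\Omega_{IM0}H_0^2(2u_{AC}+u_C)$, factor out the common constant $\frac{3}{2}\Omega_{IM0}H_0^2$, and recognize the remaining integral as the definition (\ref{B}) of $u_B(\tau)$. This yields exactly $\langle B\rangle(\tau)=\frac{3}{2}\Omega_{IM0}H_0^2\,u_B(\tau)$.

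The only real obstacle is the justification of swapping $\langle\cdot\rangle$ with $\partial_{\tau}$, which is essentially a regularity assumption on $B$ already implicit in the perturbative framework; everything else is a one line integration of a linear ODE in an integrating factor. In particular no new physics is introduced beyond the gauge condition (\ref{gauge}) and the initial condition on $B$ fixed after (\ref{lin ein eq}).
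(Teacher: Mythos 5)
Your proposal is correct and follows essentially the same route as the paper: averaging the second gauge condition $A+\dot{B}+2HB=C$, solving the resulting first order linear ODE for $\langle B\rangle$ via the integrating factor $a^2$ (the paper phrases this identically by substituting $\langle B\rangle=a^{-2}b$), integrating from $\tau_I$ with the vanishing initial condition, and inserting the expressions for $\langle A\rangle$ and $\langle C\rangle$ from Lemma \ref{lem2}. Your added remarks on commuting $\langle\cdot\rangle$ with $\partial_\tau$ and on the initial condition only make explicit what the paper leaves implicit.
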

\begin{proof}
	We know that $\langle\dot{B}\rangle+2\frac{\dot{a}}{a}\langle B\rangle=\langle C\rangle-\langle A\rangle$. After expressing $\langle B\rangle(\tau):=a(\tau)^{-2}b(\tau)$, we obtain%
	\begin{align*}
	&a(\tau)^{-2}\dot{b}(\tau)=\langle C\rangle-\langle A\rangle=\frac{3}{2}\Omega_{IM0}H_0^2(2u_{AC}(\tau)+u_C(\tau)-u_A(\tau)) \Rightarrow \cr
	&b(\tau)=\frac{3}{2}\Omega_{IM0}H_0^2\int_{\tau_I}^{\tau}a(\tau')^2(2u_{AC}(\tau')+u_C(\tau')-u_A(\tau'))d\tau',
	\end{align*}%
	which proves the assertion.
\end{proof}

\subsection{Formulas for the fictitious components}

The fictitious components are determined by (\ref{q0 sys}). We can rewrite it using the auxiliary
variables \emph{ract} and \emph{sum}, defined as in \cite{re}.

\begin{equation}
\label{ract sum}
	\begin{cases}
		sum\cdot\Omega_{IM0}+o(\Omega_{IM0}):=\Omega_{FM0}+\Omega_{F\Lambda0}=1-\sum_w\Omega_{Tw0} \\

		ract\cdot\Omega_{IM0}+o(\Omega_{IM0}):=\Omega_{FM0}-2\Omega_{F\Lambda0}=2q_0-\sum_w(1+3w)\Omega_{Tw0}
	\end{cases}.
\end{equation}

For an evaluation of these, we need to know the perturbations of on $q_0$ and $\Omega_{Tw0}$. The
magnitude of the perturbations is determined by the comparison with the CSM metric%
\begin{align}
\label{tildi}
	&a^2[(2\langle\Psi\rangle+1)d\tau^2+(2\langle\Phi\rangle-1)\delta_{ij}dx_idx_j]=\langle g_{\mu\nu}\rangle:=dt^2-\bold{a}^2\delta_{ij}dx_idx_j \Rightarrow \cr
	&dt=\tilde{t}\cdot a d\tau:=\sqrt{1+2\langle\Psi\rangle}\cdot a d\tau, \; \; \; \bold{a}=\tilde{a}\cdot a:=\sqrt{1-2\langle\Phi\rangle}\cdot a.
\end{align}

The conditions at the present time are%
\begin{equation}
\label{alg sys}
	\begin{cases}
		\bold{a}(t_0):=1 \\

		\bold{H}_0=\partial_t\bold{a}|_{t_0} \\

		q_0=-\frac{\partial_t^2\bold{a}}{\bold{H}}|_{t_0}
	\end{cases}.
\end{equation}%
From the first of these, we obtain the value of $a_0:=a(t_0)\neq a(\tau_0)=1$, since%
\begin{align*}
	\frac{1}{a_0}&=\tilde{a}_0=\sqrt{1-2\langle C\rangle_0-2H_0\langle B\rangle_0} \Rightarrow \cr
	a_0&=1+\langle C\rangle_0+H_0\langle B\rangle_0+o(\Omega_{IM0}).
\end{align*}

Now, we can consider $a$ as the time variable. By now, we denote with a prime the
derivatives with respect to $a$. From (\ref{tildi})%
\begin{equation*}
	dt=\tilde{t}a d\tau=\tilde{t}\frac{a}{\partial_{\tau}a}\partial_{\tau}a=\frac{\tilde{t}}{H}da
\end{equation*}%
so that for any given quantity $Q$ depending on the time, we have%
\begin{equation*}
	Q':=\frac{dQ}{da}=\frac{H}{\tilde{t}}\frac{dQ}{d\tau}=\frac{H}{\tilde{t}}\dot{Q}.
\end{equation*}

Using the relation in (\ref{alg sys}), we can find firstly the perturbations of $\Omega_{Tw0}$. Indeed, form definition (\ref{bg comp})%
\begin{equation}
\label{Tw0}
	\Omega_{Hw}=\frac{8\pi G}{3\bold{H}_0^2}\bold{a}^2\bar{\rho}_{w0}=\left(\frac{H_0}{\bold{H}_0}\right)^2\tilde{a}^2\bar{\Omega}_w.
\end{equation}%
From the second equation in (\ref{alg sys}), we can compute%
\begin{equation}
	\left(\frac{H_0}{\bold{H}_0}\right)^2\tilde{a}_0^2=1+2[\langle A\rangle_0+H_0\langle B\rangle_0-H'_0\langle B\rangle_0+\langle C\rangle'_0]+o(\Omega_{IM0}).
\end{equation}%
For any $w$ it's $\Omega_{Tw0}=\Omega_{Hw0}$, with the exception of $\Omega_{TM0}=\Omega_{HM0}+\Omega_{IM0}$. Thus
\begin{align*}
	sum\cdot\Omega_{IM0}+o(\Omega_{IM0})&=1-\sum_w\Omega_{Tw0}=1-\Omega_{IM0}-\left(\frac{H_0}{\bold{H}_0}\right)^2\tilde{a}_0^2\sum_w\bar{\Omega}_w \Rightarrow \cr
	(sum+1)\Omega_{IM0}+o(\Omega_{IM0})&=1-[1+2(\langle A\rangle_0+H_0\langle B\rangle_0-H'_0\langle B\rangle_0+\langle C\rangle'_0)+o(\Omega_{IM0})]\cdot1=\cr
	&=-2[\langle A\rangle_0+H_0\langle B\rangle_0-H'_0\langle B\rangle_0+\langle C\rangle'_0]+o(\Omega_{IM0}).
\end{align*}

As for the perturbation of $q_0$, we must remember that its zeroth order part is not zero, in
general, but the background has a deceleration%
\begin{equation}
	\bar{q}_0=\frac{1}{2}\sum_w(1+3w)\bar{\Omega}_{w0}.
\end{equation}%
It is distorted by the perturbation, then at first order we expect to have%
\begin{equation}
	q_0:=\bar{q}_0+q_{\Omega}\Omega_{IM0}+o(\Omega_{IM0})
\end{equation}%
for some coefficient $q_{\Omega}$. We can compute each of these from the third of (\ref{alg sys}), obtaining%
\begin{equation}
	q_0=-\frac{H'_0}{H_0}+\left[\langle A\rangle'_0+2\langle C\rangle'_0+\langle C\rangle''_0-\frac{H'_0}{H_0}(\langle C\rangle_0+\langle C\rangle'_0-H'_0\langle B\rangle_0)-H''_0\langle B\rangle_0\right]+o(\Omega_{IM0}).
\end{equation}%
In particular, this means that%
\begin{equation}
\label{q00}
	\frac{1}{2}\sum_w(1+3w)\bar{\Omega}_{w0}=\bar{q}_0=-\frac{H'_0}{H_0}.
\end{equation}

Together with (\ref{Tw0}), this gives%
\begin{align*}
	\frac{1}{2}ract\cdot\Omega_{IM0}+o(\Omega_{IM0})&=q_0-\frac{1}{2}\sum_w(1+3w)\Omega_{Tw0} \\
	&=[\bar{q}_0+q_{\Omega}\Omega_{IM0}+o(\Omega_{IM0})]-\frac{1}{2}(1+3w)|_{w=0}\Omega_{IM0} \\
	&-[1-(sum+1)\Omega_{IM0}+o(\Omega_{IM0})]\frac{1}{2}\sum_w(1+3w)\bar{\Omega}_{w0} \Rightarrow \\
	\frac{1}{2}(ract+1)\Omega_{IM0}+o(\Omega_{IM0})&=(sum+1)\Omega_{IM0}\bar{q}_0+q_{\Omega}\Omega_{IM0}+o(\Omega_{IM0}).
\end{align*}

\begin{thm}
	At first order, the effects of the matter inhomogeneities can be interpreted
in terms of total fictitious components%
	\begin{equation}
		\begin{cases}
			\Omega_{FM0}=\frac{2sum+ract}{3}\Omega_{IM0}+o(\Omega_{IM0}) \\

			\Omega_{F\Lambda0}=\frac{sum-ract}{3}\Omega_{IM0}+o(\Omega_{IM0})
		\end{cases},
	\end{equation}%
	where the auxiliary quantities are%
	\begin{equation}
	\label{sum ract}
		\begin{cases}
			\frac{1}{2}(sum+1)\Omega_{IM0}=&-\langle A\rangle_0-H_0\langle B\rangle_0+H'_0\langle B\rangle_0-\langle C\rangle'_0 \\

			\frac{1}{2}(ract+1)\Omega_{IM0}=&\langle A\rangle'_0+2\langle C\rangle'_0+\langle C\rangle''_0-H''_0\langle B\rangle_0 \\
			&+\frac{H'_0}{H_0}(2\langle A\rangle_0+2H_0\langle B\rangle_0-H'_0\langle B\rangle_0+\langle C\rangle'_0-\langle C\rangle_0)
		\end{cases}.
	\end{equation}
\end{thm}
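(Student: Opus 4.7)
The plan is to observe that the theorem has two separable parts: (i) an algebraic inversion of the linear system (\ref{ract sum}) that expresses $\Omega_{FM0}$ and $\Omega_{F\Lambda0}$ in terms of $sum$ and $ract$, and (ii) the derivation of explicit first-order expressions for $sum+1$ and $ract+1$ in terms of the metric perturbations $\langle A\rangle_0,\langle B\rangle_0,\langle C\rangle_0$ and their $a$-derivatives at present. Part (i) is essentially trivial: (\ref{ract sum}) reads
\begin{equation*}
\begin{pmatrix} 1 & 1 \\ 1 & -2 \end{pmatrix}\begin{pmatrix} \Omega_{FM0} \\ \Omega_{F\Lambda0}\end{pmatrix}=\begin{pmatrix} sum \\ ract\end{pmatrix}\Omega_{IM0}+o(\Omega_{IM0}),
\end{equation*}
whose determinant is $-3$, so inverting gives the stated $\Omega_{FM0}=(2sum+ract)\Omega_{IM0}/3$ and $\Omega_{F\Lambda0}=(sum-ract)\Omega_{IM0}/3$.

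For the $sum$ formula, I would invoke (\ref{Tw0}), $\Omega_{Hw0}=(H_0/\mathbf{H}_0)^2\tilde{a}_0^2\bar\Omega_{w0}$, and the normalization $\mathbf{a}(t_0)=1$, which together with $\mathbf{a}=\tilde{a}\cdot a$ and (\ref{tildi}) yields the first-order expansion of $a_0$ already written in the excerpt. Summing over $w$, using $\sum_w\bar\Omega_{w0}=1$, adding the matter-inhomogeneity contribution $\Omega_{IM0}$ (present only in $\Omega_{TM0}$), and subtracting from $1$ produces exactly the claimed expression for $\frac{1}{2}(sum+1)\Omega_{IM0}$ after collecting the first-order metric corrections contained in $(H_0/\mathbf{H}_0)^2\tilde{a}_0^2$.

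For $ract$, the starting point is $\bar q_0=-H'_0/H_0$, equation (\ref{q00}), which follows from differentiating the first Friedmann equation at $\tau_0$. Then I would substitute $\mathbf{a}=\tilde{a}\cdot a$ and $dt=\tilde{t}\,a\,d\tau$ into the definition $q_0=-\partial_t^2\mathbf{a}/\mathbf{H}_0^2$ from (\ref{alg sys}), keeping only terms linear in $\Omega_{IM0}$. This identifies the coefficient $q_\Omega$ appearing in $q_0=\bar q_0+q_\Omega\Omega_{IM0}+o(\Omega_{IM0})$. Finally, combining the relation $\frac{1}{2}(ract+1)\Omega_{IM0}=(sum+1)\Omega_{IM0}\bar q_0+q_\Omega\Omega_{IM0}$ derived just before the theorem, substituting $\bar q_0=-H'_0/H_0$ and the already-derived expression for $(sum+1)\Omega_{IM0}$, and regrouping the terms proportional to $H'_0/H_0$ reproduces exactly the formula for $\frac{1}{2}(ract+1)\Omega_{IM0}$.

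The main obstacle is purely computational: carefully handling the three coexisting time variables $t,\tau,a$ and the three associated derivative conventions (in particular the prime $Q'=(H/\tilde{t})\dot Q$, which itself carries a first-order factor through $\tilde{t}$). One must systematically discard all $O(\Omega_{IM0}^2)$ contributions generated when $\tilde t$ or $\tilde a$ multiplies an already first-order perturbation, and track second derivatives $\partial_t^2\mathbf{a}$ which generate $H''_0\langle B\rangle_0$ and $\langle C\rangle''_0$ terms in $q_\Omega$. Once these chain-rule bookkeeping issues are handled, both expressions fall out by direct collection of terms.
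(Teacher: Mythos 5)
Your proposal is correct and follows essentially the same route as the paper: invert the defining linear system for $sum$ and $ract$ (determinant $-3$), obtain $sum$ from the expansion of $\left(\frac{H_0}{\bold{H}_0}\right)^2\tilde{a}_0^2$ via (\ref{Tw0}) and the conditions (\ref{alg sys}), and obtain $ract$ from $\bar{q}_0=-H'_0/H_0$ together with the first-order perturbation $q_\Omega$ of the deceleration parameter and the relation $\frac{1}{2}(ract+1)\Omega_{IM0}=(sum+1)\Omega_{IM0}\bar{q}_0+q_{\Omega}\Omega_{IM0}$. The bookkeeping issues you flag (the three time variables and the prime convention $Q'=\frac{H}{\tilde{t}}\dot{Q}$, discarding $O(\Omega_{IM0}^2)$ terms) are exactly where the paper's intermediate formulas do the work, so no gap remains.
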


\section{ODEs for the metric perturbations}

Now we should solve (\ref{lin ein eq}), replacing the resultant $A, B, C$ inside (\ref{sum ract}). The general
PDEs (\ref{lin ein eq}) are a formidable mathematical task. We obtained in \cite{re} an exact solution for
the case with constant coefficients, but it seems to be impossible an analytical solution
when the coefficients depend on $\tau$. However, here we are investigating only the global
effects, which depend only on the average of $A, B, C$, as (\ref{sum ract}) shows. Performing a spatial
averaging procedure on the PDEs (\ref{lin ein eq}), these are replaced by simpler ODEs, depending on the
time only, for $\langle A\rangle, \langle B\rangle, \langle C\rangle$. Such ODEs admit analytical solutions.

\subsection{Reduction of the dimensions}

From Lemmas \ref{lem2} and \ref{lem3} we have%
\begin{equation*}
	\langle A\rangle=\frac{3}{2}\Omega_{IM0}H_0^2u_A(\tau), \; \; \; \langle B\rangle=\frac{3}{2}\Omega_{IM0}H_0^2u_B(\tau), \; \; \; \langle C\rangle\cong\frac{3}{2}\Omega_{IM0}H_0^2(2u_{AC}(\tau)+u_C(\tau));
\end{equation*}%
where%
\begin{align*}
	u_A(\tau)&=\int_{\tau_I}^{\tau}\left[\int_{|\underline{r}<R(\tau)|}G_{\tau'}(\underline{r};\tau)d^3\underline{r}\right]a(\tau')^2T(\tau')d\tau', \\
	u_{AC}(\tau)&=\int_{\tau_I}^{\tau}\left[\int_{|\underline{r}<R(\tau)|}G^C_{\tau'}(\underline{r};\tau)d^3\underline{r}\right]\dot{H}(\tau')u_A(\tau')d\tau', \\
	u_C(\tau)&=\int_{\tau_I}^{\tau}\left[\int_{|\underline{r}<R(\tau)|}G^C_{\tau'}(\underline{r};\tau)d^3\underline{r}\right]a(\tau')^2T(\tau')d\tau', \\
	u_B(\tau)&=a(\tau)^{-2}\int_{\tau_I}^{\tau}a(\tau')^2\left(2u_{AC}(\tau')+u_C(\tau')-u_A(\tau')\right)d\tau';
\end{align*}%
and%
\begin{align*}
	\left(\Box-2H\partial_{\tau}+2(\dot{H}-2H^2)\right)G_{\tau'}(\underline{x};\tau)&=\delta^{(3)}(\underline{x})\delta(\tau-\tau'), \\
	\left(\Box-2H\partial_{\tau}\right)G^C_{\tau'}(\underline{x};\tau)&=\delta^{(3)}(\underline{x})\delta(\tau-\tau').
\end{align*}

Since the Green functions are symmetric under spatial rotation, we can reduce the spatial
dimensions to one%
\begin{align}
	&G_{\tau'}(\underline{r};\tau)=-\frac{1}{2\pi|\underline{r}|}\partial_r\Gamma_{\tau'}(|\underline{r}|;\tau) \quad s.t. \cr
	&\left(\partial_r^2-\partial_{\tau}^2-2H\partial_{\tau}+2(\dot{H}-2H^2)\right)\Gamma_{\tau'}(r;\tau)=\delta(r)\delta(\tau-\tau');
\end{align}%
and the same for $G^C_{\tau'}(\underline{r};\tau)$.%
This allow us to express in another way the terms as%
\begin{align*}
	\int_{|\underline{r}<R(\tau)|}G_{\tau'}(\underline{r};\tau)d^3\underline{r}&=\int_0^{R(\tau)}\left[-\frac{1}{2\pi r}\partial_r\Gamma_{\tau'}(r;\tau)\right]4\pi r^2dr \\
	&=-2\int_0^{R(\tau)}r\partial_r\Gamma_{\tau'}(r;\tau)dr \cr
	&=2\left(\int_0^{R(\tau)}\Gamma_{\tau'}(r;\tau)dr-[r\Gamma_{\tau'}(r;\tau)]_{r=0}^{R(\tau)}\right).
\end{align*}

Now let us define the auxiliary field%
\begin{equation}
	v_A(r;\tau):=\int_{\tau_I}^{\tau}\Gamma_{\tau'}(r;\tau)a(\tau')^2T(\tau')d\tau';
\end{equation}%
and similar for $v_{AC}$ and $v_C$. Then, we can prove

\begin{lem}
	The metric perturbations evolve as%
	\begin{align}
		u_A(\tau)&=\int_{-R(\tau)}^{R(\tau)}v_A(r;\tau)dr-2[rv_A(r;\tau)]_{r=0}^{R(\tau)}, \cr
		u_{AC}(\tau)&=\int_{-R(\tau)}^{R(\tau)}v_{AC}(r;\tau)dr-2[rv_{AC}(r;\tau)]_{r=0}^{R(\tau)}, \cr
		u_C(\tau)&=\int_{-R(\tau)}^{R(\tau)}v_C(r;\tau)dr-2[rv_C(r;\tau)]_{r=0}^{R(\tau)};
	\end{align}%
	where the $v$ fields solve the 2D PDEs%
	\begin{align}
	\label{2D PDEs}
		\left(\partial_r^2-\partial_{\tau}^2-2H\partial_{\tau}+2(\dot{H}-2H^2)\right)v_A(r;\tau)&=\delta(r)a(\tau)^2T(\tau), \cr
		\left(\partial_r^2-\partial_{\tau}^2-2H\partial_{\tau}\right)v_{AC}(r;\tau)&=\delta(r)\dot{H}(\tau)u_A(\tau), \cr
		\left(\partial_r^2-\partial_{\tau}^2-2H\partial_{\tau}\right)v_C(r;\tau)&=\delta(r)a(\tau)^2T(\tau).
	\end{align}
\end{lem}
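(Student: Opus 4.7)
The plan is to chain the spherical reduction already derived in the excerpt,
$$\int_{|\underline{r}|<R(\tau)} G_{\tau'}(\underline{r};\tau)\,d^3\underline{r} = 2\int_0^{R(\tau)}\Gamma_{\tau'}(r;\tau)\,dr - 2\bigl[r\Gamma_{\tau'}(r;\tau)\bigr]_{r=0}^{R(\tau)},$$
with the defining integrals for $u_A$, $u_{AC}$, $u_C$, and then swap the order of the $r$- and $\tau'$-integrations so as to expose the auxiliary fields $v_A, v_{AC}, v_C$. The swap is legitimate on the bounded rectangle $\{0\le r\le R(\tau),\ \tau_I\le \tau'\le \tau\}$ where the integrand is locally integrable (apart from the usual distributional support of $\Gamma$, which however is carried through as a retarded kernel). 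This immediately gives, for instance, $u_A(\tau)=2\int_0^{R(\tau)} v_A(r;\tau)\,dr - 2[rv_A(r;\tau)]_{r=0}^{R(\tau)}$, and the analogous formulas for $u_C$ and $u_{AC}$. Invoking the $r\mapsto -r$ symmetry of $\Gamma_{\tau'}$, which follows from the source being $\delta(r)$ and the operator containing only $\partial_r^2$, this same symmetry is inherited by each $v$-field, so $2\int_0^{R(\tau)}=\int_{-R(\tau)}^{R(\tau)}$, yielding the three claimed identities in one stroke.

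For the 2D PDEs I would apply $L:=\partial_r^2-\partial_\tau^2-2H\partial_\tau+2(\dot H-2H^2)$ directly to $v_A$, and $L^C:=\partial_r^2-\partial_\tau^2-2H\partial_\tau$ to $v_C$ and $v_{AC}$. Because $\Gamma_{\tau'}$ and $\Gamma^C_{\tau'}$ are the \emph{retarded} Green functions of $L$ and $L^C$ (hence vanish identically for $\tau<\tau'$), the $\tau'$-upper limit can be replaced by $+\infty$ without changing the value of $v_A$, $v_{AC}$, $v_C$. With a fixed integration domain, $L$ commutes with $\int d\tau'$, and the identities $L\Gamma_{\tau'}=\delta(r)\delta(\tau-\tau')$ and $L^C\Gamma^C_{\tau'}=\delta(r)\delta(\tau-\tau')$ collapse the $\tau'$-integral via the $\delta(\tau-\tau')$ factor. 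This gives $Lv_A=\delta(r)a(\tau)^2T(\tau)$ and, analogously, $L^C v_C=\delta(r)a^2T$ and $L^C v_{AC}=\delta(r)\dot H u_A$, matching the three sources displayed in the statement.

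The main obstacle I anticipate is the rigorous justification of exchanging the differential operator with the $\tau'$-integral when the upper limit is $\tau$ itself. The retarded-extension trick above sidesteps a direct Leibniz computation, which would otherwise generate endpoint contributions of the form $\Gamma_{\tau'}(r;\tau)\big|_{\tau'=\tau}$ and $\partial_\tau \Gamma_{\tau'}(r;\tau)\big|_{\tau'=\tau}$; these in fact recombine to reproduce the $\delta(r)\delta(\tau-\tau')$ source, but the bookkeeping is easiest once one recognizes that the retardation of $\Gamma$ makes the upper limit effectively immaterial. Once this technical point is handled, the two assertions of the lemma follow mechanically and uniformly for the three fields.
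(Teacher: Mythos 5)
Your proposal is correct and follows essentially the same route as the paper: swap the $\tau'$- and $r$-integrations in the spherically reduced expressions to expose the $v$-fields, use the $r\mapsto-r$ symmetry to write $2\int_0^{R}=\int_{-R}^{R}$, and verify the 2D PDEs from the Green-function identity. The only (harmless) difference is in the PDE step: the paper differentiates under the $\tau'$-integral and kills the endpoint terms via causality of $\Gamma_{\tau'}$, whereas you extend the upper limit using retardation so the operator commutes with a fixed-domain integral — an equivalent, if anything tidier, bookkeeping of the same facts.
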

\begin{proof}
	Let us start from the time derivatives of $v_A$.

	\begin{align*}
		\dot{v}_A(r;\tau)&=[\Gamma_{\tau'}(r;\tau)a(\tau')^2T(\tau')]|_{\tau'=\tau}+\int_{\tau'}^{\tau}\partial_{\tau}\Gamma_{\tau'}(r;\tau)a(\tau')^2T(\tau')d\tau', \cr
		\ddot{v}_A(r;\tau)&=\partial_{\tau}[\Gamma_{\tau'}(r;\tau)a(\tau')^2T(\tau')]|_{\tau'=\tau}+[\partial_{\tau}\Gamma_{\tau'}(r;\tau)a(\tau')^2T(\tau')]|_{\tau'=\tau} \cr
		&+\int_{\tau'}^{\tau}\partial_{\tau}^2\Gamma_{\tau'}(r;\tau)a(\tau')^2T(\tau')d\tau'.
	\end{align*}%
	$\Gamma_{\tau'}(r;\tau)$ satisfies a wave equation, so it holds a causality principle%
	\begin{equation*}
		\forall |r|>\tau-\tau': \Gamma_{\tau'}(r;\tau)\equiv0.
	\end{equation*}%
	Setting $\tau'=\tau$ it becomes%
	\begin{equation*}
		\forall |r|>0: \Gamma_{\tau}(r;\tau)\equiv0.
	\end{equation*}%
	Continuity in $r=0$ requires%
	\begin{equation*}
		[\Gamma_{\tau'}(r;\tau)a(\tau')^2T(\tau')]|_{\tau'=\tau}=\Gamma_{\tau}(r;\tau)a(\tau)^2T(\tau)\equiv0;
	\end{equation*}%
	so that the boundary terms of $\dot{v}_A$ and $\ddot{v}_A$ must vanish. Now, we can check by substitution%
	\begin{align*}
		\partial_r^2v_A-\ddot{v}_A&-2H\dot{v}_A+2(\dot{H}-2H^2)v_A \cr
		&=\int_{\tau_I}^{\tau}[\partial_r^2\Gamma_{\tau'}-\partial_{\tau}^2\Gamma_{\tau'}-2H\partial_{\tau}\Gamma_{\tau'}+2(\dot{H}-2H^2)\Gamma_{\tau'}]a(\tau')^2T(\tau')d\tau' \cr
		&=\int_{\tau_I}^{\tau}\delta(r)\delta(\tau-\tau')a(\tau')^2T(\tau')d\tau'=\delta(r)a(\tau)^2T(\tau).
	\end{align*}

	Notice that this PDE is symmetric under $r\rightarrow-r$, so that $v_A(r;\tau)=v_A(-r;\tau)$. This allows
us to write%
	\begin{align*}
		u_A(\tau)&=\int_{\tau_I}^{\tau}\left[\int_{|\underline{r}<R(\tau)|}G_{\tau'}(\underline{r};\tau)d^3\underline{r}\right]a(\tau')^2T(\tau')d\tau' \cr
		&=2\int_{\tau_I}^{\tau}\left(\int_0^{R(\tau)}\Gamma_{\tau'}(r;\tau)dr-[r\Gamma_{\tau'}(r;\tau)]_{r=0}^{R(\tau)}\right)a(\tau')^2T(\tau')d\tau' \cr
		&=2\left(\int_0^{R(\tau)}v_A(r;\tau)dr-[rv_A(r;\tau)]_{r=0}^{R(\tau)}\right) \cr
		&=\int_{-R(\tau)}^{R(\tau)}v_A(r;\tau)dr-2[rv_A(r;\tau)]_{r=0}^{R(\tau)}.
	\end{align*}%
	The proof is analogous for $v_{AC}$ and $v_C$.
\end{proof}

\subsection{Fourier transform}

We can eliminate the derivatives w.r.t. the spatial variable $r$, by writing (\ref{2D PDEs}) for the $v$s
in Fourier transform. If we define%
\begin{equation}
	v_A(r;\tau):=\frac{1}{2\pi}\int\hat{v}_A(\omega;\tau)e^{-ir\omega}d\omega;
\end{equation}%
then, the corresponding PDE becomes%
\begin{equation}
\label{hat PDE}
	\left(-\omega^2-\partial_{\tau}^2-2H\partial_{\tau}+2(\dot{H}-2H^2)\right)\hat{v}_A(\omega;\tau)=a(\tau)^2T(\tau).
\end{equation}%
The analogous holds for the other $v$s. Now we manipulate the term in the $u$s.

\begin{lem}
\label{lem int}
	\begin{align}
		\int_{-R(\tau)}^{R(\tau)}v_A(r;\tau)dr&=\hat{v}_A(\omega;\tau)|_{\omega=0}, \cr
		\int_{-R(\tau)}^{R(\tau)}v_{AC}(r;\tau)dr&=\hat{v}_{AC}(\omega;\tau)|_{\omega=0}, \cr
		\int_{-R(\tau)}^{R(\tau)}v_C(r;\tau)dr&=\hat{v}_C(\omega;\tau)|_{\omega=0}.
	\end{align}
\end{lem}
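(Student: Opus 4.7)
The plan is to observe that Lemma \ref{lem int} is essentially the statement that evaluating the forward Fourier transform at zero frequency recovers the total spatial integral, combined with the fact that $v_A$, $v_{AC}$, $v_C$ are supported in $|r|\le R(\tau)$ by causality. So the proof should come out in two clean steps: a Fourier identity, plus a support argument using the wave-like nature of $\Gamma_{\tau'}$ and $\Gamma^C_{\tau'}$.

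First I would invert the defining relation $v_A(r;\tau)=\frac{1}{2\pi}\int\hat{v}_A(\omega;\tau)e^{-ir\omega}d\omega$ to get the direct transform $\hat{v}_A(\omega;\tau)=\int_{-\infty}^{\infty}v_A(r;\tau)e^{ir\omega}dr$, and set $\omega=0$, which yields $\hat{v}_A(0;\tau)=\int_{\mathbb{R}}v_A(r;\tau)dr$. The same step applies verbatim to $\hat{v}_{AC}$ and $\hat{v}_C$.

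Next I would reduce the integral over $\mathbb{R}$ to the integral over $[-R(\tau),R(\tau)]$ by showing $v_A(r;\tau)=0$ for $|r|>R(\tau)=\tau-\tau_I$. This is the only substantive point. Since $\Gamma_{\tau'}(r;\tau)$ satisfies a hyperbolic equation with light-cone propagation speed $1$, as already invoked in the previous lemma, it is supported in $|r|\le\tau-\tau'$. Plugging into
\begin{equation*}
v_A(r;\tau)=\int_{\tau_I}^{\tau}\Gamma_{\tau'}(r;\tau)a(\tau')^2T(\tau')d\tau',
\end{equation*}
if $|r|>\tau-\tau_I$ then $|r|>\tau-\tau'$ for every $\tau'\in[\tau_I,\tau]$, so the integrand vanishes identically and $v_A(r;\tau)=0$. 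The same causality argument applied to $G^C_{\tau'}$ (whose governing operator $\Box-2H\partial_\tau$ has the same principal symbol) handles $v_{AC}$ and $v_C$. Combining the two steps gives the three displayed identities.

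The only step where any delicacy is required is the causality claim for $\Gamma^C_{\tau'}$: the paper writes down the operator $\Box-2H\partial_\tau$ as if obviously hyperbolic, but formally one should note that its principal part is the flat d'alembertian, so the usual finite-propagation-speed result applies and the support in $r$ is still contained in the past light cone of width $\tau-\tau'$. Once this is granted, no further computation is needed. The Fourier inversion step implicitly also assumes $v_A(\cdot;\tau)\in L^1(\mathbb{R})$, which is automatic from compact support, so there are no convergence issues to worry about.
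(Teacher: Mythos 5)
Your proposal is correct and follows essentially the same route as the paper: causality of the wave propagation confines $v_A$, $v_{AC}$, $v_C$ to $|r|\le R(\tau)=\tau-\tau_I$, so the integral extends to all of $\mathbb{R}$, and then the zero-frequency evaluation of the Fourier transform (which the paper phrases by inserting the inverse transform and producing $\delta(\omega)$, while you simply write the direct transform at $\omega=0$) gives the identity. The only cosmetic difference is that you derive the support property from the causality of $\Gamma_{\tau'}$ and $\Gamma^C_{\tau'}$, whereas the paper imposes it directly on the $v$'s as solutions of (\ref{2D PDEs}); these are equivalent.
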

\begin{proof}
	First, consider that $v_A$ satisfies the wave equation (\ref{2D PDEs}), so that we can impose the causality condition%
	\begin{equation*}
		\forall |r|>\tau-\tau_I=R(\tau): v_A(r;\tau)\equiv0.
	\end{equation*}%
	Therefore%
	\begin{equation*}
		\int_{-R(\tau)}^{R(\tau)}v_A(r;\tau)dr=\int_{-\infty}^{+\infty}v_A(r;\tau)dr.
	\end{equation*}%
	After applying the Fourier transform and switching the integrals we find%
	\begin{align*}
		\int v_A(r;\tau)dr&=\int\left[\frac{1}{2\pi}\int\hat{v}_A(\omega;\tau)e^{-ir\omega}d\omega\right]dr\\
		&=\int\left[\int\frac{1}{2\pi}e^{-ir\omega}dr\right]\hat{v}_A(\omega;\tau)d\omega=\int\delta(\omega)\hat{v}_A(\omega;\tau)d\omega;
	\end{align*}%
	which proves the assertion. For the others $v$s the proof is analogous.
\end{proof}

Now, we need to evaluate the boundary terms $[rv(r;\tau)]_{r=0}^{R(\tau)}$.

\begin{lem}
\label{lem 0}
	The term $rv(r;\tau)|_{r=0}$ always vanishes.
\end{lem}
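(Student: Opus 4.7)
The assertion reduces to showing that $v(r;\tau)$ stays bounded as $r \to 0$: if it is, then trivially $r\,v(r;\tau) \to 0$. My plan is to read this off from the PDE satisfied by $v$ (either $v_A$, $v_{AC}$, or $v_C$), which in every case has the schematic form
\[
\bigl(\partial_r^2 - \partial_\tau^2 - 2H\partial_\tau + \text{zeroth order in } v\bigr)\,v(r;\tau) = \delta(r)\, S(\tau),
\]
with $S(\tau)$ a smooth function of time ($a^2 T$ or $\dot H u_A$) and with zero Cauchy data at $\tau = \tau_I$ (the initial condition inherited from $A,B,C$ being zero at $\tau_I$, which forces $v$ to vanish there as well).

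The key step is the classical jump analysis: integrate the PDE in $r$ over $(-\epsilon,\epsilon)$ at fixed $\tau$ and let $\epsilon \downarrow 0$. The $\partial_r^2$ term yields $[\partial_r v]_{0^-}^{0^+}$; the right-hand side integrates to $S(\tau)$; and the remaining terms ($\partial_\tau^2 v$, $\partial_\tau v$ and the zeroth-order term) are locally bounded in $r$, so they contribute $O(\epsilon)$ and drop out. This gives the finite jump condition $[\partial_r v]_{0^-}^{0^+} = S(\tau)$. The useful byproduct is that a delta source in a second-order hyperbolic operator produces a jump in the first derivative but leaves the function itself continuous across the source. Hence $v(r;\tau)$ is continuous at $r=0$, so $v(0;\tau)$ is finite, and therefore $r\,v(r;\tau)\big|_{r=0} = 0$.

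The argument applies identically to $v_A$, $v_{AC}$ and $v_C$, since only the source $S(\tau)$ and the lower-order coefficients change while the principal part $\partial_r^2 - \partial_\tau^2$ (and hence the local singularity structure at the source line $r=0$) is the same.

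The one technical point worth underwriting is that $v$ and $\partial_\tau v$ really are locally bounded in a punctured neighborhood of $r=0$, so that the non-principal contributions in the $\epsilon$-integration genuinely vanish in the limit. This follows from the causality principle already invoked in the previous lemma, together with the fact that $v$ is built by convolution in $\tau'$ against a $1{+}1$D retarded Green function whose leading behavior at the source is the bounded Heaviside step $\tfrac{1}{2}\Theta(\tau-\tau'-|r|)$; the friction term $-2H\partial_\tau$ and the zeroth-order term $2(\dot H - 2H^2)$ only produce smoother corrections by Duhamel iteration. This regularity at the source is the only place where real care is needed; once it is secured, the conclusion $r\,v(r;\tau)\big|_{r=0}=0$ is immediate.
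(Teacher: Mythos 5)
Your proof is correct, but it follows a genuinely different route from the paper. You argue entirely in physical space: the delta source $\delta(r)S(\tau)$ sits in a second-order operator whose principal part is $\partial_r^2-\partial_\tau^2$, so it can only produce a jump in $\partial_r v$ across $r=0$ while $v$ itself stays continuous (and, by the retarded Green-function representation plus causality, locally bounded), whence $r\,v(r;\tau)\big|_{r=0}=0$ trivially. The paper instead works in Fourier space: it writes $r\,v(r;\tau)\big|_{r=0}=-i\big[\hat{v}(\omega;\tau)\big]_{-\infty}^{+\infty}$ and then uses the transformed equation, solved formally as $\hat{v}=\omega^{-2}\big[(-\partial_\tau^2-2H\partial_\tau+\dots)\hat{v}-a^2T\big]$, to argue that $\hat{v}\to0$ as $\omega\to\pm\infty$, killing the boundary term. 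The two arguments are dual to one another (continuity of $v$ at the source line is exactly what large-$\omega$ decay of $\hat{v}$ encodes), and both are at a comparable level of rigor: the paper's step ``a solution is $\hat{v}\sim0$'' shows that a decaying solution exists rather than that the relevant one decays, while your step needs the local boundedness of $v$ near $r=0$, which you correctly reduce to the Heaviside-type behaviour of the $1{+}1$D retarded kernel and which, strictly speaking, carries the same integrability caveats ($\tau_I>-\infty$, $a^2T\in L^1_{loc}$) that the paper only imposes in the companion lemma for the $r=R(\tau)$ term. What your route buys is a more elementary, local argument that avoids interchanging integrals and questions about Fourier decay; what the paper's route buys is seamless consistency with the Fourier machinery it has already set up (the $\omega=0$ evaluation in the preceding lemma and the reduction to ODEs immediately afterwards).
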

\begin{proof}
	For Fourier properties%
	\begin{equation*}
		rv_A(r;\tau)|_{r=0}=-i\int\partial_{\omega}\hat{v}_A(\omega;\tau)e^{-ir\omega}d\omega|_{r=0}=-i\int\partial_{\omega}\hat{v}_A(\omega;\tau)d\omega=-i[\hat{v}_A(\omega;\tau)]_{-\infty}^{+\infty}.
	\end{equation*}%
	For an evaluation of $\hat{v}_A(\omega;\tau)$ when $\omega$ goes to infinity, we can manipulate the corresponding ODE%
	\begin{align*}
		\hat{v}_A(\omega;\tau)&=\frac{1}{\omega^2}[(-\partial_{\tau}^2-2H\partial_{\tau}+2(\dot{H}-2H^2))\hat{v}_A(\omega;\tau)-a(\tau)^2T(\tau)] \\
		&\sim^{\omega\rightarrow\pm\infty}\frac{1}{\omega^2}(-\partial_{\tau}^2-2H\partial_{\tau}+2(\dot{H}-2H^2))\hat{v}_A(\omega;\tau).
	\end{align*}%
	A solution is%
	\begin{equation*}
		\hat{v}_A(\omega;\tau)\sim^{\omega\rightarrow\pm\infty}0;
	\end{equation*}%
	which proves the assertion. The proof is analogous for the others $v$s.
\end{proof}

For the other term, we don't need the Fourier transform.

\begin{lem}
\label{lem R}
	The term $rv(r;\tau)|_{r=R(\tau)}$ vanishes if and only if $\tau_I>-\infty$ and $a(\tau)^2T(\tau)\in L^1_{loc}([\tau_I;\tau_F))$. Otherwise, it it divergent.
\end{lem}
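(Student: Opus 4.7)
The plan is to exploit the causality property of the retarded Green function $\Gamma_{\tau'}$ directly in the integral representation
\[
v_A(r;\tau) = \int_{\tau_I}^{\tau}\Gamma_{\tau'}(r;\tau)\,a(\tau')^2 T(\tau')\,d\tau',
\]
and similarly for $v_{AC}$, $v_C$. Recall that $\Gamma_{\tau'}(r;\tau)$ vanishes outside the backward light cone, i.e.\ whenever $r>\tau-\tau'$; hence causality restricts the effective range of $\tau'$ to $(\tau-r,\tau)$. Evaluating at $r=R(\tau)=\tau-\tau_I$ compresses this range to the single point $\tau'=\tau_I$, and approaching from inside ($r\to R(\tau)^-$) it is an interval of length $R(\tau)-r\to 0^+$ sitting right above $\tau_I$.

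The key structural input I would establish first is that $\Gamma_{\tau'}(r;\tau)$ is bounded uniformly as $r\to(\tau-\tau')^-$. The damped wave operator $\partial_r^2-\partial_\tau^2-2H\partial_\tau+2(\dot H-2H^2)$ differs from the flat d'alembertian only by lower-order terms, so the leading light-cone singularity of $\Gamma_{\tau'}$ is a finite step of the Heaviside type (just as for the Minkowski retarded Green function $\tfrac{1}{2}\Theta(\tau-\tau'-r)$), with smooth corrections in the interior produced by $H$ and $\dot H$. Granting this, one obtains the clean estimate
\[
|v_A(r;\tau)|\;\le\;C\int_{\tau_I}^{\tau-r}a(\tau')^2 T(\tau')\,d\tau',
\]
valid as $r\to R(\tau)^-$, with a matching lower bound of the same form.

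For the \emph{vanishing} direction, assume $\tau_I>-\infty$ and $a^2 T\in L^1_{\mathrm{loc}}([\tau_I;\tau_F))$. Then the right-hand side above is the integral of an $L^1_{\mathrm{loc}}$ function over a shrinking interval and tends to $0$, so $v_A(R(\tau);\tau)=0$; multiplying by the finite value $R(\tau)$ gives $r v_A(r;\tau)|_{r=R(\tau)}=0$. For the \emph{divergence} direction, there are two cases. If $\tau_I>-\infty$ but $a^2 T$ fails to be locally integrable near $\tau_I$, the lower bound forces $|v_A(r;\tau)|\to\infty$ as $r\to R(\tau)^-$, so the product diverges. If instead $\tau_I=-\infty$, then $R(\tau)=\infty$ and the boundary must be interpreted as the limit $r\to\infty$; using the light-cone structure of $\Gamma$, $v_A(r;\tau)$ tends to a finite nonzero value (or is already divergent if $a^2T$ is non-integrable at $-\infty$), hence the prefactor $r$ produces a divergence. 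The same argument carries over verbatim to $v_{AC}$ and $v_C$ since their Green function $\Gamma^C_{\tau'}$ solves a wave equation of the same type.

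The main obstacle is making the step-function behavior of $\Gamma_{\tau'}$ near its light cone rigorous when the coefficients $H$, $\dot H$ of the background become singular at $\tau_I$ (e.g.\ at the Big Bang). The cleanest route is a characteristic-coordinate/WKB expansion in a neighbourhood of the cone $r=\tau-\tau'$: one factors out the Minkowskian singular part and shows that the remainder satisfies a regular transport equation whose solution stays bounded up to the cone. With this in hand, all the estimates above reduce to the elementary integrability dichotomy for $a(\tau')^2 T(\tau')$ at $\tau_I$, which is exactly the criterion stated in the lemma.
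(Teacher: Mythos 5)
Your strategy is essentially the paper's: near the wave front the Green function is governed by its principal symbol, so it has a bounded, Heaviside-type profile there, $v(r;\tau)$ behaves like a constant times $S(\tau-r)=\int_{\tau_I}^{\tau-r}a(\tau')^2T(\tau')\,d\tau'$, and the lemma reduces to the integrability dichotomy for $a^2T$ at $\tau_I$. For $\tau_I>-\infty$ both of your branches (vanishing when $a^2T\in L^1_{loc}$, divergence when not) coincide with the paper's argument, at the same level of rigor concerning the behaviour of $\Gamma_{\tau'}$ on the light cone. One slip of sign: causality ($\Gamma_{\tau'}(r;\tau)=0$ for $r>\tau-\tau'$) confines $\tau'$ to $[\tau_I,\tau-r]$, not to $(\tau-r,\tau)$; since your subsequent reasoning uses the shrinking interval just above $\tau_I$, this is only a misstatement, not a real error.

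The genuine gap is the branch $\tau_I=-\infty$. You assert that $v_A(r;\tau)$ tends to a finite nonzero value as $r\to\infty$, but this contradicts your own upper bound $|v_A(r;\tau)|\le C\int_{-\infty}^{\tau-r}a^2T\,d\tau'$: when $a^2T$ is integrable at $-\infty$ that tail tends to zero, so $v_A\to0$ and the boundary term $rv_A|_{r=R(\tau)}$ is an $\infty\cdot0$ form. Divergence then hinges on the rate at which $S(\tau-r)$ decays compared with $1/r$, which your argument never addresses (the non-integrable subcase you mention parenthetically is fine). The paper confronts precisely this point by writing the boundary term as $\lim_{x\to\tau_I}(\tau-x)S(x)$ and arguing divergence from $\tau-x\to+\infty$ together with the positivity of $a^2T$; your proposal replaces that limit by an unsupported claim about a finite nonzero limit of $v_A$, so the \lq\lq otherwise divergent\rq\rq\ half of the statement is not established in this case. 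A second, smaller omission: for $v_{AC}$ the source is $\dot H(\tau)u_A(\tau)$ rather than $a^2T$, so the criterion does not carry over verbatim; one must add, as the paper does, that for $\tau_I>-\infty$ and $a^2T\in L^1_{loc}$ the convergence of $u_A$ together with the first of the linearized equations gives $\dot H u_A\in L^1_{loc}$, while in the divergent cases the divergence of $u_A$ propagates to $v_{AC}$.
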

\begin{proof}
	We know that $v_A$ satisfies a wave equation (\ref{2D PDEs}), whose principal symbol is the same
as for a 2D d'alembertian. As in \cite{re} \S4, near the wave boundary $r\rightarrow R(\tau)$ the solution
depends on the principal symbol only, and we can neglect the terms $-2H\dot{v}_A+2(\dot{H}-2H^2)$
in that asymptotic region:%
	\begin{equation*}
		v_A(r;\tau)\sim^{r\rightarrow R(\tau)}\bar{v}_A(r;\tau) \; \; \; s.t. \; \; \; (\partial_r^2-\partial_{\tau}^2)\bar{v}_A(r;\tau)=\delta(r)a(\tau)^2T(\tau).
	\end{equation*}%
	It is easily solved by%
	\begin{equation*}
		\bar{v}_A(r;\tau)=S(\tau-|r|) \; \; \; s.t. \; \; \; S(x):=\int_{\tau_I}^xa(\tau)^2T(\tau)d\tau.
	\end{equation*}%
	Notice that $S$ diverges if $a(\tau)^2T(\tau)\not\in L^1_{loc}([\tau_I;\tau_F))$. After replacing it%
	\begin{align*}
		rv_A(r;\tau)|_{r=R(\tau)}&=\lim_{r\rightarrow R(\tau)}r\bar{v}_A(r;\tau)=\lim_{r\rightarrow\tau-\tau_I}rS(\tau-|r|) \\
		&=\lim_{x\rightarrow\tau_I}(\tau-x)S(x)=\lim_{x\rightarrow\tau_I}(\tau-x)\int_{\tau_I}^xa(\tau)^2T(\tau)d\tau.
	\end{align*}

	Let now consider the case $\tau_I>-\infty$, so that the requirement on $S$ becomes $a(\tau)^2T(\tau)\in L^1_{loc}([0;\tau_F))$.
Then, the integral $\int_{\tau_I}^xa(\tau)^2T(\tau)d\tau$ goes to zero and $rv_A(r;\tau)|_{r=R(\tau)}\equiv0$.%
	On the other hand, in the case $\tau_I=-\infty$ we see that $\tau-x\rightarrow+\infty$. Since $a(\tau)^2T(\tau)$ is always
positive, we get the divergence $rv_A(r;\tau)|_{r=R(\tau)}\equiv+\infty$.

	The proof is analogous for $v_C$. For $v_{AC}$ we obtain%
	\begin{equation*}
		rv_{AC}(r;\tau)|_{r=R(\tau)}=\lim_{x\rightarrow\tau_I}(\tau-x)\int_{\tau_I}^x\dot{H}(\tau)u_A(\tau)d\tau.
	\end{equation*}%
	As before, it diverges if $\tau_I=-\infty$. If $\tau_I>-\infty$ but $a(\tau)^2T(\tau)\not\in L^1_{loc}([\tau_I;\tau_F))$, then
$u_A(\tau)=\int_{-R(\tau)}^{R(\tau)}v_A(r;\tau)dr-2rv_A(r;\tau)|_{r=R(\tau)}\equiv-\infty$ as we saw, and also $v_{AC}$ diverges. If
we have $a(\tau)^2T(\tau)\in L^1_{loc}([\tau_I;\tau_F))$, then $u_A(\tau)=\hat{v}_A(0;\tau)$ for Lemmas \ref{lem int} and \ref{lem 0}; it
converges, and the first of (\ref{lin ein eq}) assures that $\dot{H}(\tau)u_A(\tau)\in L^1_{loc}([\tau_I;\tau_F))$.
\end{proof}

Putting the last Lemmas all together, we obtain the $u$s as solutions of some ODEs.

\begin{thm}
	 If $\tau_I>-\infty$ and $a(\tau)^2T(\tau)\in L^1_{loc}([\tau_I;\tau_F))$, then%
	\begin{align}
	\label{ODEs}
			\ddot{u}_A(\tau)+2H\dot{u}_A(\tau)+2(2H^2-\dot{H})u_A(\tau)&=-a(\tau)^2T(\tau) \cr
			\ddot{u}_{AC}(\tau)+2H\dot{u}_{AC}(\tau)&=-\dot{H}(\tau)u_A(\tau) \cr
			\ddot{u}_C(\tau)+2H\dot{u}_C(\tau)&=-a(\tau)^2T(\tau).
	\end{align}%
	Otherwise, $\langle A\rangle$, $\langle B\rangle$, $\langle C\rangle$ always diverge.
\end{thm}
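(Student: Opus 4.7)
The plan is to assemble the three preceding lemmas so that each $u$-function is reduced to a Fourier transform at zero frequency, after which the ODEs fall out by simply evaluating the transformed PDE (\ref{hat PDE}) at $\omega = 0$. Essentially all the hard work is already done; what remains is bookkeeping.

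In detail, I would start from the identity
$u_A(\tau) = \int_{-R(\tau)}^{R(\tau)} v_A(r;\tau)\,dr - 2\bigl[r\,v_A(r;\tau)\bigr]_{r=0}^{R(\tau)}$
of the lemma just established, and apply in sequence: Lemma \ref{lem int} to replace the integral by $\hat v_A(0;\tau)$; Lemma \ref{lem 0} to discard the term at $r = 0$; and Lemma \ref{lem R}, under the hypotheses $\tau_I > -\infty$ and $a^2 T \in L^1_{\mathrm{loc}}([\tau_I;\tau_F))$, to discard the term at $r = R(\tau)$. This collapses the identity to $u_A(\tau) = \hat v_A(0;\tau)$. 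Setting $\omega = 0$ in (\ref{hat PDE}) then yields
$-\ddot u_A(\tau) - 2H\dot u_A(\tau) + 2(\dot H - 2H^2)\,u_A(\tau) = a(\tau)^2 T(\tau)$,
which rearranges to the first equation in (\ref{ODEs}). The argument for $u_C$ and $u_{AC}$ is verbatim the same, with the only change that the operator associated to $G^C_{\tau'}$ lacks the multiplicative $2(\dot H - 2H^2)$ piece, so the corresponding ODEs are shorter; in the $u_{AC}$ case the source is $-\dot H(\tau)\,u_A(\tau)$ instead of $-a(\tau)^2 T(\tau)$.

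For the converse, Lemma \ref{lem R} already asserts that under the opposite hypotheses the boundary term $r\,v_A(r;\tau)|_{r=R(\tau)}$ diverges, so the same decomposition forces $u_A$ to diverge---no cancellation against the finite Fourier piece is possible since $a^2 T \geq 0$. The identical calculation applies directly to $u_C$; $u_{AC}$ inherits the divergence through its source $\dot H\,u_A$; and $\langle B\rangle$ follows by linearity via the integral formula (\ref{B}). Hence all three averages $\langle A\rangle, \langle B\rangle, \langle C\rangle$ blow up.

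The main obstacle---showing that the subleading terms $-2H\partial_\tau + 2(\dot H - 2H^2)$ do not regularise the wavefront singularity at $r \to R(\tau)$---has already been discharged inside Lemma \ref{lem R} by asymptotic matching to the 2D d'Alembertian, reducing the boundary contribution to the one-dimensional retarded integral $S(\tau - |r|)$. Once that is granted, the present theorem is little more than the substitution $\omega = 0$ in the Fourier-transformed PDE.
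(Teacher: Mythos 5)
Your proposal is correct and is essentially the paper's own argument: Lemmas \ref{lem int}, \ref{lem 0} and \ref{lem R} reduce each $u$ to $\hat v(\omega;\tau)|_{\omega=0}$, and evaluating the transformed PDE (\ref{hat PDE}) at $\omega=0$ gives the ODEs, with the divergence statement delegated to Lemma \ref{lem R}. Your added remarks on the non-cancellation of the divergent boundary term (since $a^2T\geq 0$) and its propagation to $u_{AC}$ and $\langle B\rangle$ merely make explicit what the paper leaves implicit.
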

\begin{proof}
	From Lemmas \ref{lem int}, \ref{lem 0} and \ref{lem R}, all the $u$s are%
	\begin{equation*}
		u(\tau)=\int_{-R(\tau)}^{R(\tau)}v(r;\tau)dr=\hat{v}(\omega;\tau)|_{\omega=0}.
	\end{equation*}%
	The $\hat{v}$s obey equations like (\ref{hat PDE}). Writing them for the $u$s we have the assertion.
\end{proof}

\section{Single component cases}

It is still impossible to solve analytically the evolution (\ref{bg evol}) for $a$ and the ODEs (\ref{ODEs}) for
a general choice of components $\{\bar{\Omega}_{w0}\}_w$. Moreover, for such a general choice it's quite
difficult to determine the form of the source $\tilde{\rho}\propto T(\tau)$. However, we are able to solve
exactly the equations when a single component $\bar{\Omega}_{w}$ dominates. We can approximate the
general evolution as a succession of \lq\lq epochs\rq\rq; during each epoch, we consider just the dominant component%
\begin{equation*}
	\forall\tau | \bar{\Omega}_{w}(\tau)=\max_{w'}\bar{\Omega}_{w'}(\tau) : \bar{\Omega}_{w0}\cong1,
\end{equation*}%
so that each epoch has a single-component evolution. The full evolution is obtained sticking
the partial functions, imposing that $a(\tau)\in C^0(\tau_I;\tau_F)$, since (\ref{bg evol}) is first order,
$\langle A\rangle, \langle C\rangle\in C^1(\tau_I;\tau_F)$, since (\ref{ODEs}) are second order; and $\langle B\rangle\in C^0(\tau_I;\tau_F)$,
because $u_B$ is obtained by an integral in (\ref{B}).

\subsection{The First Selfconsistence Condition}

Let's start solving (\ref{bg evol}) for a general epoch with $\bar{\Omega}_{w'0}\cong\delta_{w',w}$.%
\begin{equation*}
	\left(\frac{\dot{a}}{H_0}\right)^2=a^{1-3w} \Rightarrow
\end{equation*}
\begin{equation}
\label{single a}
	a(\tau)=\begin{cases}
		\left(\frac{1}{\alpha}H_0(\tau-c)\right)^{\alpha}\quad w\neq-\frac{1}{3} \\

		e^{H_0(\tau-c)} \quad w=-\frac{1}{3}
	\end{cases} \quad s.t. \quad \alpha(w):=\frac{2}{1+3w};
\end{equation}%
where $c$ is an integration constant. We get immediately the coefficients of (\ref{ODEs})

\begin{align}
	H(\tau)&=\begin{cases}
		\frac{\alpha}{\tau-c} \quad w&\neq-\frac{1}{3} \\
		H_0 \quad w&=-\frac{1}{3}
	\end{cases}, \\
	2H&=\begin{cases}
		2\frac{\alpha}{\tau-c} \quad w&\neq-\frac{1}{3} \\
		2H_0 \quad w&=-\frac{1}{3}
	\end{cases}, \\
	2(2H^2-\dot{H})&=\begin{cases}
		2\frac{2\alpha^2+\alpha}{(\tau-c)^2} \quad w&\neq-\frac{1}{3} \\
		4H_0^2 \quad w&=-\frac{1}{3}
	\end{cases}.
\end{align}

Recalling (\ref{bg comp}) and that $a(\tau)$ is increasing (at least) near $\tau_I$, we see that the epochs
must be in order of decreasing $w$. In particular, during the first epoch it dominates
$w_M:=\max\{w\}$. Setting the initial condition%
\begin{equation*}
	\lim_{\tau\rightarrow\tau_I}a(\tau)=0 \Rightarrow
\end{equation*}
\begin{equation}
	\tau_I=\begin{cases}
		-\infty \quad \alpha(w_M)&<0 \vee w=-\frac{1}{3} \\

		c \quad \alpha(w_M)&>0
	\end{cases}.
\end{equation}%
By definition it is always $\alpha\neq0$ for definition. From the previous Theorem, we get immediately

\begin{cor} [First Selfconsistence Condition]
	A selfconsistent choice of components must be such that $w_M>-\frac{1}{3}$.
\end{cor}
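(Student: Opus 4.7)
The plan is to trace the condition $w_M > -\frac{1}{3}$ back to the finiteness requirement built into the definition of a selfconsistent choice. First I would recall that the definition of selfconsistency demands that all calculations return a finite result; in particular the averaged metric perturbations $\langle A\rangle,\langle B\rangle,\langle C\rangle$ must be finite. The Theorem at the end of the previous section gives a clean dichotomy: if $\tau_I=-\infty$ (or if $a(\tau)^2 T(\tau)\notin L^1_{loc}$) then these averages diverge, otherwise they satisfy the ODEs (\ref{ODEs}). So selfconsistency forces $\tau_I>-\infty$, and it is exactly this condition that must be translated into a constraint on $w_M$.

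Next I would use the explicit description of $\tau_I$ already assembled just above the statement. During the earliest epoch the dominant component has parameter $w_M:=\max\{w\}$, and applying the initial condition $\lim_{\tau\to\tau_I}a(\tau)=0$ to the single-component solution (\ref{single a}) gives
\begin{equation*}
  \tau_I=\begin{cases}
    -\infty & \alpha(w_M)<0 \text{ or } w_M=-\tfrac{1}{3},\\
    c & \alpha(w_M)>0.
  \end{cases}
\end{equation*}
Since $\alpha(w)=2/(1+3w)$, we have $\alpha(w_M)>0$ if and only if $w_M>-\tfrac{1}{3}$, while $\alpha(w_M)<0$ corresponds to $w_M<-\tfrac{1}{3}$, and the borderline case $w_M=-\tfrac{1}{3}$ yields the exponential branch $a(\tau)=e^{H_0(\tau-c)}$ whose vanishing is pushed to $\tau\to-\infty$.

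Combining the two steps: $\tau_I>-\infty$ is equivalent to $\alpha(w_M)>0$, equivalently $w_M>-\tfrac{1}{3}$. Conversely, if $w_M\leq-\tfrac{1}{3}$ then $\tau_I=-\infty$, the previous Theorem forces $\langle A\rangle,\langle B\rangle,\langle C\rangle$ to diverge, and the choice fails the finiteness clause of selfconsistency. This proves the Corollary.

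The argument is essentially a bookkeeping chain rather than a calculation, so there is no real obstacle; the only point that deserves a line of justification is the case $w_M=-\tfrac{1}{3}$, where one must observe that although the power-law formula degenerates, the exponential branch still gives $\tau_I=-\infty$, so it is correctly excluded. (I would also note in passing that the companion condition $a(\tau)^2 T(\tau)\in L^1_{loc}([\tau_I;\tau_F))$ from the previous Theorem is not subsumed by this Corollary; it yields further, independent selfconsistency constraints to be examined component by component in the next subsections.)
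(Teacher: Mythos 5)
Your proposal is correct and follows essentially the same route as the paper: the paper derives $\tau_I=-\infty$ for $\alpha(w_M)<0$ or $w_M=-\frac{1}{3}$ from the single-component solution (\ref{single a}) with the initial condition $\lim_{\tau\to\tau_I}a(\tau)=0$, and then invokes the preceding Theorem (divergence of $\langle A\rangle$, $\langle B\rangle$, $\langle C\rangle$ when $\tau_I=-\infty$) to conclude $w_M>-\frac{1}{3}$ from the finiteness clause of selfconsistency. Your explicit treatment of the borderline exponential case and your remark that the $L^1_{loc}$ condition is a separate constraint are consistent with, and slightly more detailed than, the paper's immediate deduction.
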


In particular, a selfconsistent universe must develop the metric perturbations as described
by (\ref{ODEs}), with non constant coefficients.
\begin{oss}
	In \cite{re} we studied the costant coefficient case, filling the universe with an
exotic component s.t. $w=-\frac{1}{3}$. This breaks the First Selfconsistence Condition, which
explains the divergences we found in \cite{re} \S4.3: it is the contribution of $rv(r;\tau)|_{r=R(\tau)}\equiv\infty$.
It is possible to extract finite results even when the I SC is broken, as we did with a
renormalization via analytic continuation. A general renormalization method could be to
always neglect the term $rv(r;\tau)|_{r=R(\tau)}\equiv\infty$, using (\ref{ODEs}) for any $w_M$.
\end{oss}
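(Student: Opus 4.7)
The plan is to verify the three assertions of this Observation in turn, treating each as a consequence of the single-component analysis just performed together with Lemma \ref{lem R}.

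First, to see that the $w=-\frac{1}{3}$ case from \cite{re} violates the First Selfconsistence Condition, I would use (\ref{single a}): when $w_M=-\frac{1}{3}$ the background is $a(\tau)=e^{H_0(\tau-c)}$, which vanishes only in the limit $\tau\to-\infty$, so the initial condition $\lim_{\tau\to\tau_I}a(\tau)=0$ forces $\tau_I=-\infty$. Since the FSC requires strict inequality $w_M>-\frac{1}{3}$, this is the borderline case that just fails, and the failure mode is exactly $\tau_I=-\infty$.

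Second, to identify the source of the divergences in \cite{re} \S4.3, I would invoke Lemma \ref{lem R}. With $\tau_I=-\infty$, the limit $\lim_{x\to\tau_I}(\tau-x)S(x)$ is the product of a quantity that tends to $+\infty$ with an integral $S(x)=\int_{\tau_I}^x a(\tau')^2T(\tau')d\tau'$ of a positive integrand, so it is $+\infty$. Thus the boundary contribution $rv(r;\tau)|_{r=R(\tau)}$ diverges, and this is the only term in the decomposition $u(\tau)=\int_{-R}^R v\,dr-2[rv]_0^{R(\tau)}$ that does so (the integral gives the finite Fourier value $\hat v(0;\tau)$ by Lemma \ref{lem int}, and the $r=0$ endpoint vanishes by Lemma \ref{lem 0}). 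To complete the identification, I would re-derive the specific divergent expressions of \cite{re} \S4.3 starting from the present decomposition, specialising $H\equiv H_0$ and $T(\tau)=a(\tau)^{-3}$, and verify term by term that the infinities originate from $[rv]_{r=R(\tau)}$.

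Third, for the renormalization statements: the analytic continuation procedure of \cite{re} is recalled by noting that for $w\neq-\frac{1}{3}$ sufficiently close to the borderline one has $\tau_I>-\infty$ and $a^2T\in L^1_{\mathrm{loc}}$, so all the $u$s are finite and depend analytically on $w$; the continuation back to $w=-\frac{1}{3}$ then extracts a finite value, which is by construction what \cite{re} obtained. The proposed alternative renormalization, in which one simply sets $rv(r;\tau)|_{r=R(\tau)}\equiv 0$ and uses the ODEs (\ref{ODEs}) verbatim, is automatically consistent whenever the FSC holds (Lemma \ref{lem R} shows the discarded term is genuinely zero); for FSC-violating $w_M$ it is a subtraction prescription, which I would justify by showing that, in the $w=-\frac{1}{3}$ exponential background, the ODEs (\ref{ODEs}) become constant-coefficient and their explicit solution reproduces the analytically continued result of \cite{re}. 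The main obstacle I expect is precisely this last agreement check: matching the two renormalizations requires care with integration constants (fixed in the present paper by requiring $A,B,C$ to vanish at $\tau_I$, a condition that loses content as $\tau_I\to-\infty$) and with the order in which the limits $\omega\to 0$ and $\tau_I\to-\infty$ are taken.
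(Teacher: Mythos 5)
Your proposal is correct and follows essentially the same route as the paper: the Observation is a remark whose justification is exactly the combination you identify, namely that (\ref{single a}) forces $\tau_I=-\infty$ for $w_M=-\frac{1}{3}$ and Lemma \ref{lem R} then locates the divergence in the boundary term $rv(r;\tau)|_{r=R(\tau)}$, the other two contributions being finite by Lemmas \ref{lem int} and \ref{lem 0}. The additional checks you propose (re-deriving the explicit divergent expressions of \cite{re} \S4.3 and verifying that the subtraction prescription reproduces the analytic continuation) go beyond what the paper actually carries out — it asserts these points without proof — but they do not change the approach.
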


As long as the I SC holds, we can fix $\tau_I:=0$ without lost of generality.

\subsection{Decoupling}

As we say in Lemma \ref{lem2}, for general coefficients of (\ref{lin ein eq}) we have just an approximated solution
of $\langle C\rangle$. This is due to the coupling between $C$ and $A$. Another advantage of the single component
evolution is to allow the decoupling the PDEs of $A$ and $C$%
\begin{equation*}
	\begin{cases}
		\Box A-2\alpha\frac{\dot{A}}{\tau-c}-2\alpha(2\alpha+1)\frac{A}{(\tau-c)^2}=4\pi Ga^2\tilde{\rho} \\

		\Box C-2\alpha\frac{\dot{C}}{\tau-c}+2\alpha\frac{A}{(\tau-c)^2}=4\pi Ga^2\tilde{\rho}
	\end{cases}.
\end{equation*}%
Let $\alpha\neq-\frac{1}{2}$.\footnote{The case $\alpha\neq-\frac{1}{2}$ happens only for the exotic component $w=-\frac{5}{3}$.} Then it is convenient to
define the auxiliary field%
\begin{equation}
	D:=A+(2\alpha+1)C,
\end{equation}%
which must satisfy the PDE%
\begin{equation}
	\Box D-2\frac{\alpha}{\tau-c}\dot{D}=8(\alpha+1)\pi Ga^2\tilde{\rho}.
\end{equation}%
All the results in \S5 hold true for $D$, so that%
\begin{align}
	&\langle D\rangle=3\Omega_{IM0}H_0^2u_D(\tau) \quad s.t. \cr
	&\ddot{u}_D+2\frac{\alpha}{\tau-c}\dot{u}_D=-(\alpha+1)a(\tau)^2T(\tau).
\end{align}%
From these we get an exact formula for $\langle C\rangle$%
\begin{equation}
	\langle C\rangle=\frac{\langle D\rangle-\langle A\rangle}{2\alpha+1}=\frac{3}{2}\Omega_{IM0}H_0^2\frac{2u_D(\tau)-u_A(\tau)}{2\alpha+1}.
\end{equation}

\begin{oss}
	Notice that in the dark energy epoch $\alpha=-1$ and the ODE for $u_D$ is free
of source. However, this doesn't imply that $u_D$ is zero, thus in general $\langle C\rangle\neq\langle A\rangle$.
\end{oss}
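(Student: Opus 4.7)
The plan is to split the statement into two parts: (i) confirming that the ODE for $u_D$ in the dark energy epoch is source-free, and (ii) showing that this does \emph{not} force $u_D\equiv 0$, so that $\langle C\rangle$ and $\langle A\rangle$ differ in general.

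For (i), I would simply plug $w=-1$ into the definition $\alpha(w)=2/(1+3w)$ to get $\alpha=-1$, so the coefficient $-(\alpha+1)$ of the source $a^2T$ in the $u_D$ ODE vanishes, leaving the homogeneous equation $\ddot u_D-\tfrac{2}{\tau-c}\dot u_D=0$. Setting $v=\dot u_D$ this integrates to $v\propto(\tau-c)^2$, hence the general solution is $u_D(\tau)=K(\tau-c)^3+K'$ for two integration constants $K,K'$.

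For (ii), the key point is the gluing prescription stated at the beginning of \S6: since the ODE for $u_D$ is second order, the global $u_D$ is built by stitching the single-component solutions with $C^1$ regularity at the epoch boundaries. By the ordering in decreasing $w$, the dark energy epoch is preceded by the matter epoch, where the source $-(\alpha+1)a^2T$ with $\alpha(0)=2$ is nonzero and drives $u_D$ to a nontrivial value and slope. Matching these boundary data fixes $K,K'$, and generically neither vanishes, so $u_D\not\equiv 0$ during dark energy domination.

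The conclusion $\langle C\rangle\neq\langle A\rangle$ then follows immediately from the explicit formula derived just above the observation: evaluating $\langle C\rangle=\tfrac{3}{2}\Omega_{IM0}H_0^2(2u_D-u_A)/(2\alpha+1)$ at $\alpha=-1$ gives $\langle C\rangle=\tfrac{3}{2}\Omega_{IM0}H_0^2(u_A-2u_D)$, whence
\begin{equation*}
\langle C\rangle-\langle A\rangle=-3\Omega_{IM0}H_0^2\,u_D,
\end{equation*}
which is nonzero whenever $u_D$ is. The main obstacle is making the genericity of the matching data precise; a fully rigorous version would require explicitly solving the matter-epoch ODE and reading off its boundary values at the matter/dark-energy transition, but for the purpose of the observation it suffices to note that $u_D\equiv 0$ is a nongeneric initial condition, not a consequence of the dynamics.
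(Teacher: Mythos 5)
Your proposal is correct and fills in exactly the reasoning the paper leaves implicit in this Observation: at $\alpha=-1$ the equation for $u_D$ becomes the homogeneous ODE $\ddot u_D-\tfrac{2}{\tau-c}\dot u_D=0$ with general solution $K(\tau-c)^3+K'$ (matching the paper's $c_{D1}(H_0\tau)^{1-2\alpha}+c_{D2}$ at $\alpha=-1$), and the $C^1$ sticking to the preceding epoch (matter, or radiation in the two-epoch case) generically yields nonzero boundary data, hence $u_D\not\equiv0$ and $\langle C\rangle-\langle A\rangle=-3\Omega_{IM0}H_0^2u_D\neq0$ in general. This is essentially the same approach as the paper's, just made explicit.
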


\subsection{Solving the ODEs}

To solve (\ref{ODEs}) for a general $w$, we need the form of $T(\tau)$. We will assume%
\begin{equation}
	\delta_M\propto a(\tau)^n,
\end{equation}%
with $n(w)$ a regular function, of which we know $n(0)=1$ and $n(-1)=0$. This
assumption does not certainly hold for the radiation epoch ($w=\frac{1}{3}$), when%
\begin{equation}
	\delta_M\propto\ln(4a_R)-\ln a(\tau) \; \; \; s.t. \; \; \; a_R=\max\{a(\tau)|\bar{\Omega}_{R}(\tau)=\max\bar{\Omega}_w(\tau)\}.
\end{equation}%
Let us start by solving for $u_A$. In general, it has a term $u_{IA}$ generated by the source $-a(\tau)^2T(\tau)=-a(\tau)^{n-1}$,
and a term $u_{HA}$ without sources. They result to be%
\begin{align}
	u_{IA}(\tau)=&H_0^{-2}u_{A0}(H_0\tau)^{n_A} \quad s.t. \quad n_A=(n-1)\alpha+2 \cr
	&and \quad u_{A0}=-\frac{\alpha^{(1-n)\alpha}}{(n\alpha-\alpha+2)(n\alpha+\alpha+1)+2\alpha(2\alpha+1)}, \\
	u_{HA}(\tau)\propto&(H_0\tau)^{n_H} \quad s.t. \quad n_H^2+(2\alpha-1)n_H+(4\alpha^2+2\alpha)=0.
\end{align}%
The exponent of $u_{HA}$ is%
\begin{equation}
	n_H=\left(\frac{1}{2}-\alpha\right)\pm\sqrt{\frac{1}{4}-3\alpha-3\alpha^2}.
\end{equation}%
It has an imaginary part if and only if%
\begin{align*}
	\alpha&\in\left(-\infty; \alpha(w_+):=-\frac{1}{\sqrt{3}}-\frac{1}{2}\right)\sqcup\left(\alpha(w_-):=\frac{1}{\sqrt{3}}-\frac{1}{2};+\infty\right) \Leftrightarrow \cr
	\Leftrightarrow w&\in(w_-\cong-0.9521;w_+\cong8.2855).
\end{align*}

Because of the arbitrariness of the integration constants $c_1$ and $c_2$, we can write in general%
\begin{align}
\label{uA}
	H_0^2&u_A(\tau)=u_{A0}(H_0\tau)^{(n-1)\alpha+2}\cr
	&+\begin{cases}
		\left[c_{A1}\sin(\sqrt{\xi}\cdot\ln H_0\tau)+c_{A2}\cos(\sqrt{\xi}\cdot\ln H_0\tau)\right](H_0\tau)^{\frac{1}{2}-\alpha} & w\in(w_;w_+) \\
		\left[c_{A1}(H_0\tau)^{\sqrt{-\xi}}+c_{A2}(H_0\tau)^{-\sqrt{-\xi}}\right](H_0\tau)^{\frac{1}{2}-\alpha} & w\not\in(w_;w_+)
	\end{cases},
\end{align}%
where $\xi:=3\alpha^2+3\alpha-\frac{1}{4}$. The solution for $D$ is simpler.%
\begin{align}
	&H_0u_D(\tau)=u_{D0}(H_0\tau)^{(n-1)\alpha+2}+c_{D1}(H_0\tau)^{1-2\alpha}+c_{D2} \cr
	&s.t. \quad u_{D0}=-\frac{\alpha^{(1-n)\alpha}}{(n\alpha-\alpha+2)(n\alpha+\alpha+1)}.
\end{align}%
Using this, we get for $C$%
\begin{align}
\label{uC}
	H_0^2&u_C(\tau)=u_{C0}(H_0\tau)^{(n-1)\alpha+2}+c_{D1}(H_0\tau)^{1-2\alpha}+c_{D2} \cr
	&+\begin{cases}
		-\frac{1}{2\alpha+1}\left[c_{A1}\sin(\sqrt{\xi}\cdot\ln H_0\tau)+c_{A2}\cos(\sqrt{\xi}\cdot\ln H_0\tau)\right](H_0\tau)^{\frac{1}{2}-\alpha} & w\in(w_;w_+) \\
		-\frac{1}{2\alpha+1}\left[c_{A1}(H_0\tau)^{\sqrt{-\xi}}+c_{A2}(H_0\tau)^{-\sqrt{-\xi}}\right](H_0\tau)^{\frac{1}{2}-\alpha} & w\not\in(w_;w_+)
	\end{cases} \\
	&s.t. \quad u_{C0}:=\frac{2(\alpha+1)u_{D0}-u_{A0}}{2\alpha+1}.
\end{align}

The evolution of $B$ is determined by Lemma \ref{lem3}.%
\begin{align}
\label{uB}
	&u_B(\tau)=H_0^{-3}\frac{u_{C0}-u_{A0}}{(n+1)\alpha+3}(H_0\tau)^{(n-1)\alpha+3}+u_{HB}(\tau) \cr
	&s.t. \quad u_{HB}(\tau)=a(\tau)^{-2}\int a(\tau')^2\left(\frac{2u_{HD}(\tau')-u_{HA}(\tau')}{2\alpha+1}-u_{HA}(\tau')\right)d\tau'.
\end{align}

\subsection{Particular components}

In the following sections, we will need the single-component solutions for some particular components.%
The dark energy has $w=-1<w_-$. The perturbations evolve as%
\begin{align}
\label{u de}
	H_0^2u_A(\tau)&=u_{A0}|_{w=-1}(H_0\tau-H_0c_{\Lambda})^3+c_{A1\Lambda}(H_0\tau-H_0c_{\Lambda})^2+c_{A2\Lambda}(H_0\tau-H_0c_{\Lambda}); \cr
	H_0^2u_C(\tau)&=u_{C0}|_{w=-1}(H_0\tau-H_0c_{\Lambda})^3-c_{A1\Lambda}(H_0\tau-H_0c_{\Lambda})^2-c_{A2\Lambda}(H_0\tau-H_0c_{\Lambda}) \cr
	&+c_{D1\Lambda}(H_0\tau-H_0c_{\Lambda})^5+c_{D2\Lambda}; \cr
	H_0^3u_B(\tau)&=-2c_{A1\Lambda}(H_0\tau-H_0c_{\Lambda})^3-2c_{A2\Lambda}(H_0\tau-H_0c_{\Lambda})^2\ln|H_0\tau-H_0c_{\Lambda}| \cr
	&+\frac{1}{4}c_{D1\Lambda}(H_0\tau-H_0c_{\Lambda})^6-c_{D2\Lambda}(H_0\tau-H_0c_{\Lambda})+c_{B\Lambda}(H_0\tau-H_0c_{\Lambda})^2.
\end{align}%
Indeed, $\alpha(-1)=-1\Rightarrow\sqrt{\xi}=\frac{1}{2}$ and $n(-1)=0$, thus%
\begin{equation}
	u_{A0}|_{w=-1}=u_{C0}|_{w=-1}=\frac{1}{2}, \Rightarrow u_{C0}-u_{A0}|_{w=-1}=0.
\end{equation}%
The matter has $w=0\in(w_-;w_+)$. The perturbations evolve as%
\begin{align}
\label{u mat}
	H_0^2u_A(\tau)&=u_{A0}|_{w=0}(H_0\tau-H_0c_M)^2+(H_0\tau-H_0c_M)^{-\frac{3}{2}}[c_{A1M}\sin\left(\frac{\sqrt{71}}{2}\ln(H_0\tau-H_0c_M)\right) \cr
	&+c_{A2M}\cos\left(\frac{\sqrt{71}}{2}\ln(H_0\tau-H_0c_M)\right)]; \cr
	H_0^2u_C(\tau)&=u_{C0}|_{w=0}(H_0\tau-H_0c_M)^2+c_{D1M}(H_0\tau-H_0c_M)^{-3}+c_{D2M}+\frac{1}{5}(H_0\tau-H_0c_M)^{-\frac{3}{2}}\cdot \cr
	&\cdot\left[c_{A1M}\sin\left(\frac{\sqrt{71}}{2}\ln(H_0\tau-H_0c_M)\right)+c_{A2M}\cos\left(\frac{\sqrt{71}}{2}\ln(H_0\tau-H_0c_M)\right)\right]; \cr
	H_0^3u_B(\tau)&=\frac{u_{C0}-u_{A0}}{7}|_{w=0}(H_0\tau-H_0c_M)^3+\frac{1}{2}c_{D1M}(H_0\tau-H_0c_M)^{-2}+\frac{1}{5}c_{D2M}(H_0\tau-H_0c_M) \cr
	&-\frac{1}{50}(H_0\tau-H_0c_M)^{-\frac{1}{2}}[(3c_{A1M}+\frac{\sqrt{71}}{2}c_{A2M})\sin\left(\frac{\sqrt{71}}{2}\ln(H_0\tau-H_0c_M)\right) \cr
	&+(3c_{A2M}-\sqrt{71}c_{A1M})\cos\left(\frac{\sqrt{71}}{2}\ln(H_0\tau-H_0c_M)\right)]+c_{BM}(H_0\tau-H_0c_M)^{-4}.
\end{align}%
Indeed, $\alpha(0)=2\Rightarrow\sqrt{-\xi}=\frac{\sqrt{71}}{2}$ and $n(0)=1$, thus%
\begin{equation}
	u_{A0}|_{w=0}=-\frac{1}{30}, \; \; \; , u_{C0}|_{w=0}=-\frac{17}{150} \Rightarrow \frac{u_{C0}-u_{A0}}{7}|_{w=0}=-\frac{2}{175}.
\end{equation}%
For the peculiar evolution during the radiation epoch, we don't use $T=a^n$. The perturbations
evolve as%
\begin{align}
\label{u rad}
	H_0^2u_A(\tau)&=\frac{1}{8}(H_0\tau)\left[\ln\left(\frac{H_0\tau}{4a_R}-\frac{3}{8}\right) \right]+u_{HA}(\tau); \cr
	H_0^2u_C(\tau)&=\frac{1}{8}(H_0\tau)\left[5\ln\left(\frac{H_0\tau}{4a_R}-\frac{63}{8}\right) \right]+u_{HC}(\tau); \cr
	H_0^3u_B(\tau)&=\frac{1}{8}(H_0\tau)^2\left[\ln\left(\frac{H_0\tau}{4a_R}-\frac{17}{8}\right) \right]+u_{HB}(\tau).
\end{align}

\subsection{Other Selfconsistence Conditions}

Recalling our definition of a \lq\lq selfconsistent\rq\rq universe, the First Selfconsistence
Condition ensures that there exist finite solutions for $\langle A\rangle$, $\langle B\rangle$, $\langle C\rangle$. We must require also
that these solutions are unique and that they describe small enough perturbations. The initial
conditions for (\ref{lin ein eq}) were%
\begin{equation}
\label{ic}
	\lim_{\tau\rightarrow0}\langle A\rangle(\tau), \langle B\rangle(\tau), \langle C\rangle(\tau)=0.
\end{equation}%
These functions are described by (\ref{uA}), (\ref{uB}) and (\ref{uC}) accordingly to the dominating $w$ near
$\tau=0$, i.e. $w_M$. The initial conditions put some restraints on $w_M$ and on the integration constants: we
can satisfy (\ref{ic}) if all $u_{HA}, u_{HB}, u_{HC}\equiv0$, and also%
\begin{align*}
		0=&\lim_{\tau\rightarrow0}(H_0\tau)^{(n(w_M)-1)\alpha(w_M)+2}\propto\lim_{\tau\rightarrow0} a(\tau)^{n(w_M)+3w_M} \Leftrightarrow n(w_M)+3w_M>0 \cr
		0=&\lim_{\tau\rightarrow0}(H_0\tau)^{(n(w_M)-1)\alpha(w_M)+3}\propto\lim_{\tau\rightarrow0} a(\tau)^{n(w_M)+3w_M+\frac{1}{\alpha(w_M)}} \cr
		&\Leftrightarrow n(w_M)+3w_M+\frac{1}{\alpha(w_M)}>0.
\end{align*}%
For the I SC one has $\alpha(w_M)>0$, so that the first limit implies the second one.

\begin{thm} [Second Selfconsistence Condition]
	A selfconsistent choice of components must be such that $n(w_M)+3w_M>0$.
\end{thm}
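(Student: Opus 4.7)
The plan is to translate the initial conditions (\ref{ic}) into algebraic constraints on the exponents appearing in the single-component formulas (\ref{uA}), (\ref{uC}), (\ref{uB}), restricted to the component $w_M$ that dominates near $\tau_I=0$. Since the I SC is assumed to hold, $\alpha(w_M)>0$, so all the powers of $H_0\tau$ that appear must have strictly positive exponent in order to vanish in the limit.

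First I would handle the homogeneous pieces. The requirement that the linearized Einstein Equations admit a \emph{unique} solution (part of selfconsistency) means we cannot tolerate nonzero $u_{HA},u_{HC},u_{HB}$: these are exactly the undetermined modes, parametrized by the integration constants $c_{A1},c_{A2},c_{D1},c_{D2},c_B$. Uniqueness plus (\ref{ic}) force all these constants to zero, so only the inhomogeneous particular solutions survive. This simultaneously takes care of the $u_{HB}$ contribution in (\ref{uB}).

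Next I would look at the surviving particular pieces. For $u_A$ and $u_C$ the leading factor is $(H_0\tau)^{(n(w_M)-1)\alpha(w_M)+2}$ and for $u_B$ it is $(H_0\tau)^{(n(w_M)-1)\alpha(w_M)+3}$. Requiring both to tend to $0$ as $\tau\to 0$ gives the two conditions already displayed in the excerpt, which I rewrite using $\alpha=2/(1+3w)$:
\begin{align*}
(n(w_M)-1)\alpha(w_M)+2 &= \frac{2\bigl(n(w_M)+3w_M\bigr)}{1+3w_M}>0,\\
(n(w_M)-1)\alpha(w_M)+3 &= \frac{2\bigl(n(w_M)+3w_M\bigr)}{1+3w_M}+1>0.
\end{align*}
By the First Selfconsistence Condition $w_M>-1/3$, hence $1+3w_M>0$, so the first inequality collapses to $n(w_M)+3w_M>0$, and then $1/\alpha(w_M)=(1+3w_M)/2>0$ makes the second inequality automatic. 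This yields the assertion.

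The only delicate step is justifying that the homogeneous modes must be discarded and that no cancellation between homogeneous and particular pieces could rescue a borderline exponent; this is where the uniqueness clause of selfconsistency is doing the real work, combined with the linear independence of the power/log basis functions appearing in (\ref{uA})--(\ref{uB}). Everything else is an elementary power-counting check in the formulas already derived.
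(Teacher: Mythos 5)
Your proposal is correct and follows essentially the same route as the paper: impose the initial conditions (\ref{ic}) on the single-component solutions for $w_M$, discard the homogeneous modes, and read off positivity of the particular-solution exponents, with the $u_B$ condition automatic from $\alpha(w_M)>0$ via the I SC. The only cosmetic difference is that you invoke the uniqueness clause here to kill $u_{HA},u_{HB},u_{HC}$, whereas the paper simply sets them to zero at this stage and postpones the uniqueness question (whether (\ref{ic}) forces the integration constants to vanish) to the Third Selfconsistence Condition.
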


\begin{oss}
For a monotonically increasing $n(w)$, the II SC is equivalent to%
\begin{equation}
	w_M>w_0
\end{equation}%
for some limit value $w_0$. We can estimate it with a linear interpolation $n(w)\cong1+w$, that
gives $w_0\cong-\frac{1}{4}$. With more generality, remembering from \cite{re} that\footnote{$\phi\cong0.618...$ is the golden ratio.} $n(-\frac{1}{3})\in(\phi;1)$ and
that $n(0)=1$, we have%
\begin{equation}
	-\frac{1}{3}<w_0<0.
\end{equation}
\end{oss}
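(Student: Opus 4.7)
The plan is to reduce the Second Selfconsistence Condition to a threshold condition on $w_M$ by introducing the single real-valued function
\begin{equation*}
f(w) := n(w) + 3w,
\end{equation*}
so that the II SC becomes simply $f(w_M) > 0$. Since $n(w)$ is assumed regular and monotonically increasing and $w \mapsto 3w$ is strictly increasing, $f$ is continuous and strictly monotonically increasing. By the intermediate value theorem, the superlevel set $\{w : f(w) > 0\}$ is then exactly a half-line of the form $(w_0, +\infty)$, where $w_0$ is the unique root of $f$; this is precisely the stated equivalence, provided one can actually locate $w_0$ in the physical range of $w$.

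For the linear-interpolation estimate I would substitute $n(w) \cong 1 + w$ directly into $f$, obtaining $f(w) \cong 1 + 4w$, and then solve $f(w_0) = 0$, which immediately yields $w_0 \cong -1/4$.

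For the rigorous bracket $-1/3 < w_0 < 0$, I would evaluate $f$ at the two tabulated values recalled from \cite{re}. At $w = 0$ one has $f(0) = n(0) + 0 = 1 > 0$, so by the strict monotonicity of $f$ every root must lie to the left of $0$, giving $w_0 < 0$. At $w = -1/3$, using $n(-1/3) \in (\phi; 1)$ together with $\phi < 1$,
\begin{equation*}
f(-1/3) = n(-1/3) - 1 \in (\phi - 1,\; 0),
\end{equation*}
so in particular $f(-1/3) < 0$, and strict monotonicity forces $w_0 > -1/3$. Combining the two strict inequalities yields the claimed bound $-1/3 < w_0 < 0$.

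The argument presents no real obstacle: everything follows from elementary continuity and monotonicity combined with a single evaluation of $f$ at two explicit points. The only subtlety worth flagging is that exhibiting $f(-1/3) < 0 < f(0)$ simultaneously confirms that $f$ does change sign and that the threshold $w_0$ actually exists inside the interval $(-1/3, 0)$, so the degenerate cases in which $f$ would have constant sign throughout the range of interest are automatically excluded and need no separate discussion.
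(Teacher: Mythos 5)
Your argument is correct and follows the same reasoning the paper sketches for this Observation: reduce the II SC to the sign of $f(w)=n(w)+3w$, use strict monotonicity (inherited from the $3w$ term even if $n$ is only non-decreasing) to get a unique threshold $w_0$, solve $1+4w=0$ for the linear estimate, and bracket $w_0$ via $f(-\tfrac{1}{3})=n(-\tfrac{1}{3})-1<0<1=f(0)$. Nothing is missing.
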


The II SC is not necessary if $w_M=\frac{1}{3}$, for which always%
\begin{equation}
	\lim_{\tau\rightarrow0}(H_0\tau)\left[\ln\left(\frac{H_0\tau}{4a_R}-\frac{3}{8}\right)\right]=0;
\end{equation}%
and the same for $B$ and $C$.

\begin{cor}
	The II SC ensures that the perturbations are small near the Big Bang.
\end{cor}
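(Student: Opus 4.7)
The plan is to unpack the smallness condition $|\tilde{g}_{\mu\nu}|\ll|g_{\mu\nu}|$ via the newtonian-gauge form (\ref{newt g}) and reduce it to the claim that $\langle A\rangle(\tau),\langle B\rangle(\tau),\langle C\rangle(\tau)\to0$ as $\tau\to0$. This is essentially the initial condition (\ref{ic}) that motivated the II SC, so the task is to verify that the II SC together with the I SC is sufficient to realize (\ref{ic}).

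First I would translate the II SC into a statement about the exponents appearing in the particular solutions (\ref{uA}) and (\ref{uC}). A one-line algebraic check shows $(n-1)\alpha+2=2(n+3w)/(1+3w)$, and since the I SC ensures $1+3w_M>0$, the II SC is equivalent to $(n(w_M)-1)\alpha(w_M)+2>0$. Then during the earliest epoch, dominated by $w_M$, the leading particular terms in (\ref{uA}) and (\ref{uC}) behave like $(H_0\tau)^{(n-1)\alpha+2}$ and so vanish at $\tau=0$. The homogeneous pieces $u_{HA}, u_{HC}$ are killed by setting the corresponding integration constants to zero, which is the choice selected by the uniqueness requirement in the definition of a selfconsistent universe.

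The main obstacle is handling $\langle B\rangle$, which is defined by the integral (\ref{B}) rather than as a direct particular solution, so its decay at the origin must be read off from the integrand. I would insert the single-component asymptotics $a(\tau)\sim\tau^{\alpha}$ and $u_A,u_C,u_{AC}\sim\tau^{(n-1)\alpha+2}$, observe that the integrand in (\ref{B}) scales as $\tau'^{(n+1)\alpha+2}$ (integrable near zero because the I SC and II SC make the exponent positive), and conclude $u_B(\tau)\sim\tau^{(n-1)\alpha+3}$ after dividing by $a(\tau)^2$. Since $(n-1)\alpha+3>(n-1)\alpha+2>0$, also $\langle B\rangle\to0$.

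Finally I would cover the exceptional case $w_M=\frac{1}{3}$ pointed out just before the Corollary: the II SC is not available, but the explicit radiation-epoch formulas (\ref{u rad}) carry overall $(H_0\tau)$ or $(H_0\tau)^2$ prefactors that dominate the logarithms at zero, so $\langle A\rangle,\langle B\rangle,\langle C\rangle$ still vanish. Combining all cases, the averaged potentials $\langle\Psi\rangle=\langle A\rangle+\langle\dot B\rangle+H\langle B\rangle$ and $\langle\Phi\rangle=\langle C\rangle-H\langle B\rangle$ tend to zero at the Big Bang, so by continuity they stay arbitrarily small in a neighborhood of $\tau=0$, which via (\ref{newt g}) is exactly the required smallness of $|\tilde{g}_{\mu\nu}|$ relative to $|g_{\mu\nu}|$.
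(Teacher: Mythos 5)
Your argument is correct and follows essentially the same route as the paper: single-component power-law asymptotics for $u_A$, $u_C$, $u_B$ during the $w_M$ epoch, with the II SC (equivalently $(n-1)\alpha+2>0$, given the I SC) forcing the exponents to be positive, the homogeneous pieces removed by the uniqueness/initial conditions, and the radiation case $w_M=\tfrac{1}{3}$ handled separately through the explicit logarithmic formulas. One caution: your intermediate reduction to ``$\langle A\rangle,\langle B\rangle,\langle C\rangle\to0$'' is not by itself sufficient, since $H\sim\alpha/\tau$ diverges at the Big Bang; the paper's proof turns precisely on the check $H\langle B\rangle\propto\tau^{-1}(H_0\tau)^{(n-1)\alpha+3}\propto a^{n(w_M)+3w_M}\to0$ under the II SC, which does follow from the power law $u_B\sim\tau^{(n-1)\alpha+3}$ you derived (and similarly gives $\langle\dot B\rangle\to0$), so your conclusion stands once that step is made explicit.
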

\begin{proof}
	\begin{equation}
		\frac{|\langle\tilde{g}_{\mu\nu}\rangle|}{|\bar{g}_{\mu\nu}|}\propto|\langle\Psi\rangle|=|\langle\Phi\rangle|=|\langle C\rangle-H\langle B\rangle|\rightarrow^{\tau\rightarrow0}0\ll1;
	\end{equation}%
	where the second term is $H\langle B\rangle\propto\frac{1}{\tau}(H_0\tau)^{(n(w_M)-1)\alpha(w_M)+3}\propto a^{n(w_M)+3w_M}\rightarrow^{\tau\rightarrow0}0$ for
the II SC again.
\end{proof}

Do the initial conditions (\ref{ic}) fix uniquely $\langle A\rangle$, $\langle B\rangle$, $\langle C\rangle$? Not always. There are values
of $w$ for which $u_{HA}, u_{HB}, u_{HC}$ go to zero even if the integration constants are not fixed
to zero. Such cases are not selfconsistent, because the solutions are not unique. This is
forbidden by

\begin{thm} [Third Selfconsistence Condition]
	A selfconsistent choice of components must be such that $w_-<w_M\leq1$.
\end{thm}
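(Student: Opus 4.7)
The plan is to extract from the explicit single-component solutions (\ref{uA}), (\ref{uC}), (\ref{uB}) the precise condition under which the initial condition (\ref{ic}) forces every integration constant to zero; this is what selfconsistency (uniqueness) demands. During the first epoch, $w_M$ dominates and $\tau_I = 0$, so the asymptotic behavior at $\tau \to 0^+$ of each of $u_A$, $u_C$, $u_B$ splits into a particular piece, already killed by the II SC, and homogeneous pieces whose exponents are $n_H = (1/2 - \alpha)\pm\sqrt{-\xi}$ in the non-oscillatory regime, or $(H_0\tau)^{1/2-\alpha}$ times $\sin(\sqrt{\xi}\ln H_0\tau)$, $\cos(\sqrt{\xi}\ln H_0\tau)$ in the oscillatory regime. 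The problem therefore reduces to determining when these homogeneous pieces can tend to zero at $\tau \to 0^+$ without the constants $c_{A1}, c_{A2}, c_{D1}, c_{D2}$ (and the integration constant implicit in $u_B$) being zero.

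First I would analyze the oscillatory regime $w_M\in(w_-,w_+)$. The homogeneous part of $u_A$ is $(H_0\tau)^{1/2-\alpha}[c_{A1}\sin(\sqrt{\xi}\ln H_0\tau)+c_{A2}\cos(\sqrt{\xi}\ln H_0\tau)]$; since the bracket oscillates between bounded values that never vanish simultaneously, it tends to zero iff the prefactor $(H_0\tau)^{1/2-\alpha}\to 0$, i.e.\ iff $\alpha<1/2$, equivalently $w_M>1$. In that range the constants are free and uniqueness fails, so the III SC forces $\alpha\geq 1/2$, i.e.\ $w_M\leq 1$. The additional homogeneous pieces in $u_C$, namely $c_{D1}(H_0\tau)^{1-2\alpha}+c_{D2}$, reproduce exactly the same threshold: the additive constant $c_{D2}$ cannot vanish unless it is already zero, and $c_{D1}(H_0\tau)^{1-2\alpha}\to 0$ only when $\alpha\geq 1/2$.

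Next I would dispose of the non-oscillatory regime by using I SC. Since I SC gives $w_M>-1/3>w_-$, the only non-oscillatory possibility is $w_M\geq w_+$, which forces $\alpha\in\bigl(0,\tfrac{1}{\sqrt{3}}-\tfrac12\bigr]$ and in particular $\alpha<1/2$. Then $n_+=(1/2-\alpha)+\sqrt{-\xi}>0$, so the term $c_{A1}(H_0\tau)^{n_+}$ decays automatically, the initial condition imposes no constraint on $c_{A1}$, and uniqueness fails. Hence only the oscillatory regime is admissible, which means $w_-<w_M$; together with $w_M\leq 1$ from the previous step we obtain $w_-<w_M\leq 1$.

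The main obstacle is the bookkeeping for $u_B$ and $u_{AC}$, because $u_B$ comes from the integration (\ref{B}) which introduces a further integration constant (visible as $c_{B\Lambda}$, $c_{BM}$ in (\ref{u de}), (\ref{u mat})); one must verify that the associated power behaviour obeys the same threshold $\alpha\geq 1/2$ by direct substitution of the solutions for $u_A, u_{AC}, u_C$ into (\ref{B}). A second delicate point is the boundary case $w_M=1$, $\alpha=1/2$: the oscillatory prefactor $(H_0\tau)^{1/2-\alpha}$ becomes identically $1$, and $c_{A1}\sin(\sqrt{\xi}\ln H_0\tau)+c_{A2}\cos(\sqrt{\xi}\ln H_0\tau)$ still fails to tend to zero unless $c_{A1}=c_{A2}=0$, which justifies closing the upper inequality in the statement.
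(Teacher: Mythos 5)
Your proposal is correct and follows essentially the same route as the paper: analyse the homogeneous parts of the explicit single-component solutions as $\tau\to0$ and demand that the initial conditions pin every integration constant to zero, which in the oscillatory regime $w_M\in(w_-,w_+)$ gives the threshold $\alpha\geq\tfrac12$ (i.e.\ $w_M\leq1$, with the boundary case closed by the persistent oscillation), while in the non-oscillatory regime the positive exponent $\bigl(\tfrac12-\alpha\bigr)+\sqrt{-\xi}$ always leaves a free constant, excluding it (the paper runs this exponent argument for all $w_M\notin(w_-,w_+)$, whereas you dismiss $w_M\leq w_-$ via the I SC, which is equally valid). One minor slip in wording: with $c_{D1}\neq0$ the term $c_{D1}(H_0\tau)^{1-2\alpha}$ tends to zero precisely when $\alpha<\tfrac12$, not $\alpha\geq\tfrac12$ --- which is in fact what your argument needs, since the piece auto-decays and leaves $c_{D1}$ unconstrained exactly in the forbidden range $w_M>1$.
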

\begin{proof}
	Let us try any non zero choice for the integration constant, and check if nevertheless $u_{HA}$ tends
to zero; if it is the case, the corresponding value of $w$ will not be selfconsistent.%
	First, let us consider the case $w_M\in(w_-;w_+)$. Remembering (\ref{uA})%
	\begin{equation*}
		\lim_{\tau\rightarrow0}u_{HA}(\tau)=0 \Leftrightarrow \lim_{\tau\rightarrow0}(H_0\tau)^{\frac{1}{2}-\alpha(w_M)}=0 \Leftrightarrow \alpha(w_M)<\frac{1}{2} \Leftrightarrow w_M>1,
	\end{equation*}%
	otherwise the limit doesn't exixt because of oscillations. This forbids the values $w_M\in(1;w_+)$.%
	Considering now the case $w_M\not\in(w_-;w_+)$, it means $\alpha(w_M)\in\left[-\frac{1}{\sqrt{3}}-\frac{1}{2};\frac{1}{\sqrt{3}}-\frac{1}{2}\right]$, and in particular
$\alpha<\frac{1}{2}$. Recalling (\ref{uA}), for a choice $c_{A1}\neq0$, $c_{A2}=0$%
	\begin{align*}
		&\sqrt{\frac{1}{4}-3\alpha-3\alpha^2}+\frac{1}{2}-\alpha|_{w_M}>\frac{1}{2}-\alpha(w_M)>0 \cr
		&\Rightarrow \lim_{\tau\rightarrow0}(H_0\tau)^{\sqrt{\frac{1}{4}-3\alpha-3\alpha^2}+\frac{1}{2}-\alpha}|_{w_M}=0 \cr
		&\Rightarrow \lim_{\tau\rightarrow0}u_{HA}(\tau)=0,
	\end{align*}%
	and this is enough to forbid all the values $w_M\not\in(w_-;w_+)$.%
	For the allowed values $w_M\in(w_-;1]$, the integration constants for $u_C$ are fixed to zero
as well, since (\ref{uC}) has the same functional form of (\ref{uA}). From (\ref{uB}) and (\ref{ic}) we see that
also $u_{HB}$ is fixed to zero, so that the metric perturbations are unique.
\end{proof}

The Three Selfconsistence Conditions we proved allow only a \lq\lq selfconsistence interval\rq\rq~for the component dominating near the Big Bang:%
\begin{equation}
	w_M\in(w_0;1].
\end{equation}

\begin{oss}
	Our universe contains certainly radiation and matter as homogeneous
components, and probably dark energy. The biggest $w$ is that of radiation, and $w_M=\frac{1}{3}$
is included in the selfconsistence interval. This is not obvious. Some universes, as the
\lq\lq constant coefficient universe\rq\rq~studied in \cite{re}, break the Three Selfconsistence Conditions.
The selfconsistence of our universe provides some empirical reinforcement to our model.
\end{oss}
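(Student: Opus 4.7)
The plan is to verify the Observation by (i) identifying the relevant $w$ values for the known physical components of our universe, (ii) determining $w_M$, and (iii) checking the three Selfconsistence Conditions case by case, with a brief comparison to the counterexample from \cite{re}.

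First I would list the undisputed components: baryonic and dark matter both have $w=0$, radiation (photons and relativistic neutrinos) has $w=\tfrac{1}{3}$, and the candidate dark energy has $w=-1$. Taking $w_M:=\max\{w\}$ over the components actually present, one immediately reads off $w_M=\tfrac{1}{3}$, because radiation has the largest equation-of-state parameter among all components currently known to exist; this holds whether or not a $w=-1$ component is included.

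Next I would check the three conditions in turn. The First Selfconsistence Condition requires $w_M>-\tfrac{1}{3}$, which is trivially satisfied by $w_M=\tfrac{1}{3}$. The Second Selfconsistence Condition demands $n(w_M)+3w_M>0$; however, as the paper has already noted explicitly just after the statement of the II SC, this condition is vacuous in the radiation case because the source $T(\tau)$ is not of the form $a^n$ but carries the logarithmic factor $\ln(4a_R/a)$, and the direct limit computation $\lim_{\tau\to 0}(H_0\tau)\ln(H_0\tau/4a_R-3/8)=0$ (together with the analogous limits for $B$ and $C$) has been shown to give the correct vanishing initial data regardless. The Third Selfconsistence Condition requires $w_-<w_M\leq 1$; since $w_-\cong-0.9521<\tfrac{1}{3}\leq 1$, this holds as well. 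Therefore $w_M=\tfrac{1}{3}$ lies in the selfconsistence interval $(w_0;1]$.

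Finally, to substantiate the qualitative remark that this outcome is non-trivial, I would contrast with the toy model of \cite{re}, which fills the universe with a single exotic component at $w=-\tfrac{1}{3}$. For that model $\alpha(w_M)<0$ or $w_M=-\tfrac{1}{3}$, so $\tau_I=-\infty$ and the First SC fails; this is precisely the origin of the divergence of $rv(r;\tau)|_{r=R(\tau)}$ discussed earlier. Thus the realistic case works only because radiation is present and dominates at early times: without a component with $w>w_0$ (in particular, without radiation), the selfconsistence interval would be empty for our inventory of components. The main (and only) subtle point in the verification is the II SC, which is sidestepped precisely by the logarithmic form of $\delta_M$ during the radiation era; once that special form is invoked, the remaining checks reduce to the elementary numerical inequalities above.
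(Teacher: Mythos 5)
Your verification is correct and follows exactly the route the paper intends: read off $w_M=\tfrac{1}{3}$ from the physical inventory, check the I and III SC as elementary inequalities ($\tfrac{1}{3}>-\tfrac{1}{3}$ and $w_-\cong-0.9521<\tfrac{1}{3}\leq 1$), invoke the paper's own remark that the II SC is bypassed in the radiation epoch because of the logarithmic density contrast, and contrast with the $w=-\tfrac{1}{3}$ constant-coefficient universe of \cite{re}, which violates the I SC. Nothing is missing.
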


When I and III SC hold, the requirement of selfconsistence is reduced to asking
that the perturbations are small enough to neglect orders higher than the first. This
constitues a last Condition.

\begin{lem} [Fourth Selfconsistence Condition]
	A selfconsistent choice of components
must have an inhomogeneous matter such that $\Omega_{IM0}\ll\Omega_{TM0}$, and such that $\forall t\in[0;t_0]:|\langle\Psi\rangle|\ll\frac{1}{2}$.
\end{lem}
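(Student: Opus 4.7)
The plan is to derive both inequalities directly from unpacking the definition of a selfconsistent universe, using the structure of the averaged perturbed metric already obtained in the newtonian gauge. By definition, selfconsistence requires finiteness (granted by the First SC), uniqueness (granted by the Third SC), and the validity of the first-order expansion, namely $|\tilde g_{\mu\nu}|\ll|g_{\mu\nu}|$ together with $|\Omega_{IM0}|\ll\Omega_{TM0}$. The second of these is already exactly the first assertion of the Lemma, so nothing more has to be done there; it is the restriction imposed by the Cosmological Principle on the amplitude of the inhomogeneity generating all the perturbation fields.

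For the metric-smallness condition, I would start from the averaged metric in newtonian gauge (\ref{aver g}) and the background (4). Their difference is
\begin{equation*}
\langle\tilde g_{\mu\nu}\rangle=2a^2\,\mathrm{diag}(\langle\Psi\rangle,\langle\Phi\rangle\delta_{ij}).
\end{equation*}
Since the second gauge condition (\ref{gauge}) forces $\Psi\equiv\Phi$, the time-time and space-space components have the same amplitude. Taking ratios component by component with $\bar g_{\mu\nu}=a^2\mathrm{diag}(1,-\delta_{ij})$ gives
\begin{equation*}
\frac{|\langle\tilde g_{\mu\nu}\rangle|}{|\bar g_{\mu\nu}|}=2|\langle\Psi\rangle|,
\end{equation*}
so $|\tilde g_{\mu\nu}|\ll|g_{\mu\nu}|$ translates exactly into $|\langle\Psi\rangle|\ll\tfrac{1}{2}$.

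The only genuinely non-trivial point is that this bound must hold for every $t\in[0;t_0]$, not merely at the present. This is because the fictitious components in (\ref{sum ract}) arise from retarded integrals and from time derivatives of $\langle A\rangle, \langle B\rangle, \langle C\rangle$, which in turn (cf.\ Lemmas \ref{lem2} and \ref{lem3}) are accumulated contributions from every past epoch. If at some intermediate time the perturbation reached $|\langle\Psi\rangle|\sim 1$, the PDEs (\ref{lin ein eq}) would already be outside the linear regime, and the whole chain leading to (\ref{sum ract}) would break down. Conversely, the corollary proved just above using the II SC already guarantees $|\langle\Psi\rangle|\to 0$ as $\tau\to 0$, so the condition is automatic near the Big Bang; the non-trivial content of the Fourth SC is therefore the requirement that the perturbation does not grow past the threshold $\tfrac{1}{2}$ anywhere on the full interval.

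The main obstacle is not the algebraic manipulation, which is immediate, but rather the conceptual justification that selfconsistency must be demanded on the whole interval $[0;t_0]$: one must argue that the averaged linearized framework used to derive (\ref{sum ract}) implicitly assumed smallness at every intermediate time, since the Green-function representation of the $u$'s integrates the source all the way back to $\tau_I=0$. Once this is granted, the Lemma is simply a reformulation of the third bullet of the definition in the variables actually available after the gauge reduction.
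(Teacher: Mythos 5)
Your proposal is correct and follows essentially the same route as the paper: the first inequality is just the Cosmological Principle requirement from the definition of selfconsistence in \S 2.4, and the second comes from reading off the metric perturbation $2\langle\Psi\rangle=2\langle\Phi\rangle$ in (\ref{aver g}) and demanding it be much smaller than $1$, with $\Psi\equiv\Phi$ making one condition suffice. Your additional remarks on why the bound must hold over all of $[0;t_0]$ are a reasonable elaboration but not something the paper's (very short) proof spells out.
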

\begin{proof}
	The first requirement on $\Omega_{IM0}$ is the same we asked in \S2.4. The other requirement
is evident from (\ref{aver g}), where $2\langle\Psi\rangle$, $2\langle\Phi\rangle$ are the perturbations of the metric. They must
be smaller than $1$, and, since $\Psi=\Phi$, it is sufficient to impose it just for one of them.
\end{proof}

This is no more a Condition on $w$, but on $\Omega_{IM0}$, so that the selfconsistence interval
remains the same. Indeed, $\langle A\rangle$, $\langle B\rangle$, $\langle C\rangle$ are proportional to $\Omega_{IM0}$, and so are $\langle\Psi\rangle$, $\langle\Phi\rangle$:
the IV SC defines a maximum value $\Omega^M_{IM0}$ for the inhomogeneity.

\begin{oss}
	Notice that the IV SC does not imply the II SC, since in the limit case
$w_M=w_0$, $\langle\Psi\rangle$ doesn't tend to zero, but, nevertheless, it could be small.
\end{oss}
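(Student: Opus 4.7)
The plan is to evaluate the $\tau\to 0$ asymptotics of $\langle\Psi\rangle=\langle C\rangle-H\langle B\rangle$ in the single-component regime dominated by $w_M$, then specialize to the boundary value $w_M=w_0$ determined implicitly by $n(w_0)+3w_0=0$. Because the III SC already forces $u_{HA},u_{HC},u_{HB}\equiv 0$, the only surviving contributions to $\langle A\rangle,\langle B\rangle,\langle C\rangle$ near the Big Bang are the particular, source-driven pieces of formulas (\ref{uA}), (\ref{uC}), (\ref{uB}). These scale as $(H_0\tau)^{(n-1)\alpha+2}$ for $\langle A\rangle,\langle C\rangle$ and as $(H_0\tau)^{(n-1)\alpha+3}$ for $\langle B\rangle$. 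Using $a\propto\tau^{\alpha(w_M)}$ with $\alpha(w_M)=2/(1+3w_M)$ and therefore $2/\alpha=1+3w_M$, an elementary conversion gives
\begin{equation*}
(H_0\tau)^{(n-1)\alpha+2}\propto a^{n(w_M)+3w_M}, \qquad H\,(H_0\tau)^{(n-1)\alpha+3}\propto (H_0\tau)^{(n-1)\alpha+2}\propto a^{n(w_M)+3w_M},
\end{equation*}
so that both $\langle C\rangle$ and $H\langle B\rangle$, and hence $\langle\Psi\rangle$ itself, scale as $a^{n(w_M)+3w_M}$ in the asymptotic region.

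The second step is to specialize to $w_M=w_0$, where by definition $n(w_0)+3w_0=0$. The exponent above vanishes, so $\langle C\rangle\to c_C\cdot\Omega_{IM0}$ and $H\langle B\rangle\to c_B\cdot\Omega_{IM0}$ as $\tau\to 0$, with $c_B,c_C$ determined by the coefficients $u_{A0},u_{C0},u_{D0}$ of \S6.3 evaluated at $w_0$. Generically $c_B\neq c_C$, so $\psi_0:=\lim_{\tau\to 0}\langle\Psi\rangle=(c_C-c_B)\,\Omega_{IM0}\neq 0$, contrary to the ``perturbations vanish at the Big Bang'' conclusion that would follow from the II SC. This establishes the first half of the observation, namely that the II SC genuinely fails on the boundary $w_M=w_0$.

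For the second half, note that $\psi_0$ is linear in $\Omega_{IM0}$, and in fact $\langle\Psi\rangle$ is proportional to $\Omega_{IM0}$ on the whole interval $[0;t_0]$ by Lemmas \ref{lem2} and \ref{lem3}. Hence, choosing $\Omega_{IM0}$ small enough yields $|\langle\Psi\rangle|\ll\frac{1}{2}$ uniformly in $\tau$, and the IV SC is satisfied even though the II SC is not. The main obstacle I anticipate is purely bookkeeping: one has to verify that the generic nonvanishing $c_C-c_B\neq 0$ is not wiped out by an accidental cancellation when the specific $w_0$ is plugged into the formulas of \S6.3. Even were such a cancellation to occur, it would force $\psi_0=0$ identically in $\Omega_{IM0}$ rather than produce the vanishing $\psi_0\to 0$ required by the II SC's conclusion, so the logical independence of the two conditions would remain intact.
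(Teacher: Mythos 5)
Your proposal is correct and follows essentially the same route as the paper: the paper's justification (embedded in the proof of the preceding Corollary) likewise reduces $\langle\Psi\rangle=\langle C\rangle-H\langle B\rangle$ to the source-driven terms scaling as $a^{n(w_M)+3w_M}$, observes that at $w_M=w_0$ this exponent vanishes so $\langle\Psi\rangle$ tends to a nonzero constant rather than to zero, and notes that this constant is proportional to $\Omega_{IM0}$ and hence can still satisfy the IV SC. Your extra remark about a possible accidental cancellation $c_C=c_B$ is a reasonable bit of due diligence the paper omits, but it does not change the argument.
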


\section{A model for the real universe}

\subsection{The 1-manifold of possible universes}

Until now, our computations concerned a general choice of components $\{\bar{\Omega}_{w0}\}_w$, for which
we found the Selfconsistence Conditions. Now we will apply this general method to our
universe.

It contains just three components: radiation $\Omega_{R0}$, matter $\Omega_{M0}$ and dark energy $\Omega_{\Lambda0}$.
These are fixed by the measures of $\Omega_{R0}$, of $q_0=\Omega_{R0}+\frac{1}{2}\Omega_{M0}-\Omega_{\Lambda0}$ and of the space flatness \cite{Spergel:2003cb}
$1=1-\Omega_{k0}=\Omega_{R0}+\Omega_{M0}+\Omega_{\Lambda0}$. The background components are as well $\bar{\Omega}_{R0}$, $\bar{\Omega}_{M0}$ and
$\bar{\Lambda0}$, on which the model puts the restraints%
\begin{equation}
	\begin{cases}
		\left(\frac{H_0}{\bold{H}_0}\tilde{a}_0\right)^2\bar{\Omega}_{R0}=\Omega_{R0} \\

		\Omega_{FM0}+\left(\frac{H_0}{\bold{H}_0}\tilde{a}_0\right)^2\bar{\Omega}_{M0}+\Omega_{IM0}=\Omega_{M0} \\

		\Omega_{F\Lambda0}+\left(\frac{H_0}{\bold{H}_0}\tilde{a}_0\right)^2\bar{\Omega}_{\Lambda0}=\Omega_{\Lambda0}
	\end{cases}.
\end{equation}%
Notice that these are not independent, since $\bar{\Omega}_{R0}+\bar{\Omega}_{M0}+\bar{\Omega}_{\Lambda0}:=1$. We have only two
independent constraints from%
\begin{equation}
\label{restr}
	\begin{cases}
		[1-(sum+1)\Omega_{IM0}]\bar{\Omega}_{R0}=\Omega_{R0} \\

		\frac{2sum+ract}{3}\Omega_{IM0}+[1-(sum+1)\Omega_{IM0}]\bar{\Omega}_{M0}+\Omega_{IM0}=\Omega_{M0} \\

		\frac{sum-ract}{3}\Omega_{IM0}+[1-(sum+1)\Omega_{IM0}]\bar{\Omega}_{\Lambda0}=\Omega_{\Lambda0}=1-\Omega_{R0}-\Omega_{M0}
	\end{cases}.
\end{equation}%
However, we have three unknown parameters: the inhomogeneity $\Omega_{IM0}$ and other two
among $\bar{\Omega}_{R0}$, $\bar{\Omega}_{M0}$ and $\bar{\Omega}_{\Lambda0}$. This means that the components of our universe are not
completely determined by (\ref{restr}), but we will find more possible solutions, when a parameter
changes. We choose $\bar{\Omega}_{M0}\in[0;1]$ as parameter, with $\Omega_{\Lambda0}(\Omega_{R0}; \Omega_{M0})=1-\Omega_{R0}-\Omega_{M0}$,%
\begin{equation}
	\Omega_{IM0}(\Omega_{R0}; \Omega_{M0})=\frac{1}{1+sum(\Omega_{R0}; \Omega_{M0})}\left(1-\frac{\Omega_{R0}}{\bar{\Omega}_{R0}}\right)
\end{equation}%
and $\bar{\Omega}_{R0}=\bar{\Omega}_{R0}(\bar{\Omega}_{M0})$ is determined by the last independent constraint of (\ref{restr}).

We will have to check which of these values of $\bar{\Omega}_{M0}$ gives
selfconsistent (i.e., if for them hold the IV SC and that $\Omega_{IM0}\ll\Omega_{TM0}$), acceptable
and evetually good solutions.

\subsection{Epochs of evolution}

Applying (\ref{bg comp}),%
\begin{equation*}
		\bar{\Omega}_R=\bar{\Omega}_{R0}a^{-4}, \quad \bar{\Omega}_M=\bar{\Omega}_{R0}a^{-3}, \quad \bar{\Omega}_{\Lambda}\equiv\bar{\Omega}_{\Lambda0}.
\end{equation*}%
So we can get the values of $a$ for which the matter starts to be more than the radiation,
and the same for other couples%
\begin{align}
		\bar{\Omega}_R\geq\bar{\Omega}_M &\Leftrightarrow a\leq a_{RM}:=a(\tau_{RM})=\frac{\bar{\Omega}_{R0}}{\bar{\Omega}_{M0}} \cr
		\bar{\Omega}_M\geq\bar{\Omega}_{\Lambda} &\Leftrightarrow a\leq a_{M\Lambda}:=a(\tau_{M\Lambda})=\sqrt[3]{\frac{\bar{\Omega}_{M0}}{\bar{\Omega}_{\Lambda0}}} \cr
		\bar{\Omega}_R\geq\bar{\Omega}_{\Lambda} &\Leftrightarrow a\leq a_{R\Lambda}:=a(\tau_{R\Lambda})=\sqrt[4]{\frac{\bar{\Omega}_{R0}}{\bar{\Omega}_{\Lambda0}}}.
\end{align}

The evolution of the universe until now is for $0\leq a\leq a_0\cong1$. During this time, there
may have been three or two epochs, depending on the values $\bar{\Omega}_{R0}$, $\bar{\Omega}_{M0}$ and $\bar{\Omega}_{\Lambda0}$.

\begin{lem}
\label{epoch lem}
	A selfconsistent background evolution can be divided in epochs in the following ways.
	\begin{itemize}
		\item If $a_{RM}<a_{M\Lambda}<a_0$, then there are three epochs: for radiation $[0;\tau_{RM}]$, matter
$[\tau_{RM};\tau_{M\Lambda}]$, and dark energy $[\tau_{M\Lambda};\tau(t_0)]$.

		\item If the first inequality does not hold, then there are just two epochs: radiation $[0;\tau_{R\Lambda}]$
and dark energy $[\tau_{R\Lambda};\tau(t_0)]$.

		\item If the second inequality does not hold, then there are just two epochs: radiation
$[0;\tau_{RM}]$ and matter $[\tau_{RM};\tau(t_0)]$.
	\end{itemize}
\end{lem}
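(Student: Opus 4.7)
The plan is to reduce the statement to an elementary comparison of the three monotone functions $\bar{\Omega}_R(a) = \bar{\Omega}_{R0} a^{-4}$, $\bar{\Omega}_M(a) = \bar{\Omega}_{M0} a^{-3}$, $\bar{\Omega}_\Lambda(a) = \bar{\Omega}_{\Lambda 0}$. Because the First Selfconsistence Condition guarantees that (\ref{bg evol}) yields a strictly monotone $a(\tau)$ on $(\tau_I,\tau_F)$, intervals in $\tau$ correspond bijectively to intervals in $a$, so it suffices to identify, for each $a \in (0, a_0]$, which component is largest. The only values of $a$ where two of them coincide are precisely $a_{RM}$, $a_{M\Lambda}$, $a_{R\Lambda}$ given in the text.

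First I would observe that since the exponents $-4 < -3 < 0$ are strictly ordered, the dominance sequence as $a$ grows can only be a subsequence of $R, M, \Lambda$: radiation wins as $a \to 0^+$ (the $a^{-4}$ beats $a^{-3}$ and a constant), and dark energy wins as $a \to \infty$. The only question is whether matter enjoys a nontrivial dominance interval, and which transitions occur before $a_0$.

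Second I would show that matter dominates on some interval iff $a_{RM} < a_{M\Lambda}$: the condition $\bar{\Omega}_M > \bar{\Omega}_R$ rewrites as $a > a_{RM}$ and $\bar{\Omega}_M > \bar{\Omega}_\Lambda$ as $a < a_{M\Lambda}$, so both can hold only when $a_{RM} < a_{M\Lambda}$. Conversely, assuming $a_{RM} \geq a_{M\Lambda}$ means $\bar{\Omega}_{R0}^3 \bar{\Omega}_{\Lambda 0} \geq \bar{\Omega}_{M0}^4$; a direct substitution then gives $\bar{\Omega}_M(a_{R\Lambda}) \leq \bar{\Omega}_R(a_{R\Lambda}) = \bar{\Omega}_\Lambda$, so matter never overtakes the other two and the only transition is R $\to$ $\Lambda$ at $a = a_{R\Lambda}$.

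Third, I would conclude by splitting into the three cases. If $a_{RM} < a_{M\Lambda} < a_0$, the previous step yields the three epochs R on $(0,a_{RM}]$, M on $[a_{RM},a_{M\Lambda}]$, $\Lambda$ on $[a_{M\Lambda},a_0]$, each pulled back to $\tau$ via the bijection. If the first inequality fails, matter is never dominant and we get exactly the two epochs R and $\Lambda$ joined at $a_{R\Lambda}$. If the second inequality fails, then $a_{M\Lambda} \geq a_0$ means $\Lambda$ has not yet taken over by the present time; combined with the implicit case assumption $a_{RM} < a_0$ (otherwise we would never have left the radiation epoch, a degenerate sub-case excluded by the stated classification), one reads off the two epochs R and M joined at $a_{RM}$. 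The main delicate point is this boundary-case bookkeeping, i.e.\ verifying that the three cases of the Lemma cover every selfconsistent background without overlap in the observed regime $a \in (0, a_0]$; once this is sorted, everything else follows from monotone power-law comparison.
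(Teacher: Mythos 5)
Your monotone power-law comparison is sound, and it actually fills in details the paper takes for granted: the identity $a_{R\Lambda}^4=a_{RM}\,a_{M\Lambda}^3$ (so $a_{R\Lambda}$ sits between $a_{RM}$ and $a_{M\Lambda}$), the equivalence of a matter epoch with $a_{RM}<a_{M\Lambda}$, and the pull-back from $a$ to $\tau$ via monotonicity are all correct and consistent with how the paper organizes the epochs. So far, so good.

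The genuine gap is your treatment of the radiation-only case. You dismiss $a_{RM}\geq a_0$ (and, in your second bullet, you implicitly need $a_{R\Lambda}<a_0$ as well, since if $a_{RM}\geq a_{M\Lambda}\geq a_0$ then also $a_{R\Lambda}\geq a_0$ and no transition has yet occurred) as ``a degenerate sub-case excluded by the stated classification''. That is circular: the Lemma \emph{is} the classification, and its hypothesis is precisely that the background evolution is selfconsistent, so the exclusion must be \emph{derived} from selfconsistence, not read off from the conclusion. This is exactly where the paper's own proof does its only real work: a universe still radiation-dominated today would require $\bar{\Omega}_{R0}$ comparable to or larger than $\bar{\Omega}_{M0}$, whereas the measured radiation density is tiny compared to the observed (baryonic) matter \cite{Spergel:2003cb}; such a small homogeneous matter component would force $\Omega_{IM0}\cong\Omega_{TM0}$, violating the Cosmological Principle requirement $|\Omega_{IM0}|\ll\Omega_{TM0}$ built into the definition of a selfconsistent choice. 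Adding this argument (invoking the selfconsistence hypothesis plus the observational input on $\Omega_{R0}$ and $\Omega_{BM0}$) closes your case bookkeeping; without it, your proof establishes the trichotomy only under the unproven extra assumption that at least one transition occurs before $a_0$.
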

\begin{proof}
	Since the radiation exists, we know $\bar{\Omega}_{R0}>0$, so that $a_{RM}>0$ and $a_{R\Lambda}>0$ for any
values of $\bar{\Omega}_{M0}$, $\bar{\Omega}_{\Lambda0}$. Thus, we have always a radiation epoch, which is the first one
after the Big Bang. The presence of other epochs depends on our parameter: the quantity of homogeneous matter $\bar{\Omega}_{M0}$.

	We will not consider the case with only the radiation epoch, because it would mean
that $a_{RM}, a_{R\Lambda}\geq a_0\cong1$, which appens for high values of $\bar{\Omega}_{R0}$; but we know from the measures \cite{Spergel:2003cb}
that the radiation is far more less than the matter. Moreover, if the homogeneous matter
would be so little, it would mean that $\Omega_{IM0}\cong\Omega_{TM0}$, that is not selfconsistent.
\end{proof}

Let us consider the two cases with a matter epoch. From (\ref{single a}) we get the background evolution%
\begin{equation}
	a(\tau)=\begin{cases}
		H_0\tau & \tau\in[0;\tau_{RM}] \\

		\frac{H_0^2}{4}(\tau-c_M)^2 & \tau\in[\tau_{RM};\tau_{M\Lambda}] \\

		\frac{1}{H_0(c_{\Lambda}-\tau)} & \tau\in[\tau_{M\Lambda};\tau_F]
	\end{cases},
\end{equation}%
where the continuity determines%
\begin{align}
		H_0\tau_{RM}=a_{RM} &\Rightarrow H_0c_M=a_{RM}-2\sqrt{a_{RM}}, \cr
		H_0\tau_{M\Lambda}=2\sqrt{a_{M\Lambda}}+H_0c_M,& \quad H_0c_{\Lambda}=H_0\tau_{M\Lambda}+4(H_0\tau_{M\Lambda}-H_0c_M)^{-2}.
\end{align}

On the other hand, in the case such that there is no matter epoch, the evolution is%
\begin{equation}
	a(\tau)=\begin{cases}
		H_0\tau & \tau\in[0;\tau_{R\Lambda}] \\

		\frac{1}{H_0(c_R-\tau)} & \tau\in[\tau_{R\Lambda};\tau_F]
	\end{cases},
\end{equation}%
where the continuity determines%
\begin{equation}
	H_0\tau_{R\Lambda}=a_{R\Lambda} \Rightarrow H_0c_R=a_{R\Lambda}+\frac{1}{a_{R\Lambda}}.
\end{equation}

\subsection{Three epochs}

From the results of previous sections, now we can get the formulas for the evaluation of
fictitious dark matter and dark energy. They are different for the three cases described in
Lemma \ref{epoch lem}. Let us start with the case where we have all the three epochs.

Applying (\ref{u rad}), (\ref{u mat}) and (\ref{u de}), we get the evolution of $\langle A\rangle$%
\begin{equation}
	H_0^2u_A(\tau)=\begin{cases}
		\frac{1}{8}(H_0\tau)\left[\ln\left(\frac{H_0\tau}{4a_{RM}}\right)-\frac{3}{8}\right] & \tau\in[0;\tau_{RM}] \\

		-\frac{1}{30}(H_0\tau-H_0c_M)^2 \\
		\quad +(H_0\tau-H_0c_M)^{-\frac{3}{2}}[c_{A1M}\sin\left(\frac{\sqrt{71}}{2}\ln(H_0\tau-H_0c_M)\right) \\
		\quad +c_{A2M}\cos\left(\frac{\sqrt{71}}{2}\ln(H_0\tau-H_0c_M)\right)] & \tau\in[\tau_{RM};\tau_{M\Lambda}] \\

		\frac{1}{2}(H_0\tau-H_0c_{\Lambda})^3+c_{A1\Lambda}(H_0\tau-H_0c_{\Lambda})^2 \\
		\quad +c_{A2\Lambda}(H_0\tau-H_0c_{\Lambda}) & \tau\in[\tau_{M\Lambda};\tau(t_0)]
	\end{cases}
\end{equation}%
where the $C^1$ regularity fixes the integration constants $c_{A1M}$, $c_{A2M}$ s.t.%
\begin{equation}
\begin{cases}
	-\frac{1}{8}\left(\frac{3}{8}+\ln4\right)(H_0\tau_{RM})=-\frac{1}{30}(H_0\tau_{RM}-H_0c_M)^2 \\
	\quad +(H_0\tau_{RM}-H_0c_M)^{-\frac{3}{2}}[c_{A1M}\sin\left(\frac{\sqrt{71}}{2}\ln(H_0\tau_{RM}-H_0c_M)\right) \\
	\quad +c_{A2M}\cos\left(\frac{\sqrt{71}}{2}\ln(H_0\tau_{RM}-H_0c_M)\right)] \\

	\frac{1}{8}\left(\frac{5}{8}-\ln4\right)=-\frac{1}{15}(H_0\tau_{RM}-H_0c_M)+(H_0\tau_{RM}-H_0c_M)^{-\frac{5}{2}}\cdot \\
	\quad \cdot[c_{A1M}\left[-\frac{3}{2}\sin\left(\frac{\sqrt{71}}{2}\ln(H_0\tau_{RM}-H_0c_M)\right)+\frac{\sqrt{71}}{2}\cos\left(\frac{\sqrt{71}}{2}\ln(H_0\tau_{RM}-H_0c_M)\right)\right] \\
	\quad +c_{A2M}\left[-\frac{3}{2}\cos\left(\frac{\sqrt{71}}{2}\ln(H_0\tau_{RM}-H_0c_M)\right)-\frac{\sqrt{71}}{2}\sin\left(\frac{\sqrt{71}}{2}\ln(H_0\tau_{RM}-H_0c_M)\right)\right]]
\end{cases}
\end{equation}%
(the left hand terms have been simplified using $a_{RM}=H_0\tau_{RM}$), and $c_{A1\Lambda}$, $c_{A2\Lambda}$ s.t.%
\begin{equation}
\begin{cases}
	-\frac{1}{30}(H_0\tau_{M\Lambda}-H_0c_M)^2+(H_0\tau_{M\Lambda}-H_0c_M)^{-\frac{3}{2}}\cdot \\
	\quad \cdot\left[c_{A1M}\sin\left(\frac{\sqrt{71}}{2}\ln(H_0\tau_{M\Lambda}-H_0c_M)\right)+c_{A2M}\cos\left(\frac{\sqrt{71}}{2}\ln(H_0\tau_{M\Lambda}-H_0c_M)\right)\right] \\
	\quad =\frac{1}{2}(H_0\tau_{M\Lambda}-H_0c_{\Lambda})^3+c_{A1\Lambda}(H_0\tau_{M\Lambda}-H_0c_{\Lambda})^2+c_{A2\Lambda}(H_0\tau_{M\Lambda}-H_0c_{\Lambda}) \\

	-\frac{1}{15}(H_0\tau_{M\Lambda}-H_0c_M)-\frac{1}{2}(H_0\tau_{M\Lambda}-H_0c_M)^{-\frac{5}{2}}\cdot \\
	\quad \cdot[(3c_{A1M}+\sqrt{71}c_{A2M})\sin\left(\frac{\sqrt{71}}{2}\ln(H_0\tau_{M\Lambda}-H_0c_M)\right) \\
	\quad +(3c_{A2M}-\sqrt{71}c_{A1M})\cos\left(\frac{\sqrt{71}}{2}\ln(H_0\tau_{M\Lambda}-H_0c_M)\right)] \\
	\quad =\frac{3}{2}(H_0\tau_{M\Lambda}-H_0c_{\Lambda})^2+2c_{A1\Lambda}(H_0\tau_{M\Lambda}-H_0c_{\Lambda})+c_{A2\Lambda}
\end{cases}
\end{equation}%
the evolution of $\langle C\rangle$%
\begin{equation}
H_0^2u_C(\tau)=\begin{cases}
	\frac{1}{8}(H_0\tau)\left[5\ln\left(\frac{H_0\tau}{4a_{RM}}\right)-\frac{63}{8}\right] & \tau\in[0;\tau_{RM}] \\

	-\frac{17}{150}(H_0\tau-H_0c_M)^2+c_{D1M}(H_0\tau-H_0c_M)^{-3}+c_{D2M} \\
	\quad \frac{1}{5}(H_0\tau-H_0c_M)^{-\frac{3}{2}}[c_{A1M}\sin\left(\frac{\sqrt{71}}{2}\ln(H_0\tau-H_0c_M)\right) \\
	\quad +c_{A2M}\cos\left(\frac{\sqrt{71}}{2}\ln(H_0\tau-H_0c_M)\right)] & \tau\in[\tau_{RM};\tau_{M\Lambda}] \\

	\frac{1}{2}(H_0\tau-H_0c_{\Lambda})^3-c_{A1\Lambda}(H_0\tau-H_0c_{\Lambda})^2 \\
	\quad -c_{A2\Lambda}(H_0\tau-H_0c_{\Lambda})+c_{D1\Lambda}(H_0\tau-H_0c_{\Lambda})^5+c_{D2\Lambda} & \tau\in[\tau_{M\Lambda};\tau(t_0)]
\end{cases}
\end{equation}%
where the $C^1$ regularity fixes the integration constants $c_{D1M}$, $c_{D2M}$ s.t.%
\begin{equation}
\begin{cases}
	-\frac{1}{8}\left(\frac{63}{8}+5\ln4\right)(H_0\tau_{RM})=-\frac{17}{150}(H_0\tau_{RM}-H_0c_M)^2 \\
	\quad +c_{D1M}(H_0\tau_{RM}-H_0c_M)^{-3}+c_{D2M}+\frac{1}{5}(H_0\tau_{RM}-H_0c_M)^{-\frac{3}{2}}\cdot \\
	\quad \cdot\left[c_{A1M}\sin\left(\frac{\sqrt{71}}{2}\ln(H_0\tau_{RM}-H_0c_M)\right)+c_{A2M}\cos\left(\frac{\sqrt{71}}{2}\ln(H_0\tau_{RM}-H_0c_M)\right)\right] \\

	-\frac{1}{8}\left(\frac{23}{8}+5\ln4\right)=-\frac{17}{75}(H_0\tau_{RM}-H_0c_M)-3c_{D1M}(H_0\tau_{RM}-H_0c_M)^{-4} \\
	\quad -\frac{1}{10}(H_0\tau_{RM}-H_0c_M)^{-\frac{5}{2}}[(3c_{A1M}+\sqrt{71}c_{A2M})\sin\left(\frac{\sqrt{71}}{2}\ln(H_0\tau_{RM}-H_0c_M)\right) \\
	\quad +(3c_{A2M}-\sqrt{71}c_{A1M})\cos\left(\frac{\sqrt{71}}{2}\ln(H_0\tau_{RM}-H_0c_M)\right)]
\end{cases}
\end{equation}%
and $c_{D1\Lambda}$, $c_{D2\Lambda}$ s.t.%
\begin{equation}
\begin{cases}
	-\frac{17}{150}(H_0\tau_{M\Lambda}-H_0c_M)^2+\frac{1}{5}(H_0\tau_{M\Lambda}-H_0c_M)^{-\frac{3}{2}}[c_{A1M}\sin\left(\frac{\sqrt{71}}{2}\ln(H_0\tau_{M\Lambda}-H_0c_M)\right) \\
	\quad +c_{A2M}\cos\left(\frac{\sqrt{71}}{2}\ln(H_0\tau_{M\Lambda}-H_0c_M)\right)]+c_{D1M}(H_0\tau_{M\Lambda}-H_0c_M)^{-3}+c_{D2M} \\
	\quad =\frac{1}{2}(H_0\tau_{M\Lambda}-H_0c_{\Lambda})^3-c_{A1\Lambda}(H_0\tau_{M\Lambda}-H_0c_{\Lambda})^2-c_{A2\Lambda}(H_0\tau_{M\Lambda}-H_0c_{\Lambda}) \\
	\quad +c_{D1\Lambda}(H_0\tau_{M\Lambda}-H_0c_{\Lambda})^5+c_{D2\Lambda} \\

	-\frac{17}{75}(H_0\tau_{M\Lambda}-H_0c_M)-3c_{D1M}(H_0\tau_{M\Lambda}-H_0c_M)^{-4}-\frac{1}{10}(H_0\tau_{M\Lambda}-H_0c_M)^{-\frac{5}{2}}\cdot \\
	\quad \cdot[(3c_{A1M}+\sqrt{71}c_{A2M})\sin\left(\frac{\sqrt{71}}{2}\ln(H_0\tau_{M\Lambda}-H_0c_M)\right) \\
	\quad +(3c_{A2M}-\sqrt{71}c_{A1M})\cos\left(\frac{\sqrt{71}}{2}\ln(H_0\tau_{M\Lambda}-H_0c_M)\right)]=\frac{3}{2}(H_0\tau_{M\Lambda}-H_0c_{\Lambda})^2 \\
	\quad -2c_{A1\Lambda}(H_0\tau_{M\Lambda}-H_0c_{\Lambda})-c_{A2\Lambda}+5c_{D1\Lambda}(H_0\tau_{M\Lambda}-H_0c_{\Lambda})^4
\end{cases}
\end{equation}%
and the evolution of $\langle B\rangle$%
\begin{equation}
	H_0^3u_B(\tau)=\begin{cases}
		\frac{1}{8}(H_0\tau)^2\left[\ln\left(\frac{H_0\tau}{4a_{RM}}\right)-\frac{17}{8}\right] & \tau\in[0;\tau_{RM}] \\

		-\frac{2}{175}(H_0\tau-H_0c_M)^3+\frac{1}{2}c_{D1M}(H_0\tau-H_0c_M)^{-2} \\
		\quad +\frac{1}{5}c_{D2M}(H_0\tau-H_0c_M)-\frac{1}{50}(H_0\tau-H_0c_M)^{-\frac{1}{2}}\cdot \\
		\quad \cdot[(3c_{A1M}+\frac{\sqrt{71}}{2}c_{A2M})\sin\left(\frac{\sqrt{71}}{2}\ln(H_0\tau-H_0c_M)\right) \\
		\quad +(3c_{A2M}-\sqrt{71}c_{A1M})\cos\left(\frac{\sqrt{71}}{2}\ln(H_0\tau-H_0c_M)\right)] \\
		\quad +c_{BM}(H_0\tau-H_0c_M)^{-4} & \tau\in[\tau_{RM};\tau_{M\Lambda}] \\

		-2c_{A1\Lambda}(H_0\tau-H_0c_{\Lambda})^3-2c_{A2\Lambda}(H_0\tau-H_0c_{\Lambda})^2\ln|H_0\tau-H_0c_{\Lambda}| \\
		\; \; +\frac{1}{4}c_{D1\Lambda}(H_0\tau-H_0c_{\Lambda})^6-c_{D2\Lambda}(H_0\tau-H_0c_{\Lambda})+c_{B\Lambda}(H_0\tau-H_0c_{\Lambda})^2 & \tau\in[\tau_{M\Lambda};\tau(t_0)]
	\end{cases}.
\end{equation}%
where the continuity fixes the integration constants $c_{BM}$, $c_{B\Lambda}$ s.t.%
\begin{equation}
\begin{cases}
	-\frac{1}{8}\left(\frac{17}{8}+\ln4\right)(H_0\tau_{RM})^2=-\frac{2}{175}(H_0\tau_{RM}-H_0c_M)^3+\frac{1}{2}c_{D1M}(H_0\tau_{RM}-H_0c_M)^{-2} \\
	\quad +\frac{1}{5}c_{D2M}(H_0\tau_{RM}-H_0c_M)-\frac{1}{50}(H_0\tau_{RM}-H_0c_M)^{-\frac{1}{2}}\cdot \\
	\quad \cdot[(3c_{A1M}+\sqrt{71}c_{A2M})\sin\left(\frac{\sqrt{71}}{2}\ln(H_0\tau_{RM}-H_0c_M)\right) \\
	\quad +(3c_{A2M}-\sqrt{71}c_{A1M})\cos\left(\frac{\sqrt{71}}{2}\ln(H_0\tau_{RM}-H_0c_M)\right)]+c_{BM}(H_0\tau_{RM}-H_0c_M)^{-4} \\

	-\frac{2}{175}(H_0\tau_{M\Lambda}-H_0c_M)^3+\frac{1}{2}c_{D1M}(H_0\tau_{M\Lambda}-H_0c_M)^{-2}+\frac{1}{5}c_{D2M}(H_0\tau_{M\Lambda}-H_0c_M) \\
	\quad -\frac{1}{50}(H_0\tau_{M\Lambda}-H_0c_M)^{-\frac{1}{2}}[(3c_{A1M}+\sqrt{71}c_{A2M})\sin\left(\frac{\sqrt{71}}{2}\ln(H_0\tau_{M\Lambda}-H_0c_M)\right) \\
	\quad +(3c_{A2M}-\sqrt{71}c_{A1M})\cos\left(\frac{\sqrt{71}}{2}\ln(H_0\tau_{M\Lambda}-H_0c_M)\right)]+c_{BM}(H_0\tau_{M\Lambda}-H_0c_M)^{-4} \\
	\quad =-2c_{A1\Lambda}(H_0\tau_{M\Lambda}-H_0c_{\Lambda})^{3}-2c_{A2\Lambda}(H_0\tau_{M\Lambda}-H_0c_{\Lambda})^{2}\ln|H_0\tau_{M\Lambda}-H_0c_{\Lambda}| \\
	\quad +\frac{1}{4}c_{D1\Lambda}(H_0\tau_{M\Lambda}-H_0c_{\Lambda})^{6}-c_{D2\Lambda}(H_0\tau_{M\Lambda}-H_0c_{\Lambda})+c_{B\Lambda}(H_0\tau_{M\Lambda}-H_0c_{\Lambda})^{2}
\end{cases}
\end{equation}

Now, in order to get the fictitious components $\Omega_{FM0}, \Omega_{F\Lambda0}$ we need only to apply the formulas
(\ref{sum ract}). Recalling%
\begin{equation}
	\frac{H'_0}{H_0}=-\bar{q}_0=-\frac{1}{2}\sum_w(1+3w)\bar{\Omega}_{w0}=-\bar{\Omega}_{R0}-\frac{1}{2}\bar{\Omega}_{M0}+\bar{\Omega}_{\Lambda0}=1-2\bar{\Omega}_{R0}-\frac{3}{2}\bar{\Omega}_{M0}.
\end{equation}%
In a similar way, we can evaluate the term $\frac{(H'_0)^2}{H_0^2}+\frac{H''_0}{H_0}$ from (\ref{bg evol})%
\begin{equation*}
	H(a)^2=H_0^2\sum_w\bar{\Omega}_{w0}a^{-1-3w} \Rightarrow 2HH''+2(H')^2=H_0^2\sum_w(-1-3w)(-2-3w)a^{-3-3w} \Rightarrow
\end{equation*}
\begin{equation}
	\frac{(H'_0)^2}{H_0^2}+\frac{H''_0}{H_0}=\frac{1}{2}\sum_w(1+3w)(2+3w)\bar{\Omega}_{w0}=3\bar{\Omega}_{R0}+\bar{\Omega}_{M0}+\bar{\Omega}_{\Lambda0}=1+2\bar{\Omega}_{R0}.
\end{equation}%
Thus, (\ref{sum ract}) become%
\begin{equation}
\label{sumract}
	\begin{cases}
		\frac{1}{2}(sum+1)\Omega_{IM0}=-\langle A\rangle_0-(2\bar{\Omega}_{R0}+\frac{3}{2}\bar{\Omega}_{M0})H_0\langle B\rangle_0-\langle C\rangle'_0 \\

		\frac{1}{2}(ract+1)\Omega_{IM0}=\langle A\rangle'_0+2\langle C\rangle'_0+\langle C\rangle''_0+(1-6\bar{\Omega}_{R0}-3\bar{\Omega}_{M0})H_0\langle B\rangle_0 \\
		\quad +(1-2\bar{\Omega}_{R0}-\frac{3}{2}\bar{\Omega}_{M0})(2\langle A\rangle_0-\langle C\rangle_0+\langle C\rangle'_0)
	\end{cases}.
\end{equation}%
Here, all perturbations are evaluated today, when the dark energy dominates%
\begin{equation*}
	a(\tau)=\frac{1}{H_0(c_{\Lambda}-\tau)} \Rightarrow H_0\tau-H_0c_{\Lambda}=-a^{-1}
\end{equation*}

\begin{equation*}
	\langle A\rangle=\frac{3}{2}\Omega_{IM0}[\frac{1}{2}(-a^{-1})^3+c_{A1\Lambda}(-a^{-1})^2+c_{A2\Lambda}(-a^{-1})] \Rightarrow
\end{equation*}
\begin{equation}
\label{A0 de}
	\langle A\rangle_0=\frac{3}{2}\Omega_{IM0}[-\frac{1}{2}+c_{A1\Lambda}-c_{A2\Lambda}], \; \; \; \langle A\rangle'_0=\frac{3}{2}\Omega_{IM0}[\frac{3}{2}-2c_{A1\Lambda}+c_{A2\Lambda}];
\end{equation}

\begin{align*}
	\langle B\rangle&=\frac{3}{2}\frac{\Omega_{IM0}}{H_0}[-2c_{A1\Lambda}(-a^{-1})^{3}-2c_{A2\Lambda}(-a^{-1})^{2}\ln|-a^{-1}| \cr
	&+\frac{1}{4}c_{D1\Lambda}(-a^{-1})^{6}-c_{D2\Lambda}(-a^{-1})+c_{B\Lambda}(-a^{-1})^{2}] \Rightarrow
\end{align*}
\begin{equation}
\label{B0 de}
	H_0\langle B\rangle_0=\frac{3}{2}\Omega_{IM0}[2c_{A1\Lambda}+\frac{1}{4}c_{D1\Lambda}+c_{D2\Lambda}+c_{B\Lambda}];
\end{equation}

\begin{equation*}
	\langle C\rangle=\frac{3}{2}\Omega_{IM0}[\frac{1}{2}(-a^{-1})^3-c_{A1\Lambda}(-a^{-1})^2-c_{A2\Lambda}(-a^{-1})+c_{D1\Lambda}(-a^{-1})^5+c_{D2\Lambda}] \Rightarrow
\end{equation*}
\begin{align}
\label{C0 de}
	\langle C\rangle_0&=\frac{3}{2}\Omega_{IM0}[-\frac{1}{2}-c_{A1\Lambda}+c_{A2\Lambda}-c_{D1\Lambda}+c_{D2\Lambda}], \cr
	\langle C\rangle'_0&=\frac{3}{2}\Omega_{IM0}[\frac{3}{2}+2c_{A1\Lambda}-c_{A2\Lambda}+5c_{D1\Lambda}], \cr
	\langle C\rangle''_0&=\frac{3}{2}\Omega_{IM0}[-6-6c_{A1\Lambda}+2c_{A2\Lambda}-30c_{D1\Lambda}].
\end{align}%
Replacing (\ref{A0 de}), (\ref{B0 de}) and (\ref{C0 de}) inside (\ref{sumract}), we obtain \emph{ract} and \emph{sum}.

\subsection{No matter epoch}

For different values of $\bar{\Omega}_{R0}, \bar{\Omega}_{M0}$, we would have only the radiation and dark energy epochs.
Applying (\ref{u rad}), (\ref{u mat}) and (\ref{u de}), we get the evolution of $\langle A\rangle$%
\begin{equation}
	H_0^2u_A(\tau)=\begin{cases}
		\frac{1}{8}(H_0\tau)\left[\ln\left(\frac{H_0\tau}{4a_{RM}}\right)-\frac{3}{8}\right] & \tau\in[0;\tau_{R\Lambda}] \\

		\frac{1}{2}(H_0\tau-H_0c_R)^3+c_{A1\Lambda}(H_0\tau-H_0c_R)^2 \\
		\quad +c_{A2\Lambda}(H_0\tau-H_0c_R) & \tau\in[\tau_{R\Lambda};\tau(t_0)]
	\end{cases}
\end{equation}%
where the $C^1$ regularity fixes the integration constants $c_{A1\Lambda}$, $c_{A2\Lambda}$ s.t.%
\begin{equation}
\begin{cases}
	-\frac{1}{8}\left(\frac{3}{8}+\ln4\right)(H_0\tau_{R\Lambda})=\frac{1}{2}(H_0\tau_{R\Lambda}-H_0c_R)^3 \\
	\quad +c_{A1\Lambda}(H_0\tau_{R\Lambda}-H_0c_R)^2+c_{A2\Lambda}(H_0\tau_{R\Lambda}-H_0c_R) \\

	\frac{1}{8}\left(\frac{5}{8}-\ln4\right)=\frac{3}{2}(H_0\tau_{R\Lambda}-H_0c_R)^2+2c_{A1\Lambda}(H_0\tau_{R\Lambda}-H_0c_R)+c_{A2\Lambda}
\end{cases}
\end{equation}%
For the evolution of $\langle C\rangle$ we get%
\begin{equation}
H_0^2u_C(\tau)=\begin{cases}
	\frac{1}{8}(H_0\tau)\left[5\ln\left(\frac{H_0\tau}{4a_{RM}}\right)-\frac{63}{8}\right] & \tau\in[0;\tau_{RM\Lambda}] \\

	\frac{1}{2}(H_0\tau-H_0c_R)^3-c_{A1\Lambda}(H_0\tau-H_0c_R)^2 \\
	\quad -c_{A2\Lambda}(H_0\tau-H_0c_R)+c_{D1\Lambda}(H_0\tau-H_0c_R)^5+c_{D2\Lambda} & \tau\in[\tau_{R\Lambda};\tau(t_0)]
\end{cases}
\end{equation}%
where the $C^1$ regularity fixes the integration constants $c_{D1\Lambda}$, $c_{D2\Lambda}$ s.t.%
\begin{equation}
\begin{cases}
	-\frac{1}{8}\left(\frac{63}{8}+5\ln4\right)(H_0\tau_{R\Lambda})=\frac{1}{2}(H_0\tau_{R\Lambda}-H_0c_R)^3-c_{A1\Lambda}(H_0\tau_{R\Lambda}-H_0c_R)^2 \\
	\quad -c_{A2\Lambda}(H_0\tau_{R\Lambda}-H_0c_R)+c_{D1\Lambda}(H_0\tau_{R\Lambda}-H_0c_R)^5+c_{D2\Lambda} \\

	-\frac{1}{8}\left(\frac{23}{8}+5\ln4\right)=\frac{3}{2}(H_0\tau_{R\Lambda}-H_0c_R)^2-2c_{A1\Lambda}(H_0\tau_{R\Lambda}-H_0c_R) \\
	\quad -c_{A2\Lambda}+5c_{D1\Lambda}(H_0\tau_{R\Lambda}-H_0c_R)^4
\end{cases}
\end{equation}%
For the evolution of $\langle B\rangle$ we have%
\begin{equation}
	H_0^3u_B(\tau)=\begin{cases}
		\frac{1}{8}(H_0\tau)^2\left[\ln\left(\frac{H_0\tau}{4a_{RM}}\right)-\frac{17}{8}\right] & \tau\in[0;\tau_{R\Lambda}] \\

		-2c_{A1\Lambda}(H_0\tau-H_0c_R)^3 \\
		\quad-2c_{A2\Lambda}(H_0\tau-H_0c_R)^2\ln|H_0\tau-H_0c_R| \\
		\quad +\frac{1}{4}c_{D1\Lambda}(H_0\tau-H_0c_R)^6-c_{D2\Lambda}(H_0\tau-H_0c_R) \\
		\quad +c_{B\Lambda}(H_0\tau-H_0c_R)^2 & \tau\in[\tau_{R\Lambda};\tau(t_0)]
	\end{cases}
\end{equation}%
where the continuity fixes the integration constant $c_{B\Lambda}$ s.t.%
\begin{align}
	-\frac{1}{8}\left(\frac{17}{8}+\ln4\right)(H_0\tau_{R\Lambda})^2&=-2c_{A1\Lambda}(H_0\tau_{R\Lambda}-H_0c_R)^{3} \cr
	&-2c_{A2\Lambda}(H_0\tau_{R\Lambda}-H_0c_R)^{2}\ln|H_0\tau_{R\Lambda}-H_0c_R| \cr
	&+\frac{1}{4}c_{D1\Lambda}(H_0\tau_{R\Lambda}-H_0c_R)^{6}-c_{D2\Lambda}(H_0\tau_{R\Lambda}-H_0c_R) \cr
	&+c_{B\Lambda}(H_0\tau_{R\Lambda}-H_0c_R)^{2}
\end{align}

All perturbations are evaluated today, when it dominates the dark energy, as in
\S7.2. Substituting (\ref{A0 de}), (\ref{B0 de}) and (\ref{C0 de}) inside (\ref{sumract}), we obtain \emph{ract} and \emph{sum}
as well.

\subsection{No dark energy epoch}

A last possibility is that there are only the radiation and matter epochs. Applying (\ref{u rad}),
(\ref{u mat}) and (\ref{u de}), we get the same evolutions of $\langle A\rangle$, $\langle B\rangle$ and $\langle C\rangle$ as in \S7.3, just without the last
parts. All perturbations are evaluated today, when the matter dominates.%
\begin{align*}
	\langle A\rangle&=\frac{3}{2}\Omega_{IM0}[-\frac{1}{30}(2a^{1/2})^2+(2a^{1/2})^{-\frac{3}{2}}\cdot \cr
	&\cdot\left[c_{A1M}\sin\left(\frac{\sqrt{71}}{2}\ln(2a^{1/2})\right)+c_{A2M}\cos\left(\frac{\sqrt{71}}{2}\ln(2a^{1/2})\right)\right]] \Rightarrow
\end{align*}
\begin{align}
\label{A0 mat}
	\langle A\rangle_0&=\frac{3}{2}\Omega_{IM0}[-\frac{2}{15}+\frac{c_{A1M}}{2\sqrt{2}}\sin\left(\frac{\sqrt{71}}{2}\ln2\right)+\frac{c_{A2M}}{2\sqrt{2}}\cos\left(\frac{\sqrt{71}}{2}\ln2\right)], \cr
	\langle A\rangle'_0&=\frac{3}{2}\Omega_{IM0}[-\frac{2}{15}-\frac{3c_{A1M}+\sqrt{71}}{8\sqrt{2}}\sin\left(\frac{\sqrt{71}}{2}\ln2\right) \cr
	&-\frac{3c_{A2M}-\sqrt{71}}{8\sqrt{2}}\cos\left(\frac{\sqrt{71}}{2}\ln2\right)];
\end{align}

\begin{align*}
	\langle B\rangle&=\frac{3}{2}\frac{\Omega_{IM0}}{H_0}[-\frac{2}{175}(2a^{1/2})^3+\frac{1}{2}c_{D1M}(2a^{1/2})^{-2} \cr
	&+\frac{1}{5}c_{D2M}(2a^{1/2})-\frac{1}{50}(2a^{1/2})^{-\frac{1}{2}}[(3c_{A1M}+\sqrt{71}c_{A2M})\sin\left(\frac{\sqrt{71}}{2}\ln(2a^{1/2})\right) \cr
	&+(3c_{A2M}-\sqrt{71}c_{A1M})\cos\left(\frac{\sqrt{71}}{2}\ln(2a^{1/2})\right)]+c_{BM}(2a^{1/2})^{-4}] \Rightarrow
\end{align*}
\begin{align}
\label{B0 mat}
	H_0\langle B\rangle_0&=\frac{3}{2}\Omega_{IM0}[-\frac{16}{175}+\frac{1}{8}c_{D1M}+\frac{2}{5}c_{D2M}-\frac{3c_{A1M}+\sqrt{71}c_{A2M}}{50\sqrt{2}}\sin\left(\frac{\sqrt{71}}{2}\ln2\right) \cr
	&-\frac{3c_{A2M}-\sqrt{71}c_{A1M}}{50\sqrt{2}}\cos\left(\frac{\sqrt{71}}{2}\ln2\right)+\frac{1}{16}c_{BM}];
\end{align}

\begin{align*}
	\langle C\rangle&=\frac{3}{2}\Omega_{IM0}[-\frac{17}{150}(2a^{1/2})^2+c_{D1M}(2a^{1/2})^{-3}+c_{D2M}+ \cr
	&+\frac{1}{5}(2a^{1/2})^{-\frac{3}{2}}\left[c_{A1M}\sin\left(\frac{\sqrt{71}}{2}\ln(2a^{1/2})\right)+c_{A2M}\cos\left(\frac{\sqrt{71}}{2}\ln(2a^{1/2})\right)\right]] \Rightarrow
\end{align*}
\begin{align}
\label{C0 mat}
	\langle C\rangle_0&=\frac{3}{2}\Omega_{IM0}[-\frac{34}{75}+\frac{1}{8}c_{D1M}+c_{D2M}+\frac{c_{A1M}}{10\sqrt{2}}\sin\left(\frac{\sqrt{71}}{2}\ln2\right) \cr
	&+\frac{c_{A2M}}{10\sqrt{2}}\cos\left(\frac{\sqrt{71}}{2}\ln2\right)], \cr
	\langle C\rangle'_0&=\frac{3}{2}\Omega_{IM0}[-\frac{34}{75}-\frac{3}{16}c_{D1M}-\frac{3c_{A1M}+\sqrt{71}}{40\sqrt{2}}\sin\left(\frac{\sqrt{71}}{2}\ln2\right) \cr
	&-\frac{3c_{A2M}-\sqrt{71}}{40\sqrt{2}}\cos\left(\frac{\sqrt{71}}{2}\ln2\right)], \cr
	\langle C\rangle''_0&=\frac{3}{2}\Omega_{IM0}[\frac{15}{32}c_{D1M}+\frac{21c_{A1M}+12c_{A2M}+3\sqrt{71}}{160\sqrt{2}}\sin\left(\frac{\sqrt{71}}{2}\ln2\right) \cr
	&+\frac{21c_{A2M}-12c_{A1M}-5\sqrt{71}}{160\sqrt{2}}\cos\left(\frac{\sqrt{71}}{2}\ln2\right)].
\end{align}%
Substituting (\ref{A0 mat}), (\ref{B0 mat}) and (\ref{C0 mat}) inside (\ref{sumract}), we obtain \emph{ract} and \emph{sum}.

\section{Evaluations of dark matter and dark energy}

Now we employ the most recent measures of the cosmological parameters. The space
flatness is confirmed by \cite{Spergel:2003cb}%
\begin{equation}
	\Omega_{tot}=1-\Omega_{k0}=1.02\pm0.02.
\end{equation}%
Thus we can assume $\Omega_{R0}+\Omega_{M0}+\Omega_{\Lambda0}=\Omega_{tot}:=1$. We have also
\begin{equation}
	\quad \Omega_{R0}=8.24\cdot10^{-5}\pm10^{-7} \cite{Spergel:2003cb}, \quad \Omega_{M0}=0.315\pm0.007 \cite{measures}.
\end{equation}%
These are coherent with
\begin{equation}
	\Omega_{\Lambda0}=0.685\pm0.007, q_0=-0.527\pm0.0105.
\end{equation}%
Moreover,%
\begin{equation}
	\Omega_{BM0}=0.0486\pm0.0010 \cite{measures} \Rightarrow \Omega_{DM0}=0.266\pm0.008.
\end{equation}%
The fraction of matter unexplained by the CSM is $84.57\%$.

\subsection{Searching for good solutions}

For any value of the free parameter $\bar{\Omega}_{M0}$, we can get a numerical solution of $\bar{\Omega}_{R0}$. Following
\S7.2 we have, for any chosen value, the evolution of $a(\tau)$, and from the formulas of \S7.3,
7.4 or 7.5 the quantities \emph{ract} and \emph{sum}, and thus $\Omega_{IM0}$, $\Omega_{FM0}$ and $\Omega_{F\Lambda0}$.

Imposing (\ref{restr}), there could be one or more solutions for $\bar{\Omega}_{R0}$, or no one, depending on
$\bar{\Omega}_{M0}$. For any solution, we have to check if it is acceptable. The selfconsistence checking
will require to compute the evolution of $\langle\Psi\rangle$, since we have to find that its maximum is
less than $\frac{1}{2}$. Getting a set of selfconsistent and acceptable solutions, we will seek if some
of them are also good.

Applying this planwork with a numerical algorithm, we find that for a generic $\bar{\Omega}_{M0}$
there are up to two acceptable values of $\bar{\Omega}_{R0}$. E.g. we can find%
\begin{equation}
	\bar{\Omega}_{R0}|_{\bar{\Omega}_{M0}=0.5}\cong\begin{cases}
		10.97\cdot10^{-5} \\

		0.2216
	\end{cases}.
\end{equation}%
The set of solutions with $\bar{\Omega}_{R0}\sim10^{-4}$ have a radiation density quite near to the value of
the CSM. We can call them the \lq\lq principal\rq\rq~solutions, and \lq\lq secondary\rq\rq~solutions the others.
Indeed, following these solutions with continuity, for $\bar{\Omega}_{M0}=0.315\cong\Omega_{M0}$ we find trivially
\begin{equation}
	\Omega_{IM0}=0 \Rightarrow \Omega_{FM0}=\Omega_{F\Lambda0}=0, \bar{\Omega}_{w0}\equiv\Omega_{w0}.
\end{equation}%
The secondary solutions are not selfconsistent, since all of them have $\Omega_{IM0}>99\%\cdot\Omega_{TM0}$,
so that they break the Cosmological Principle. Moreover, the secondary solutions have
quite big perturbations $2\max\langle\Psi\rangle>0.5$: they are smaller than $1$ anyway, but not small enough.

On the other hand, the principal solutions are selfconsistent. $\frac{\Omega_{IM0}}{\Omega_{TM0}}$ becomes greater as
$\bar{\Omega}_{M0}$ runs away from $\Omega_{M0}$, but it is always less than $45\%$. It is the same for $2\max\langle\Psi\rangle$, which
is always smaller than $0.28\ll1$. \\
However, most of the principal solutions are not good. We find just a little interval around
$\bar{\Omega}_{M0}\cong0.2$ for which are explained some fraction of both dark matter and dark energy.
For the values $\bar{\Omega}_{M0}=0.2, \bar{\Omega}_{R0}\cong10.03\cdot10^{-5}, \bar{\Omega}_{\Lambda0}\cong0.8$ it is%
\begin{align}
	&\Omega_{IM0}\cong-0.0316, \quad \Omega_{TM0}\cong0.1327 \Rightarrow |\Omega_{IM0}|\cong23.8\%\cdot\Omega_{TM0}\ll\Omega_{TM0}, \cr
	&2\max\langle\Psi\rangle\cong0.0485\ll1, \cr
	&\Omega_{FM}\cong0.1823 \Rightarrow \Omega_{TDM0}\cong0.0841\cong63.39\%\cdot\Omega_{TM0}, \cr
	&\Omega_{F\Lambda}\cong0.0276 \Rightarrow \Omega_{T\Lambda0}\cong0.6572\cong95.96\%\cdot\Omega_{\Lambda0}.
\end{align}%

\begin{figure}[ht]
	\begin{minipage}[b]{1.0\linewidth}
		\centering
		\includegraphics[width=\textwidth]{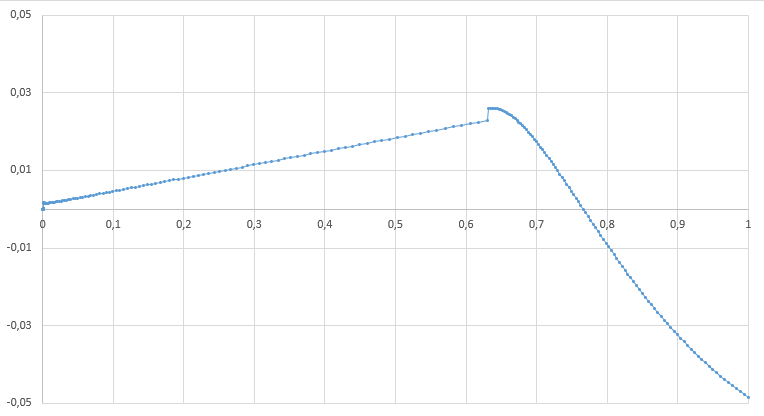}
		\caption{The evolution of the metric perturbation $2\langle\Psi\rangle(a)$, for the case $\bar{\Omega}_{M0}=0.2$.}
	\end{minipage}
\end{figure}
\clearpage

For $\bar{\Omega}_{M0}>0.2$ we start soon to have $\Omega_{FM0}<0$, so that the solutions are no more
good. For $\bar{\Omega}_{M0}<0.2$ it starts \emph{vice versa} to be $\Omega_{F\Lambda0}<0$, and the solutions are no more
good as well.

\subsection{Searching for solutions without dark energy or dark matter}

We can seek if there is a selfconsistent and acceptable solution which fully explains the
dark energy as fictitious. From the last paragraph, we know it would require an high $\bar{\Omega}_{M0}$,
for which $\Omega_{FM0}<0$ and the dark matter is more than in the CSM. \\
The condition of nonexistence of dark energy is $\bar{\Omega}_{\Lambda0}:=0$, to that it is automatically fixed
$\bar{\Omega}_{R0}=1-\bar{\Omega}_{M0}$. The (\ref{restr}) are solved by%
\begin{equation}
	\bar{\Omega}_{R0}\cong26.31\cdot10^{-5}, \quad \bar{\Omega}_{M0}\cong0.9997, \quad \bar{\Omega}_{\Lambda0}=0.
\end{equation}%
In such a case we find%
\begin{align}
	&\Omega_{IM0}\cong0.2516, \quad \Omega_{TM0}\cong0.5647 \Rightarrow |\Omega_{IM0}|\cong44.56\%\cdot\Omega_{TM0}\ll\Omega_{TM0}, \cr
	&2\max\langle\Psi\rangle\cong0.2792\ll1, \cr
	&\Omega_{FM}\cong-0.2507 \Rightarrow \Omega_{TDM0}\cong0.5161\cong91.39\%\cdot\Omega_{TM0}, \cr
	&\Omega_{F\Lambda}\cong0.685 \Rightarrow \Omega_{T\Lambda0}\cong0.
\end{align}

\begin{figure}[ht]
	\begin{minipage}[b]{1.0\linewidth}
		\centering
		\includegraphics[width=\textwidth]{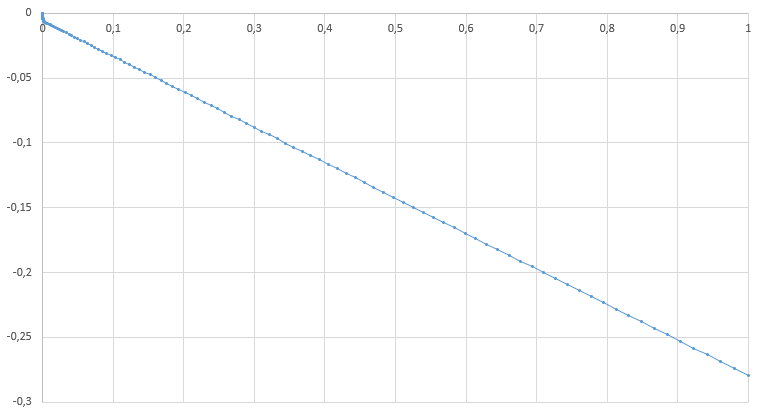}
		\caption{The evolution of the metric perturbation $2\langle\Psi\rangle(a)$, for the case fully explaining
the dark energy.}
	\end{minipage}
\end{figure}
\clearpage

On the opposite, we can seek if there is a selfconsistent and acceptable solution that fully explains
the dark matter as fictitious. From the last paragraph, we know it would require a small
$\bar{\Omega}_{M0}$, for which $\Omega_{F\Lambda0}<0$ and the dark energy is more than in the CSM. \\
The condition of nonexistence of dark matter is $\bar{\Omega}_{TM0}:=\Omega_{MB0}$. The corresponding value
of $\bar{\Omega}_{R0}$ is fixed by (\ref{restr}), which we solve numerically%
\begin{equation}
	\bar{\Omega}_{R0}\cong9.59\cdot10^{-5}, \quad \bar{\Omega}_{M0}\cong0.0819, \quad \bar{\Omega}_{\Lambda0}\cong0.9170.
\end{equation}%
In such a case we find%
\begin{align}
	&\Omega_{IM0}\cong-0.0218, \quad \Omega_{TM0}=\Omega_{MB0}\cong0.0486 \Rightarrow |\Omega_{IM0}|\cong44.84\%\cdot\Omega_{TM0}\ll\Omega_{TM0}, \cr
	&2\max\langle\Psi\rangle\cong0.1608\ll1, \cr
	&\Omega_{FM}=\Omega_{DM0}\cong0.266 \Rightarrow \Omega_{TDM0}=0, \cr
	&\Omega_{F\Lambda}\cong-0.1039 \Rightarrow \Omega_{T\Lambda0}\cong0.7888\cong115.17\%\Omega_{\Lambda0}.
\end{align}

\begin{figure}[ht]
	\begin{minipage}[b]{1.0\linewidth}
		\centering
		\includegraphics[width=\textwidth]{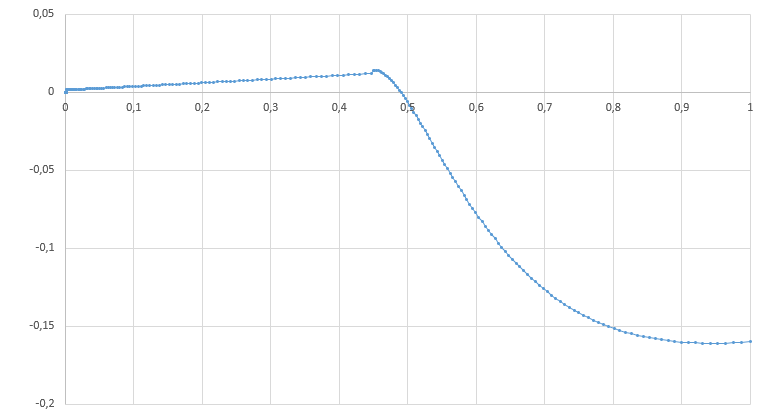}
		\caption{The evolution of the metric perturbation $2\langle\Psi\rangle(a)$, for the case fully explaining
the dark matter.}
	\end{minipage}
\end{figure}
\clearpage

\subsection{The principal solutions}

For values $\bar{\Omega}_{M0}>0.9997$ we would find $\Omega_{T\Lambda0}<0$, which is not acceptable. For values
$\bar{\Omega}_{M0}<0.0819$ we would find $\Omega_{TM0}<\Omega_{BM0}$, which is not acceptable. This mean that the
acceptable principal solutions range in the interval $\bar{\Omega}_{M0}\in[0.0819; 0.9997]$.

We summarize with the following graphics the numerical results about the principal
solutions.

\begin{figure}[ht]
	\begin{minipage}[b]{1.0\linewidth}
		\centering
		\includegraphics[width=\textwidth]{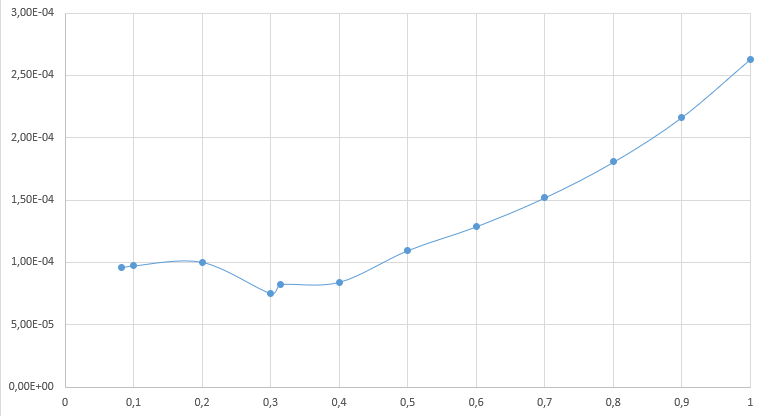}
		\caption{Background radiation density $\bar{\Omega}_{R0}$, depending on the parameter $\bar{\Omega}_{M0}$.}
	\end{minipage}
\end{figure}

\begin{figure}[ht]
	\begin{minipage}[b]{1.0\linewidth}
		\centering
		\includegraphics[width=\textwidth]{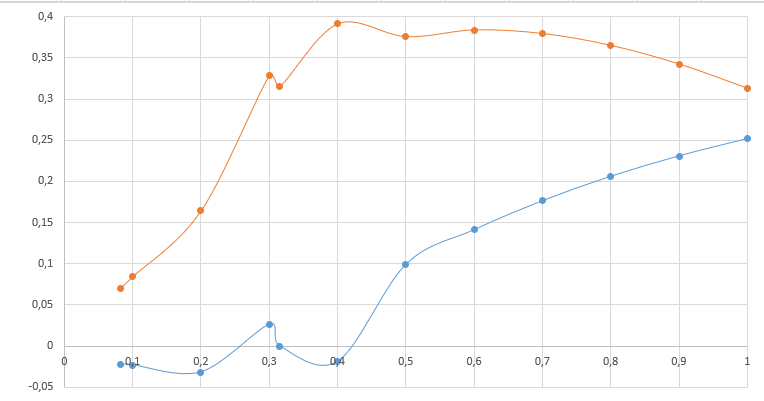}
		\caption{Inhomogeneous matter $\Omega_{IM0}$ (blue line) and homogeneous matter $\Omega_{HM0}$ (red
line), depending on the parameter $\bar{\Omega}_{M0}$.}
	\end{minipage}
\end{figure}

\begin{figure}[ht]
	\begin{minipage}[b]{1.0\linewidth}
		\centering
		\includegraphics[width=\textwidth]{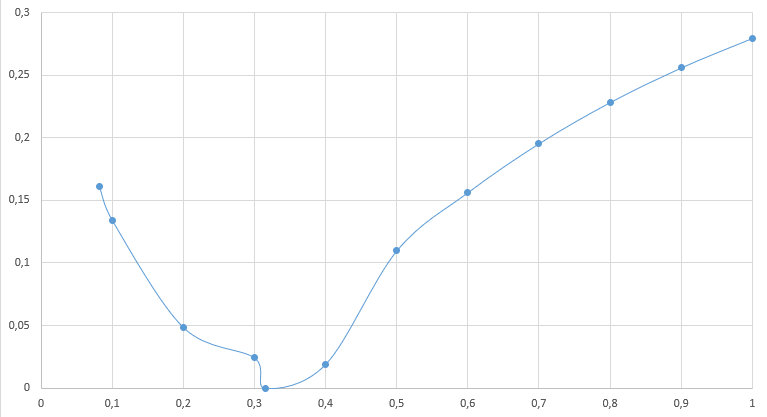}
		\caption{Maximum of the metric perturbation $2\max\langle\Psi\rangle$, depending on the parameter
$\bar{\Omega}_{M0}$.}
	\end{minipage}
\end{figure}

\begin{figure}[ht]
	\begin{minipage}[b]{1.0\linewidth}
		\centering
		\includegraphics[width=\textwidth]{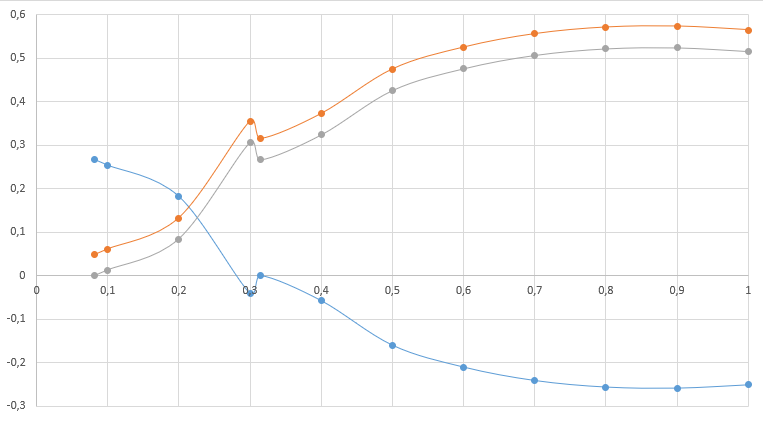}
		\caption{Fictitious matter $\Omega_{FM0}$ (blue line), true matter $\Omega_{TM0}$ (red line) and true dark matter
$\Omega_{TDM0}$ (grey line), depending on the parameter $\bar{\Omega}_{M0}$.}
	\end{minipage}
\end{figure}

\begin{figure}[ht]
	\begin{minipage}[b]{1.0\linewidth}
		\centering
		\includegraphics[width=\textwidth]{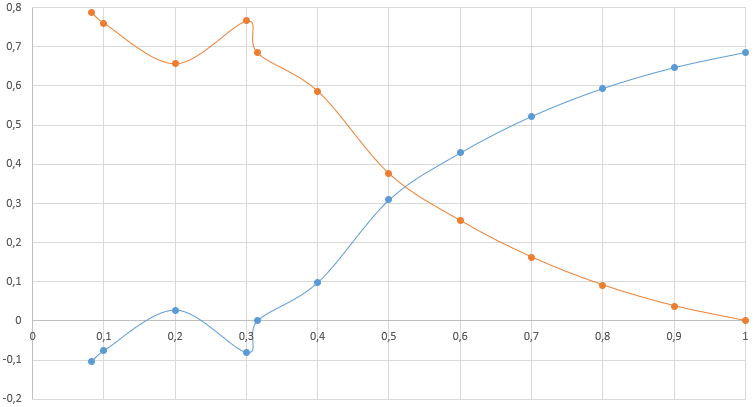}
		\caption{Fictitious dark energy $\Omega_{F\Lambda0}$ (blue line) and true dark energy $\Omega_{T\Lambda0}$ (red line),
depending on the parameter $\bar{\Omega}_{M0}$.}
	\end{minipage}
\end{figure}

\clearpage
\section{Conclusions and future perspective}

In this paper we developed the framework of \cite{re}, managing to apply it to a model of
our universe, complete with all the components. The large number of variables leaves a
free parameter, depending on which we found a one-dimensional set of possible solutions.
Within this interval, more dark matter is explained less as less dark energy is, and \emph{vice versa}.
At an end of the range, dark matter is fully explained as a relativistic effect, but the
same effect caused an underestimation of dark energy in the Cosmological Standard
Model. At the other end the numbers are analogous, with dark matter and dark energy
exchanged. For a particular value, both dark matter and dark energy found a partial
explanation. An additional measure would determine which is the right solution, but
anyway a correction of the parameters of the CSM is required.

Better measures of the density parameters will improve our estimations of $\Omega_{FM0}$ and
$\Omega_{F\Lambda0}$, but they can't fix the right parameter $\bar{\Omega}_{M0}$. The difference between $\bar{\Omega}_{R0}$ and the
measured $\Omega_{R0}$, e.g., is not matter of measure precision, but of the factor $\left(\frac{H_0}{\bold{H}_0}\tilde{a}_0\right)^2$, which
concerns the background universe and is not measurable. Rather, a measure of the actual
gravitational force $\vec{\nabla}\Psi(\underline{x};t_0)$ could put the restraint we need. Another possible measure
could be the estimation of the matter inhomogeneity at large scale $\Omega_{IM0}$, i.e. the deviation
from the exact Cosmological Principle.

We approximated our calculations in many points. To overcome them would be an
improvement of the framework. Solving numerically the evolution $a(\tau)$ it would not be
necessary any sticking, which presumably would return more regular graphics than in \S8.3;
but recall that this would require the form of $T(\tau)$ for the multi-component case.
Even if we found $2\max\langle\Psi\rangle$ always far smaller than $1$, it could not be considered fully negligible,
so that an higher order calculation could provide some relevant corrections. Moreover, we
assumed a spatially flat background metric and an irrotational matter, which is not
the most general framework.

In the present article we considered the global dark matter effects only. Our cosmological
model requires also the calculation of the local effects, to be empirically verified.
This needs to overcome the averaging of $g_{\mu\nu}$, and the distribution of fictitious dark matter
would depend on the spatial distribution of inhomogeneities $\tilde{\rho}(\underline{x};t_0)$. A study of such
distribution could start from the fractal properties of the matter structures at large scales
\cite{fract}, \cite{fract2}. The fluctuations of the resultant potential $\Psi(\underline{x};t_0)$ should be compared to the
dark matter halo of the galaxies.

The study of local dark matter effects would provide corrections to the standard newtonian
approximations for the dynamic of galaxies and clusters. For such calculations,
we cannot assume an irrotational matter as we did here. The rotation of galaxies could
provide a rotational term for the non diagonal components of the metric $\vec{B}$, which contributes
to fictitious gravitational effects \cite{Balasin:2006cg}, \cite{Crosta:2018var}. Finally, the total amount of the local
fictitious effects could be compared to the global fictitious effect $\Omega_{FM0}$ we found here, and
the equivalence between them could be the additional restraint we need to fix uniquely
the parameter $\bar{\Omega}_{M0}$.

\medskip
\textbf{Acknowledgments}. We thank Oliver Piattella, Mariateresa Crosta, Marco Giammaria,
Alessio Marrani and Alessio Notari for useful discussions.

\clearpage

\addcontentsline{toc}{chapter}{Bibliografia}

\end{document}